\g@addto@macro{\maketitle}{\@thanks}
\theoremstyle{plain}
\newtheorem{thm}{Theorem}[section]
\newtheorem{cor}[thm]{Corollary}
\newtheorem{lem}[thm]{Lemma}
\newtheorem{Def}[thm]{Definition}
\begin{document}

\title{Near-Optimum Online Ad Allocation for Targeted Advertising}
\date{\vspace{-7ex}}
\author[1]{Joseph (Seffi) Naor\thanks{Supported in part by United States-Israel BSF Grant No. 2010-246 and ISF Grant No. 954/11.}}
\author[2]{David Wajc\thanks{Part of this research conducted while the author was working for Yahoo! Labs, Israel.}}

\affil[1]{CS Dept., Technion \\
	\url{naor@cs.technion.ac.il}}

\affil[2]{CS Dept., Carnegie Mellon University \\
	\url{dwajc@cs.cmu.edu}}

\maketitle

\begin{abstract}
Motivated by Internet targeted advertising, we address several ad allocation problems. Prior work has established these problems admit no randomized online algorithm better than $(1-\frac{1}{e})$-competitive (\citet{karp1990optimal,mehta2007adwords}), yet simple heuristics have been observed to perform much better in practice. We explain this phenomenon by studying a generalization of the bounded-degree inputs considered by \citet{buchbinder2007online}, graphs which we call $(k,d)-bounded$. In such graphs the maximal degree on the online side is at most $d$ and the minimal degree on the offline side is at least $k$. We prove that for such graphs, these problems' natural greedy algorithms attain competitive ratio $1-\frac{d-1}{k+d-1}$, tending to \emph{one} as $d/k$ tends to zero. We prove this bound is tight for these algorithms.

Next, we develop deterministic primal-dual algorithms for the above problems achieving competitive ratio $1-(1-\frac{1}{d})^k>1-\frac{1}{e^{k/d}}$, or \emph{exponentially} better loss as a function of $k/d$, and strictly better than $1-\frac{1}{e}$ whenever $k\geq d$. We complement our lower bounds with matching upper bounds for the vertex-weighted problem. Finally, we use our deterministic algorithms to prove by dual-fitting that simple randomized algorithms achieve the same bounds in expectation. Our algorithms and analysis differ from previous ad allocation algorithms, which largely scale bids based on the spent fraction of their bidder's budget, whereas we scale bids according to the number of times the bidder could have spent as much as her current bid. Our algorithms differ from previous online primal-dual algorithms, as they do not maintain dual feasibility, but only primal-to-dual ratio, and only attain dual feasibility upon termination. We believe our techniques could find applications to other well-behaved online packing problems.
\end{abstract}

\maketitle

\vspace{1mm}
\noindent
{\bf Categories and Subject Descriptors:} F.2.2 Nonnumerical Algorithms and Problems.

\vspace{1mm}
\noindent
{\bf Keywords:} Online Matching, Online Ad Allocation, Targeted Advertising, Sponsored search.

\section{Introduction}

Internet advertising is ubiquitous.~Forecast to surpass the 50 billion dollar/year mark in 2015 in the United States alone, 
it has become, to a large extent, the driving economic force behind much of the content of the world wide web.~How is this advertising space sold and bought?~Most ads fall either under sponsored search or targeted advertising, both of which are sold in what constitute instances of the online ad allocation problem.

In online ad allocation, we are faced with the following problem:~advertisers announce to an advertising platform (e.g.~Yahoo, Google, Microsoft)~what their advertising budgets are, and their bids for an ad to be displayed to every kind of user.~The user ``type'' is determined, for example, by search terms searched, in the case of sponsored search, or user-demographics, in the case of targeted advertising.~When a user visits a web-page with an ad slot managed by the ad platform, the latter needs to decide immediately and irrevocably which (if any) of the advertisers' ads to display to the user.~The advertising platform's goal is to maximize its revenues, despite uncertainty concerning future page-views.~This problem can be formulated as a generalization of online bipartite matching, with advertisers as the offline vertices and ad slots as online vertices.\footnote{In this paper, without loss of generality, we assume advertisers only pay for impressions, and not e.g.~clicks.}
See Section \ref{sec:def} for a formal definition of this and other problems we consider.

The theoretical interest in online allocations can be traced back to 1990, when \citet{karp1990optimal} considered the fundamental problem of bipartite maximum matching in an online setting. In their seminal paper, Karp et al.~proved that randomized online algorithms cannot in general achieve competitive ratio above $1-\frac{1}{e}\approx 0.632$, and presented the \textsc{ranking} algorithm, which matches this upper bound and is thus optimal.\footnote{The original proof of \textsc{ranking}'s competitive ratio was found to contain a mistake nearly twenty years later by Krohn and
Varadarajan, but the algorithm's performance has since been re-proven by \citet{birnbaum2008line}, \citet{goel2008online}, and recently by \citet{devanur2013randomized}.}

The online maximum matching problem was generalized, first by \citet{kalyanasundaram2000optimal}, and later by \citet{aggarwal2011online}, who presented algorithms achieving optimal $1-\frac{1}{e}$ competitive ratio for the $b$-matching and vertex-weighted matching problems, respectively.~The AdWords problem, first proposed by \citet{mehta2007adwords}, is the more general ad allocation problem, but subject to the realistic \emph{small bid assumption}, i.e. assuming every advertiser $i$ has budget $B_i$ much larger than its bids $b_{ij}$. (This assumption is necessary to achieve non-trivial results. See \ref{lem:1-R}).~For this problem too the natural greedy algorithm has competitive ratio $\frac{1}{2}$. Mehta et al.~gave an algorithm for this problem with competitive ratio $1-\frac{1}{e}$. \citet{buchbinder2007online} achieved the same results using an online primal-dual approach.
See \citet{mehta2012online} for an in-depth survey of prior art and techniques used to tackle these problems.

We will address the problems discussed above, but first, we start with motivation.

\subsection{Motivation}\label{subsec:motivation}
As is to be expected of a problem for which a loss of $1/e\approx 36.7\%$ can translate itself to billions of dollars in potential revenue lost yearly, researchers have studied weaker models than the adversarial model for the ad allocation problem, in the hope that these may permit better guarantees.~(See \ref{subsec:related}.) In this paper we revisit the stronger adversarial model, for graphs with structural characteristics met by many ad allocation instances arising from targeted advertising. Specifically, we assume advertisers are interested in a large number of ad slots (at least $k$), and that every ad slot is of interest to a relatively small number of advertisers (at most $d$). As with the small bid assumption $B_i\gg b_{ij}$ for the AdWords problem, assumption of the above structure is not only useful in order to obtain better bounds (as we will show), but also constitutes a reasonable assumption for targeted advertising, for the following twin reasons:
\\
\\
\noindent\textbf{online side:} advertisers typically target their advertising campaigns at \emph{specific} segments of the population (e.g. young Californians who ski often); while these segments may be large in absolute terms, they are mostly small in relative terms (e.g., less than four percent of Californians ski often).~Consequently, users tend to belong to relatively few segments. Coupled with the fact that the number of active campaigns at any given time is limited, this implies a restricted pool of ads that might be displayed to any particular user, justifying the small degree assumption for ad slots.
\\
\\
\noindent\textbf{offline side:} advertisers typically target \emph{large} segments of the population (as in the example above), while not allocating a budget high enough to display ads to all users in a segment. 
Coupled with the fact that every page-view of a particular targeted user corresponds to a vertex in the graph, this implies the high degree assumption on the offline side, and more generally for the ad allocation problem, the assumption that $\sum_{i,j}b_{ij}\geq k\cdot B_i$ for some large $k$.

We call the graphs displaying these characteristics \emph{$(k,d)$-bounded graphs}.

\begin{Def}[$(k,d)$-bounded graphs]
We say a bipartite graph $G=(L,R,E)$ is \emph{$(k,d)$-bounded} if every left vertex $i\in L$ has degree $d(i)\geq k$ and every right vertex $j\in R$ has degree $d(j) \leq d$. For ad allocations, we replace $d(i)\geq k$ with the property $\sum_{j} b_{ij} \geq k\cdot B_i$.
\end{Def}

We concern ourselves with such graphs with $k$ large and $d$ small. For brevity's sake, as all graphs in this paper will be bipartite, we refrain from stating the fact explicitly, and refer to $(k,d)$-bounded graphs as $(k,d)$-graphs henceforth.~As the problems studied in this paper are all maximization problems, we adopt the convention that a lower bound indicates a positive result and an upper bound indicates a negative result.

\subsection{Our Results}\label{results}
By focusing on $(k,d)$-graphs, we justify the observed success of greedy algorithms ``in the wild'', and propose algorithms that are exponentially better, and provably optimal under these structural assumptions. Finally, we leverage our deterministic algorithms to prove simple randomized algorithms achieve the same bounds in expectation. Our results hold for the maximum matching, vertex-weighted matching and AdWords problems (with the exception of the matching upper bound for the latter).~Table \ref{table:Results} delineates our results for these problems on $(k,d)-$graphs. We obtain similar results for the general ad allocation problem, even with large-ish bids (see \ref{thm:R_adwords}.)

\begin{center}
\begin{threeparttable}[ht]
\caption{Best results for general and $(k,d)$-graphs}{
	\label{table:Results}
\centering
    \begin{tabular}{ | c || c | c |}
     \hline
Algorithms & General Graphs & $(k,d)-$Graphs \\  \hline \hline
\multirow{2}{*}{Greedy}\bigstrut & $\frac{1}{2}$ (Tight) & $1-\frac{d-1}{k+d-1}$ (Tight) \\	
& Folklore & This work \\ \hline

\multirow{2}{*}{Deterministic}\bigstrut & $\frac{1}{2}$ (Tight)& $1-(1-\frac{1}{d})^k$ (Tight) \\	
& Folklore & This work \\ \hline

\multirow{2}{*}{Randomized}\bigstrut & $1-\frac{1}{e}$ (Tight)\tnote{$\star$} & $1-(1-\frac{1}{d})^k$ \\	
& \cite{karp1990optimal,birnbaum2008line,goel2008online,aggarwal2011online,mehta2007adwords,buchbinder2007online,devanur2013randomized}
& This work \\ \hline
   \end{tabular}}
   \begin{center}
   \begin{tablenotes}[]
       \centering\item[$\star$]{\footnotesize can be achieved \emph{deterministically} for AdWords.}
   \end{tablenotes}
   \end{center}
\end{threeparttable}
\end{center}

We begin by explaining the empirical success of greedy algorithms for the above problems, proving their loss is proportional to the ratio of the maximal degree in the online side to the minimal degree in the offline side; i.e., their competitive ratio tends to \emph{one} as this ratio tends to zero.~We complement this lower bound with a family of examples for which these algorithms do no better.

\begin{thm}\label{thm:greedy_good}
Greedy algorithms achieve a competitive ratio of $\frac{k}{k+d-1}$ on $(k,d)$-bounded graphs. This analysis is tight for all $k\geq d-1$.
\end{thm}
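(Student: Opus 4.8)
The plan has two independent parts: a short counting argument for the competitive ratio, and an explicit family of $(k,d)$-graphs witnessing tightness.

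For the ratio, I would fix an instance, let $M$ be the greedy matching with $m=|M|$, let $T$ be the set of online vertices matched by greedy (so $|T|=m$), and let $U$ be the set of offline vertices left unmatched by greedy. The first observation is that every $i\in U$ has \emph{all} of its $\geq k$ neighbors in $T$: if an online neighbor $j$ of $i$ were itself unmatched, then when $j$ arrived $i$ was already unmatched (it stays unmatched until the end), so greedy would have matched $j$ — a contradiction. The second observation uses the online degree bound: each $j\in T$ has $d(j)\leq d$, one of whose edges goes to its greedy-partner, a matched vertex not in $U$; hence $j$ sends at most $d-1$ edges into $U$. Double-counting the edges between $U$ and $T$ gives $k\,|U|\leq (d-1)\,m$, i.e. $|U|\leq \tfrac{d-1}{k}\,m$. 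Finally, every offline vertex matched by an optimal solution is either matched by greedy (at most $m$ of those) or lies in $U$, so the optimum has value at most $m+|U|\leq m\bigl(1+\tfrac{d-1}{k}\bigr)=m\cdot\tfrac{k+d-1}{k}$, which rearranges to the claimed $\tfrac{k}{k+d-1}$. The same scheme works verbatim for vertex-weighted matching: writing $\rho_j$ for the weight of $j$'s greedy-partner, greedy's preference rule gives $k\,w_i\leq\sum_{j\in N(i)}\rho_j$ for each $i\in U$, and summing over $U$ against the degree bound on the $T$ side yields $\sum_{i\in U}w_i\leq\tfrac{d-1}{k}\cdot\mathrm{ALG}$; for AdWords the hypothesis $\sum_j b_{ij}\geq kB_i$ plays the role of the degree bound, with "unmatched'' replaced by "budget nearly exhausted'' (invoking the small-bid assumption, cf. \ref{lem:1-R}).

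For tightness (using $k\geq d-1$), I would build the following instance. Take $k$ offline vertices $a_1,\dots,a_k$ and $d-1$ extra offline "decoys'' $w_1,\dots,w_{d-1}$. Online vertices arrive in two waves: first $t_1,\dots,t_k$, where $t_j$ is adjacent to $a_j$ and to all of $w_1,\dots,w_{d-1}$ (so $d(t_j)=d$ and each $d(w_\ell)=k$); then, for each $j$, $k-1$ private degree-one online vertices adjacent only to $a_j$ (bringing $d(a_j)$ up to $k$). The graph is $(k,d)$-bounded. With adversarial tie-breaking greedy matches each $t_j$ to $a_j$ (all its neighbors are still free on arrival) and then skips every private vertex, so greedy matches exactly $m=k$ vertices; the optimum instead matches $w_\ell$ to $t_\ell$ for $\ell\leq d-1$, matches $t_j$ to $a_j$ for $j\geq d$, and matches each of $a_1,\dots,a_{d-1}$ to a private neighbor, covering all $k+d-1$ offline vertices. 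Hence the ratio on this instance is exactly $\tfrac{k}{k+d-1}$. (The relevant bipartite piece between $\{w_\ell\}$ and $\{t_\ell\}$ is complete and $d-1\leq k$, so no Hall-type argument is needed; the construction carries over to vertex-weighted matching with unit weights and to AdWords with unit budgets and bids.)

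The main obstacle I anticipate is the AdWords adaptation of the ratio argument: there greedy allocates each impression to the highest remaining bidder, so an advertiser can fail to receive an impression it bids on either because a strictly higher bid exists (harmless — revenue is still collected) or because its remaining budget is below its bid, and the clean dichotomy "all neighbors of an unmatched vertex lie in $T$'' must be replaced by a budget-accounting version carrying an $o(1)$- (more precisely bid-to-budget-ratio-) sized slack. Everything else — the double counting, the degree verifications, and the optimality check in the tight instance — is routine.
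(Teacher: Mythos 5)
Your ratio argument for the unweighted and vertex-weighted cases is correct, but it takes a different (and more elementary) route than the paper: you double-count the edges between the unmatched offline set $U$ and the matched online set $T$ directly, using maximality ($N(i)\subseteq T$ for $i\in U$), the online degree bound ($\leq d-1$ edges from each $j\in T$ into $U$), and for weights the greedy preference $\rho_j\geq w_i$, to get $\sum_{i\in U}w_i\leq\frac{d-1}{k}\cdot\mathrm{ALG}$ and hence $\mathrm{OPT}\leq\frac{k+d-1}{k}\cdot\mathrm{ALG}$. The paper instead proves Theorems \ref{thm:match-greedy} and \ref{thm:add-allocation-greed} by dual fitting (Algorithm \ref{alg:adwords-greedy}): each losing feasible neighbor $i'$ of an arriving $j$ has $z_{i'}$ raised by $b_{i'j}/(k\cdot B_i)$, which is exactly the LP-dual shadow of your counting; the dual route buys a uniform treatment of equal-bids and general-bids ad allocation (claims (c) and (c')), while your charging argument buys a shorter, self-contained proof for the matching problems. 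Your tight example is essentially the paper's own construction (the $K_{k,d-1}$ block between the decoys and the first $k$ online vertices, followed by degree-one fill slots), so that half matches the paper.

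Two caveats. First, the paper's $\frac{k}{k+d-1}$ bound (Theorem \ref{thm:match-greedy}) also covers the \emph{equal-bids} ad allocation problem with no $(1-R_{\max})$ loss, and your sketch for the budgeted setting (``budget nearly exhausted'' plus a bid-to-budget slack) would only reproduce the degraded bound $\frac{(1-R_{\max})k}{k+(d-1)(1-R_{\max})}$ of Theorem \ref{thm:add-allocation-greed}, not the clean constant for equal bids; that case needs its own accounting (in the paper it is claim (c): an advertiser that exhausts its budget already has $z_i=1$), so as written this part of the statement is not proven. Second, a minor edge case in your tight instance: for $k=1$ (allowed, since $k\geq d-1$ permits $k=1,d=2$) you attach $k-1=0$ private slots to each $a_j$, so $a_1$ cannot be rematched and the optimum is not $k+d-1$; the paper avoids this by always adding enough later degree-one slots to let \emph{all} first-phase advertisers be rematched. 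This is trivially repaired by giving each $a_j$ at least $\max\{1,k-1\}$ private neighbors (extra offline degree is harmless), but as stated your construction does not witness tightness for all $k\geq d-1$.
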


We improve on the above, designing deterministic algorithms with \emph{exponentially} smaller loss.~We prove this is optimal for deterministic algorithms.

\begin{thm}\label{thm:det_better}
There exist deterministic online algorithms for the unweighted and vertex-weighted matching problems with competitive ratio $1-(1-\frac{1}{d})^k>1-\left(\frac{1}{e}\right)^{k/d}$ on $(k,d)$-bounded graphs. Moreover, these algorithms gain at least a $1-(1-\frac{1}{d})^k$ fraction of the total sum of weights. This is optimal whenever $k\geq d$.
\end{thm}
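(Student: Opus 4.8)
The plan is to get the positive part from one simple deterministic \emph{fractional water-filling} rule, which also yields the stronger ``fraction of the total weight'' statement essentially for free, and to get the lower bound (for $k\ge d$) from a recursive adversarial construction. For the algorithm, maintain for each left vertex $i$ its current load $y_i:=\sum_{j'}x_{ij'}$, initialised to $0$; when a right vertex $j$ arrives, set for \emph{every} neighbour $i\in N(j)$ the value $x_{ij}:=\tfrac1d(1-y_i)$ using the current $y_i$ (so $j$ is generally left partly unmatched). Right-feasibility is immediate since $d(j)\le d$ gives $\sum_{i\in N(j)}x_{ij}\le\tfrac1d\cdot d\cdot 1=1$, and the update obeys the identity $1-y_i^{\mathrm{new}}=(1-\tfrac1d)(1-y_i^{\mathrm{old}})$, so each arrival of a neighbour of $i$ shrinks the ``slack'' $1-y_i$ by the factor $1-\tfrac1d$ (and $y_i$ stays in $[0,1]$); equivalently $i$'s contribution to a slot is scaled down geometrically in the number of times $i$ has already been approached, matching the intuition of the abstract. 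Since $d(i)\ge k$, once all of $i$'s neighbours have arrived its slack is at most $(1-\tfrac1d)^k$, i.e.\ $y_i\ge 1-(1-\tfrac1d)^k=:c$. Hence $\mathrm{ALG}=\sum_{(i,j)\in E}w_i x_{ij}=\sum_{i\in L}w_i y_i\ge c\sum_{i\in L}w_i\ge c\cdot\mathrm{OPT}$, as every feasible solution uses each $i$ at most once; this gives the competitive ratio and the total-weight bound at once, and $c>1-(1/e)^{k/d}$ follows from $(1-\tfrac1d)^k=\big((1-\tfrac1d)^d\big)^{k/d}<e^{-k/d}$. One recasts this as a primal-dual argument of the promised flavour by defining, \emph{only upon termination}, $\alpha_i:=w_i y_i/c$ and $\beta_j:=0$: then $y_i\ge c$ makes $\alpha_i\ge w_i$, so $\alpha_i+\beta_j\ge w_i$ on every edge, the dual is feasible, and weak duality gives $\mathrm{OPT}\le\sum_i\alpha_i=\mathrm{ALG}/c$. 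For the unweighted problem the same rule with $w_i\equiv 1$ outputs a fractional matching of value $\ge c\,|L|$, and an integral online variant with the same guarantee is obtained by a balance-type rule (matching $j$ to an unmatched neighbour approached fewest times so far) analysed against the same dual.

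For optimality, we must show that $c=1-(1-\tfrac1d)^k$ cannot be beaten by any deterministic (even fractional) algorithm $\mathcal A$ when $k\ge d$, i.e.\ exhibit for each such $\mathcal A$ a $(k,d)$-graph and an arrival order on which $\mathcal A$ gains at most a $c$-fraction of $\mathrm{OPT}$. The construction is recursive and adaptively revealed, generalising the classical $1-\tfrac1e$ hardness instance of \citet{karp1990optimal}: it proceeds in $k$ ``generations'', in each of which a fresh online vertex is made adjacent to $d$ currently-active advertisers; $\mathcal A$ must commit fractionally to them, and the adversary then retires one of these advertisers, arranging that the $(1-\tfrac1d)$-fraction of still-unallocated capacity is exactly the part routed optimally, while continuing to pressure the remaining $d-1$ advertisers in later generations. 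Unwinding the recursion yields $\mathcal A\le(1-(1-\tfrac1d)^k)\,\mathrm{OPT}$.

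The one genuinely delicate point — and the source of the hypothesis $k\ge d$ — is that the bare recursive gadget is \emph{not} $(k,d)$-bounded: advertisers introduced in late generations have offline degree well below $k$. Repairing this requires padding the instance with extra online vertices that lift every offline degree to at least $k$ without violating the right-degree bound $d$ and, crucially, without handing $\mathcal A$ extra value or extra adaptive freedom that would let it escape the $c$-bound; verifying that this padding interacts harmlessly with $\mathcal A$'s online choices is the crux of the lower bound. Everything on the algorithmic side — feasibility, the slack recursion, and weak duality — is routine by comparison.
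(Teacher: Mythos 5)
Your positive part has a real gap: the theorem is about the (integral) online matching problem, and the fractional water-filling rule, whose analysis is indeed correct and clean, only solves the LP relaxation -- online you cannot round it afterwards. The integral rule you then invoke, ``match $j$ to an unmatched neighbour approached \emph{fewest} times so far,'' is the opposite of what works and is in fact false. The paper's algorithm \textsc{high-degree} matches an unmatched neighbour of \emph{highest} current degree; in its potential/dual analysis (potential $\sum_{i\text{ unmatched}}\big(\tfrac{d}{d-1}\big)^{d(i)}$, equivalently duals $z_i=C\big(\big(\tfrac{d}{d-1}\big)^{d(i)}-1\big)$ with $C=1/\big(\big(\tfrac{d}{d-1}\big)^k-1\big)$, see Section~\ref{sec:potential}), it is essential that the matched neighbour carries the \emph{largest} potential among $j$'s unmatched neighbours, so that it pays for the increase caused by the other $d-1$ of them. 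Your lowest-degree rule violates the claimed bound: take $d=2$, $k=3$, advertisers $b_1,b_2,a_1,a_2$, arrivals with neighbour sets $\{b_1,b_2\}$, $\{b_2,a_1\}$, $\{b_2,a_2\}$, then padding slots $\{a_1,a_2\}$, $\{a_1,b_1\}$, $\{a_2,b_1\}$. The rule matches $b_1$ (tie), then $a_1$ and $a_2$ (each of degree $0$ versus $b_2$'s higher degree), leaving $b_2$ unmatched forever; the final graph is $(3,2)$-bounded, $L$ can be perfectly matched, so the rule attains $3/4<7/8=1-(1-\tfrac12)^3$, and it also gains less than a $7/8$ fraction of $|L|$. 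Finally, ``analysed against the same dual'' cannot rescue it: your dual $\alpha_i=w_iy_i/c$ is built from the \emph{fractional} loads $y_i$ and upper-bounds OPT, but has no relation to the integral algorithm's gain, so no primal--dual ratio is established for the integral variant.

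The lower bound is also incomplete, and you say so yourself: making the recursive gadget $(k,d)$-bounded and verifying that OPT saturates $L$ is exactly the part you leave open, and it is where your sketch diverges from what actually works. In the paper's construction (Theorem~\ref{thm:matching_ub}) no advertisers are introduced late and nothing is ``retired'': all $d^{k+1}$ advertisers are present from the start; in each of $k$ phases every currently-unmatched advertiser receives exactly one new edge and every arriving slot has $d$ unmatched neighbours, so exactly a $\tfrac1d$-fraction of the unmatched advertisers get matched per phase no matter what the algorithm does (unmatched slots are counted as matched, which only helps it). Hence unmatched advertisers end with degree exactly $k$, and only already-matched advertisers are padded afterwards by degree-$d$ slots, which trivially cannot affect the outcome -- so the ``delicate interaction'' you worry about never arises. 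That OPT matches all of $L$ is then Hall's theorem via Lemma~\ref{lem:saturate} (every $(k,d)$-graph with $k\ge d$ has a matching saturating $L$), which your sketch never supplies. Also note the theorem only concerns deterministic integral algorithms; your stronger claim ``even fractional'' does not follow from this construction, since the exact $\tfrac1d$-per-phase accounting uses integrality of the algorithm's decisions.
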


\begin{cor}\label{cor:struct}(Structural Corollary)
For every bipartite graph $G$ with the minimal degree of its left side at least $\ln c$ times larger than the maximal degree of its right side, $G$ has a matching with at least a $(1-\frac{1}{c})$-fraction of $G$'s left side matched.
\end{cor}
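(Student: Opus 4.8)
The plan is to read this off Theorem~\ref{thm:det_better} directly, exploiting the fact that an online algorithm's output is, in particular, a feasible matching, so that a guarantee on the \emph{size} of that output immediately yields an unconditional existence statement. First I would fix notation: given a bipartite graph $G=(L,R,E)$, set $k:=\min_{i\in L}d(i)$ and $d:=\max_{j\in R}d(j)$. The hypothesis that the minimum left-degree is at least $\ln c$ times the maximum right-degree is precisely the statement that $G$ is $(k,d)$-bounded with $k/d\geq \ln c$. (If $L$ or $R$ is empty, or $c\leq 1$, the claim is vacuous, so I would assume $c>1$ and both sides nonempty, in which case $k\geq 1$ and every vertex of $L$ has an edge.)

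Next I would instantiate Theorem~\ref{thm:det_better} for the unweighted matching problem, viewing it as vertex-weighted with all left weights equal to one, so that the ``total sum of weights'' equals $|L|$. Running the deterministic online algorithm promised there on $G$ under an arbitrary arrival order of the right vertices produces a matching $M\subseteq E$ gaining at least a $1-(1-\frac1d)^k$ fraction of the total weight; that is, $|M|\geq \bigl(1-(1-\tfrac1d)^k\bigr)\,|L|$. Note this uses the ``moreover'' clause of the theorem (an absolute guarantee on the covered fraction of $L$), not merely the competitive-ratio statement, and it holds for all $k,d$ — only the optimality addendum of the theorem requires $k\geq d$, which we do not need here. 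Then I would chain the inequalities the theorem supplies with the degree hypothesis:
\[
1-\left(1-\tfrac1d\right)^{k}\;>\;1-\left(\tfrac1e\right)^{k/d}\;=\;1-e^{-k/d}\;\geq\;1-e^{-\ln c}\;=\;1-\tfrac1c,
\]
where the last inequality uses that $x\mapsto e^{-x}$ is decreasing and $k/d\geq \ln c$. Hence $M$ is a matching of $G$ saturating at least a $(1-\tfrac1c)$-fraction of $L$, which is the assertion.

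I do not expect a genuine obstacle once Theorem~\ref{thm:det_better} is available; the only point requiring a little care is the one already flagged above — that the structural conclusion needs the \emph{additive}/fractional guarantee ``gains at least a $1-(1-\frac1d)^k$ fraction of the total weight'', rather than a ratio against an unknown optimum, so that no assumption on the value of a maximum matching of $G$ enters. This is exactly why that clause is stated separately in Theorem~\ref{thm:det_better}, and it makes the corollary a purely combinatorial consequence of the algorithmic result.
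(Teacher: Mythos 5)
Your proposal is correct and follows the same route as the paper: the corollary is read off the absolute guarantee of Theorem~\ref{thm:det_better} (equivalently Theorems~\ref{thm:vw-high-deg}/\ref{thm:strong-high-deg}, which show \textsc{high-degree} matches at least a $1-(1-\frac{1}{d})^k$ fraction of $L$), combined with $(1-\frac{1}{d})^k < e^{-k/d} \leq \frac{1}{c}$ when $k/d \geq \ln c$. Your explicit remark that the additive fraction-of-$L$ guarantee (not the competitive ratio) is what makes the statement unconditional is exactly the point the paper relies on.
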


In stating our bounds for general ad allocation, we follow the notation of \citet{buchbinder2007online} and denote the maximum bid-to-budget ratio by $R_{\max}=\max_{(i,j)\in E}\big\{\frac{b_{ij}}{B_i}\big\}$.

\begin{thm}\label{thm:R_adwords}
There exists a deterministic algorithm which gains total revenue at least 
\[
\left(\sum_{i\in L} B_i\right) \cdot \left((1-R_{\max})\cdot\left(1-\left(1-\frac{1}{d}\right)^k\right)\right)
\]
for ad allocation on $(k,d)$-graphs with $k\geq d-1$, and is thus $\big((1-R_{\max})\cdot\big(1-\big(1-\frac{1}{d}\big)^k\big)\big)$-competitive. This is optimal -- no deterministic algorithm can do better for $k\geq d$.
\end{thm}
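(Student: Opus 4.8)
The plan is to prove the two halves of \ref{thm:R_adwords} separately: the lower bound on the algorithm's revenue by extending the online primal--dual / dual-fitting machinery behind \ref{thm:det_better} to budgeted bids, and the impossibility result by an adaptive-adversary argument that composes two hard instances so their losses multiply.

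\textbf{Positive direction.} I would run the algorithm of \ref{thm:det_better} with its bid-scaling generalized to the AdWords setting. Maintain for each advertiser $i$ the amount $s_i$ spent so far; when slot $j$ arrives, assign it to the interested advertiser with $s_i<B_i$ maximizing $b_{ij}$ scaled by a decreasing function $g$ of a level parameter $\ell_{ij}$ that records how many times $i$ has already had (or passed up) the opportunity to spend as much as its current bid $b_{ij}$ --- the ``number of times the bidder could have spent as much as her current bid'' of the abstract, which for the vertex-weighted matching special case reduces to the rule behind \ref{thm:det_better}. The accompanying dual solution for the AdWords LP assigns advertiser $i$ a value of the form $1-g(\ell_{ij})$ and each slot the residual needed to cover its constraint; one shows by induction over arrivals that the revenue collected so far is always at least $(1-R_{\max})\bigl(1-(1-\tfrac1d)^{k}\bigr)$ times (a running bound related to) the dual objective, and that --- exactly as emphasized for \ref{thm:det_better} --- the dual is \emph{not} kept feasible during the run but becomes feasible precisely at termination, at which point it upper-bounds $\mathrm{OPT}\le\sum_i B_i$. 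Taking $g(\ell)=(1-\tfrac1d)^{\ell}$ makes the telescoping sum over advertiser $i$'s levels collapse to $1-(1-\tfrac1d)^{k}$, where the hypothesis $\sum_j b_{ij}\ge kB_i$ (so $k\ge d-1$ suffices) is what forces each coordinate to climb far enough. The $(1-R_{\max})$ factor is exactly the loss from the level advancing only in steps worth up to $R_{\max}B_i$ of spending: the last, possibly budget-truncated chunk $\min(b_{ij},B_i-s_i)$ is not fully credited in the invariant, and for small bids it disappears, recovering the pure $1-(1-\tfrac1d)^{k}$ guarantee.

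\textbf{Upper bound.} I would compose two adversarial constructions. The outer one is the deterministic matching lower bound already furnished by \ref{thm:det_better}: a $(k,d)$-graph on which every deterministic algorithm is forced to leave a $(1-\tfrac1d)^{k}$ fraction of the left side (the advertisers) entirely unserved. The inner gadget replaces each advertiser by one of budget $B$ together with a timed stream of slots such that the offline optimum fills the budget exactly using tiny bids (so $\mathrm{OPT}=\sum_i B_i$), while any online algorithm that does decide to serve that advertiser is maneuvered into committing budget in chunks of size $\approx R_{\max}B$ that cannot be completed, stranding an $R_{\max}$ fraction. Since the algorithm is deterministic, an adaptive adversary is legitimate: it reveals the inner gadget only on the advertisers the algorithm chooses to serve within the outer instance, so the $(1-R_{\max})$ degradation applies precisely to the served advertisers and the $(1-(1-\tfrac1d)^{k})$ degradation to the whole set, and the two losses compound multiplicatively, giving total revenue at most $(1-R_{\max})\bigl(1-(1-\tfrac1d)^{k}\bigr)\sum_i B_i$. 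The requirement $k\ge d$ is inherited verbatim from the matching lower bound of \ref{thm:det_better}.

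I expect the main obstacle to be this last point --- designing the composition so that the two losses genuinely multiply rather than merely average, i.e. ensuring the $R_{\max}$-stranding gadget can always be ``switched on'' exactly for the advertisers the algorithm commits to in the matching instance without disturbing the degree bounds that keep the graph $(k,d)$-bounded. On the algorithmic side, the delicate step is verifying that the discrete, per-bid level updates yield a dual that is \emph{exactly} feasible at termination (not merely asymptotically), which is where the precise form of $g$ and the counting $\sum_j b_{ij}\ge kB_i$ must be used; the bookkeeping for truncated bids $\min(b_{ij},B_i-s_i)$ is fiddly but should go through as in the small-bid analyses of \citet{buchbinder2007online}, with the $R_{\max}$ slack absorbing the rounding.
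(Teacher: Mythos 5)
Your upper-bound half has a genuine gap, and it is exactly the obstacle you flag at the end. For ad allocation the $(k,d)$-condition is $\sum_j b_{ij}\geq k\cdot B_i$, not ``degree at least $k$''. If every bid is at most $R_{\max}\cdot B_i$ (which it must be, otherwise the $(1-R_{\max})$ factor is vacuous), then an advertiser the adversary wants to leave completely unserved must still be offered bids totalling $k\cdot B_i$, i.e.\ at least $k/R_{\max}$ edges, and each such edge is a fresh opportunity for the algorithm to serve that advertiser. Running the phase structure of the matching construction long enough to satisfy this constraint therefore only strands a $(1-\frac{1}{d})^{k/R_{\max}}$ fraction of the advertisers, not $(1-\frac{1}{d})^{k}$; this is precisely the paper's construction (Theorem \ref{thm:adwords_ub}, which glues a $k/R_{\max}$-phase version of the matching hard instance to the gadget of Lemma \ref{lem:1-R}) and it yields only $(1-R_{\max})\cdot\big(1-(1-\frac{1}{d})^{k/R_{\max}}\big)$. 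The paper itself concedes in Section \ref{sec:discuss} that closing the gap between this and the algorithmic bound is open, so the multiplicative composition you propose, giving a matching upper bound of $(1-R_{\max})\cdot\big(1-(1-\frac{1}{d})^{k}\big)$, is stronger than anything proved in the paper and your sketch offers no way around the bids-versus-degree accounting that blocks it.

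The algorithmic half is in the paper's spirit (exponential scaling in the number of missed spending opportunities, a dual that is infeasible until a final rounding, a loss of $\max_j b_{ij}$ per advertiser), but it is silent on exactly the step the paper's proof is about. With unequal bids, updating the advertiser's dual from the current bid can leave $z_i$ arbitrarily small, and since previously rejected bids may exceed the current bid the natural update can even \emph{decrease} the running variable, destroying the per-arrival primal-to-dual ratio. The paper's Algorithm \ref{alg:primal-dual-ad-allocation-general} handles this by storing each advertiser's pending dual as a base-$\frac{d}{d-1}$ numeral $z_i^c$, exposing only the bid-capped fraction $z_i^f$ of it when slot $j$ arrives, and proving Lemmas \ref{lem:digital}--\ref{lem:almost-feasible} (at most $k-1$ non-null digits, each digit at most $\max_j b_{ij}/B_i$, and every increase of $z_i$ paid for by a decrease of at most $k$ times that value in the digit sum); this bookkeeping is where both the factor $1-(1-\frac{1}{d})^{k}$ and the additive loss $\sum_i\max_j b_{ij}$ actually come from. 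Appealing to the small-bid analysis of \citet{buchbinder2007online} cannot fill this in, since no small-bid assumption is available here --- handling large-ish bids is the point --- and your plan to give each slot a dual value $y_j$ equal to ``the residual needed to cover its constraint'' introduces dual cost your ratio argument never charges, whereas the paper's algorithm never touches $y_j$ and instead forces $z_i\geq 1$ for every advertiser only in the terminal rounding step, whose cost is controlled by Lemma \ref{lem:almost-feasible}.
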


To contrast our results with the state-of-the-art, we note that the algorithms of \citet{mehta2007adwords,buchbinder2007online,devanur2013randomized} achieve competitive ratio $(1-R_{\max})\cdot \big(1-1/\big(1+R_{\max}\big)^{1/R_{\max}}\big)$. This bound tends to $1-\frac{1}{e}$ from below as $R_{\max}$ tends to zero, but is far from this value for larger $R_{\max}$. Our algorithms fare better whenever $k\geq d$ even for large-ish $R_{\max}$.
As stated in \ref{subsec:motivation}, we expect $k$ to be significantly larger than $d$, but in order to emphasize the strength of our bound, let us assume only that $d/k=R_{\max}$. Table \ref{table:largeish} displays the resulting competitive ratios in this case. Note that in this regime our algorithm is already better at $R_{\max}=\frac{1}{3}$ than prior algorithms are at the limit (i.e. when $R_{\max}\rightarrow 0$).
\vspace{-0.4cm}
\begin{center}
\renewcommand{\arraystretch}{1.2}
\begin{table}[ht]
\caption{Results for Ad Allocation with large-ish bids in $(k,d)$-graphs with $d/k=R_{\max}$}{
\label{table:largeish}
\centering
\begin{tabular}{ |c | c | c | c | c | c | c | c | c | c | c |}
    \hline
$R_{\max}$\bigstrut & $\frac{1}{2}$ & $\frac{1}{3}$ & $\frac{1}{4}$ & $\frac{1}{5}$ & $\frac{1}{6}$ & $\frac{1}{8}$  & $\frac{1}{16}$ & $\frac{1}{32}$ & $\frac{1}{100}$ & $\rightarrow 0$ \\ \hline
State-of-the-art &
0.278 & 0.385 & 0.443 & 0.478 & 0.503 & 0.534 & 0.582 & 0.607 & 0.624 & 0.632 \\ \hline
Our Work&
0.432 & 0.633 & 0.736 & 0.795 & 0.831 & 0.875 & 0.938 & 0.969 & 0.99 & 1 \\ \hline
   \end{tabular}}
\end{table}
\end{center}
\vspace{-0.6cm}

Better still, our algorithms are robust to a few outlying advertisers increasing $R_{\max}$, as the $\left(\sum_i B_i\right)\cdot \left(1-R_{\max}\right)$ term in the above bound is rather $\left(\sum_i B_i - \max_{j\in N(i)} b_{ij}\right)$. This is the first such result in the adversarial setting. To the best of our knowledge only the algorithm of \citet{devanur2012asymptotically} for the iid model holds this desired property.
Likewise, our algorithms are robust to few outlying advertisers making the input not $(k,d)$-bounded (alternatively, increasing $k$), as the following theorem asserts.
\begin{thm}[Outliers]\label{thm:outliers}
	If every advertiser $i$ satisfies $\sum_{j} b_{ij}\geq k\cdot B_i$, except for a subset $S\subset L$ with total budget at most an $\alpha$-fraction of the total sum of budgets, $\sum_{i\in S} B_i \leq \alpha \cdot \sum_{i\in L} B_i$, then the algorithms of Theorems \ref{thm:det_better} and \ref{thm:R_adwords} gain revenue at least $(1-\alpha)$ times the bounds guaranteed by the above theorems. In particular, these algorithms achieve competitive ratio at least $(1-\alpha)\cdot\big(1-\big(1-\frac{1}{d}\big)^k\big)$ and $(1-\alpha)\cdot \big((1-R_{\max})\cdot\big(1-\big(1-\frac{1}{d}\big)^k\big)\big)$.
\end{thm}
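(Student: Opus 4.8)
First, I would reduce the two competitive-ratio consequences to the two absolute revenue bounds: the optimum never earns more than the total budget $\sum_{i\in L}B_i$, so any guarantee of the form ``revenue at least $(1-\alpha)\,c\,\sum_{i\in L}B_i$'' is automatically a $(1-\alpha)c$-competitiveness guarantee, and it suffices to prove the two revenue bounds. A tempting black-box approach would be to pad each outlier $i\in S$ with fresh impressions adjacent only to $i$ and arriving after all real impressions, until the instance becomes $(k,d)$-bounded, and then invoke Theorems~\ref{thm:det_better}/\ref{thm:R_adwords}; since the padding impressions arrive last, the algorithm's decisions on the real impressions are unchanged, and one recovers a revenue of at least $c\sum_{i\in L}B_i$ minus the revenue collected from the padding, i.e.\ at least $c\sum_{i\in L}B_i-\sum_{i\in S}B_i$. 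But this is only an \emph{additive} loss and yields the weaker bound $c\sum_i B_i-\alpha\sum_i B_i$ rather than the asserted \emph{multiplicative} $(1-\alpha)\,c\,\sum_i B_i$, so one has to open up the analysis.

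The plan is then to extract from the proofs of Theorems~\ref{thm:det_better} and~\ref{thm:R_adwords} the finer, per-advertiser bound that is implicit in them. I expect those proofs to run the unchanged algorithm and lower-bound its revenue by a sum $\sum_{i\in L}\rho_i$ of per-advertiser contributions, where $\rho_i\ge B_i\big(1-(1-\tfrac{1}{d})^{\kappa_i}\big)$ -- respectively $\rho_i\ge(1-R_{\max})\,B_i\big(1-(1-\tfrac{1}{d})^{\kappa_i}\big)$ -- with $\kappa_i$ equal to the number of times advertiser $i$ could have spent $B_i$ over the run, truncated at $k$. The sole role of $(k,d)$-boundedness should be to conclude that $\kappa_i=k$ for \emph{every} $i$ (each advertiser sees all of its incident impressions, whose bids sum to at least $k B_i$); everything else -- how $i$'s scaled bid and its dual ``satisfaction'' level evolve as impressions are offered to it, and the charging of each impression's earned revenue against the level-increments it triggers among its at most $d$ neighbors -- is an instance-independent accounting, carrying over whatever hypotheses on $k,d$ the original theorem requires (in particular $k\ge d-1$ for the AdWords bound). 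Running the same algorithm on the outlier instance, every non-outlier $i\in L\setminus S$ still has $\kappa_i=k$ and hence $\rho_i\ge B_i\big(1-(1-\tfrac{1}{d})^k\big)$ (resp.\ times $1-R_{\max}$), while each $i\in S$ contributes $\rho_i\ge 0$; summing and using $\sum_{i\in S}B_i\le\alpha\sum_{i\in L}B_i$ gives revenue at least $\big(1-(1-\tfrac{1}{d})^k\big)\sum_{i\in L\setminus S}B_i\ge(1-\alpha)\big(1-(1-\tfrac{1}{d})^k\big)\sum_{i\in L}B_i$, and analogously for the $(1-R_{\max})$-weighted bound. The vertex-weighted case is identical with each $B_i$ replaced by the vertex weight $w_i$.

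The step I expect to be the main obstacle is confirming that this per-advertiser decomposition degrades \emph{gracefully}: that an advertiser which fails to be $(k,\cdot)$-bounded does not corrupt the bookkeeping for the others, and in particular that no non-outlier's guaranteed contribution $\rho_i$ is, in the original proof, covertly drawn from revenue that an outlier would have had to produce. I would re-examine each inequality in the proofs of Theorems~\ref{thm:det_better} and~\ref{thm:R_adwords} and check that it is either local to a single advertiser (the evolution of its scaled bid, of its satisfaction level, and the revenue credited to it) or a per-impression inequality relating the revenue earned on that impression to the total level-increment it causes among its $\le d$ neighbors -- none of which references the global $(k,d)$-structure, which enters only through the per-advertiser conclusion $\kappa_i=k$. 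Once this separation is verified, the multiplicative loss factor $(1-\alpha)$ and the competitive-ratio consequences follow as above; and because the paper derives its randomized algorithms from these deterministic ones by dual-fitting, the analogous outlier guarantee carries over to the randomized setting with no change.
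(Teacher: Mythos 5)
Your proposal is correct and is essentially the paper's own argument: the paper reruns the unchanged primal-dual algorithms, observes that the per-arrival accounting $\Delta D \leq (1+C)\cdot\Delta P$ never uses the $(k,d)$-structure, and that the hypothesis $\sum_j b_{ij}\geq k\cdot B_i$ is used only per advertiser to force $z_i=1$ (resp.\ $z_i\geq 1-\max_j b_{ij}/B_i$) at termination, so the final dual cost is still at least $(1-\alpha)\sum_i B_i$ (resp.\ $(1-\alpha)\sum_i\big(B_i-\max_j b_{ij}\big)$) --- which is precisely your per-advertiser decomposition with $\rho_i = B_i z_i/(1+C)$. Your only overstatement is the exact intermediate form $\rho_i\geq B_i\big(1-(1-\tfrac{1}{d})^{\kappa_i}\big)$ for outliers (the analysis yields a slightly smaller quantity when $\kappa_i<k$), but since you only invoke $\rho_i\geq 0$ for $i\in S$ this is harmless.
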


Finally, we prove that several easy-to-implement randomized algorithms match the bounds of our optimal deterministic algorithms in expectation, despite making no use of the input's structure.

\begin{thm}\label{thm:rand_best?}
Several simple randomized algorithms achieve expected competitive ratio matching those of Theorems \ref{thm:det_better}, \ref{thm:R_adwords}, and \ref{thm:outliers}.
\end{thm}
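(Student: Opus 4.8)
The plan is to treat the deterministic primal--dual algorithms of Theorems~\ref{thm:det_better} and~\ref{thm:R_adwords} purely as an \emph{analytical device} --- a dual-generating bookkeeping process run on top of a structure-oblivious randomized algorithm --- and then close the argument by dual fitting \emph{in expectation}. First I would fix the randomized algorithms in question: for the (vertex-weighted) matching problems, \textsc{random}, which matches each arriving online vertex to a uniformly random currently-available neighbor, and the algorithm that instead draws an independent uniform priority for every offline vertex up front and then matches greedily by priority; for AdWords, the rule that awards each arriving slot to an eligible advertiser chosen with probability proportional to the appropriately scaled bid. None of these consults $k$ or $d$, which is the whole point.

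Next, fix a realization $\omega$ of the internal randomness and let $x(\omega)$ be the primal solution the randomized algorithm produces on it. On top of this execution I would maintain the \emph{same} dual-update rule used in the proof of Theorem~\ref{thm:det_better} (resp.\ Theorem~\ref{thm:R_adwords}): whenever online vertex $j$ is assigned to offline vertex $i$, raise the online dual $v_j$ and split the remaining weight among the ``level'' duals of $i$, scaling according to the number of times $i$ could so far have been charged as much as its current winning bid. Ratio-maintenance is a \emph{local} invariant here --- each unit of primal gain is matched by exactly $1/\rho$ units of dual gain, by the very definition of the split --- so it is insensitive to which assignments the algorithm happened to make; hence for every $\omega$ we have $\mathrm{val}(x(\omega)) \ge \rho\cdot\mathrm{val}(y(\omega))$, where $\rho = 1-(1-\tfrac1d)^k$ for matching, the bound of Theorem~\ref{thm:R_adwords} for AdWords, and the $(1-\alpha)$-scaled bound of Theorem~\ref{thm:outliers} in the outlier case.

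The crux is to show that the expected dual $\bar y := \mathbb{E}_\omega[y(\omega)]$ is feasible. For the deterministic algorithm each constraint $u_i + v_j \ge w_j$, over an edge $(i,j)$, holds surely because $i$'s incident slots are processed in a controlled order; here the constraint may fail on unlucky $\omega$, so I only need it to hold in expectation. I would fix $(i,j)$, condition on all randomness other than that used at the moment $j$ is considered, and argue that $v_j$ is at least the smallest scaled bid $i$ would then accept, while $\mathbb{E}[u_i \mid \text{rest}]$ already reflects the expected further filling of $i$; summing $i$'s level contributions telescopes, and the geometric sum that produces the factor $1-(1-\tfrac1d)^k$ in the proof of Theorem~\ref{thm:det_better} then yields $\mathbb{E}[u_i+v_j\mid\text{rest}]\ge w_j$. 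Since every dual constraint is linear, taking expectations shows $\bar y$ is feasible, and weak LP duality gives $\mathbb{E}[\mathrm{val}(x)] \ge \rho\cdot\mathbb{E}[\mathrm{val}(y)] = \rho\cdot\mathrm{val}(\bar y) \ge \rho\cdot\mathrm{OPT}$. The vertex-weighted, AdWords and outlier statements follow by substituting the corresponding $\rho$ together with the corresponding per-advertiser decomposition already used in the proofs of Theorems~\ref{thm:R_adwords} and~\ref{thm:outliers}.

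The main obstacle I anticipate is exactly this expected-feasibility step: setting up the conditioning so that the dual increments accumulated at $i$ are, on average, large enough \emph{without} a circular dependence on $j$'s own coin flip, and checking that the telescoping/convexity bound of Theorem~\ref{thm:det_better} survives replacing deterministic quantities by conditional expectations (any Jensen-type slack must go the right way). A secondary technicality is the AdWords case with large-ish $R_{\max}$, where the $(1-R_{\max})$ factor has to be recovered as the rounding loss of the proportional selection rule; since the dual bookkeeping there coincides with that of Theorem~\ref{thm:R_adwords}, I would import that portion of the argument essentially verbatim.
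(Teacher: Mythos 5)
There is a genuine gap, and it sits exactly where you place your confidence: the claim that ``ratio-maintenance is a local invariant \ldots insensitive to which assignments the algorithm happened to make'' is false for the dual updates of Algorithm \ref{alg:vw-high-pd}. In the deterministic analysis (Theorem \ref{thm:vw-high-deg}) the per-step bound $\Delta D \leq (1+C)\cdot \Delta P$ is obtained by bounding each term $\frac{1}{d-1}(z_{i'}+C)b_{i'j}$ by $\frac{1}{d-1}(z_i+C)b_{ij}$, which is legitimate \emph{only because} the deterministic algorithm matches $j$ to the neighbor maximizing $(z_i+C)b_{ij}$. A structure-oblivious randomized rule such as \textsc{random} will frequently pick a neighbor with a much smaller value of $(z_i+C)b_{ij}$, and on such a realization $\Delta D$ can exceed $(1+C)\Delta P$ (e.g., $d=2$, two unmatched unit-weight neighbors with $z$-values $0.9$ and $0$: matching the second gives $\Delta P=1$ but $\Delta D = 1 + (0.9+C) > 1+C$). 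So your per-realization inequality $\mathrm{val}(x(\omega))\geq\rho\cdot\mathrm{val}(y(\omega))$ does not hold, and the whole derivation via $\mathbb{E}[\mathrm{val}(y)]=\mathrm{val}(\bar y)\geq \mathrm{OPT}$ collapses unless you redesign the dual split --- which you do not do, and which would then require the expected-feasibility argument you only sketch as ``the crux''.

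The paper's proof (Theorem \ref{thm:rand-matching}) divides the labor in precisely the opposite way, and this is the idea your proposal is missing. Under the updates of Algorithm \ref{alg:vw-high-pd}, \emph{dual feasibility is a sure event} for any maximally-matching algorithm: every arrival adjacent to a still-unmatched advertiser triggers the multiplicative update, so an unmatched $i$ always has $z_i = C\bigl((\tfrac{d}{d-1})^{d(i)}-1\bigr)$, which reaches $1$ once $d(i)\geq k$ with $C=1/\bigl((\tfrac{d}{d-1})^{k}-1\bigr)$; matched advertisers have $z_i=1$ immediately. What holds only in expectation is the primal-to-dual ratio: conditioning on the state and averaging over the uniform choice in $N_F(j)$, the cross terms symmetrize and $|N_F(j)|\leq d$ gives $\mathbb{E}[\Delta D\mid \mathrm{state}]\leq (1+C)\cdot\mathbb{E}[\Delta P\mid \mathrm{state}]$, after which total expectation finishes the proof (and the same template, via Algorithms \ref{alg:primal-dual-ad-allocation} and \ref{alg:primal-dual-ad-allocation-general}, handles equal and general bids and the outlier bounds). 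Your skeleton --- exact ratio per realization plus feasibility of the expected dual --- is a legitimate pattern in general (it is the \textsc{ranking}-style randomized primal-dual scheme), but making it work here would require inventing a different dual-splitting rule and proving expected feasibility for it, neither of which your proposal supplies; also note the paper's duals never touch the ad slots' variables $y_j$, contrary to the update you describe.
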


\subsection{Techniques}\label{subsec:techniques}
As many previous ad allocation algorithms, our algorithms can be seen as bid-scaling algorithms. That is, matches are chosen greedily based on the bids $b_{ij}$ of each advertiser $i$, times a scaling factor. However, contrary to previous algorithms \citet{mehta2007adwords,buchbinder2007online,devanur2013randomized} that scale bids according to the function $1-e^{f(i)-1}$, where $f(i)$ is the fraction of $i$'s budget spent so far, our algorithms essentially scale bids according to an exponential in $u$, the number of \emph{unused opportunities} for spending the current bid $b_{ij}$, specifically, $\big(\frac{d}{d-1}\big)^{u}$. Other differences can be seen in our algorithms' primal-dual interpretation: we make no use of the ad slots' dual variables, and only update the dual variables of each arriving ad slot's neighbors. Interestingly, our online primal-dual algorithms do not guarantee dual feasibility throughout their execution, but only upon termination. To the best of our knowledge, ours are the first online primal-dual algorithms with this behavior.

The above approach works directly for vertex-weighted matching. To generalize our approach to ad allocations, we first consider an intermediary problem -- equal-bids ad allocation -- where every advertiser $i$ bids the same bid $b_i$ for all neighbors $j\in N(i)$.~We reduce this problem in $(k,d)$-graphs to the vertex-weighted problem in $(k,d)$-graphs in an online manner. We then rewrite this reduction along with our vertex-weighted online algorithm as a single online primal-dual algorithm for the equal-bids problem. Guided by this algorithm we devise a primal-dual algorithm for general-bids ad allocation on $(k,d)$-graphs, using a bounded fraction of the advertisers' dual variables to guide our choice of matches and dual updates. This allows us to simulate the bid-scaling described above also in the case where each advertiser has different bids.

Finally, our randomized results stem from our deterministic primal-dual algorithms, whose dual updates we use in our dual-fitting analysis of the randomized algorithms. Dual feasibility follows as it does for our algorithms.~The dual costs are bounded in expectation by the primal cost times the required constant, conditioned over the random algorithm's previous choices.~Taking total expectation over the possible previous choices yields the expected competitive ratio.

\subsection{Related Work}\label{subsec:related}
Several stochastic models have been studied for the problems we address.~Most prominent among these are the random arrival order and i.i.d model with known/unknown distribution. Our algorithms beat all of these bounds \emph{in the worst case} for sufficiently small $d/k$ and $R_{\max}$, replacing stochastic assumptions by structural ones.

For the random order model a line of work beginning with \citet{goel2008online} has shown the optimal competitive ratio for maximum matching lies in the range $(0.696, 0.823)$ \citet{feldman2009online,karande2011online,mahdian2011online,manshadi2012online}.
For the known distribution model \citet{feldman2009online} were the first to show the optimal competitive ratio is strictly greater than $1-\frac{1}{e}$ and bounded away from 1. Subsequent work \citet{bahmani2010improved,haeupler2011online,jaillet2013online} showed the optimal competitive ratio for bipartite matching in this setting lies in the range $(0.706, 0.823)$, and $(0.729,0.823)$ if the expected number of arrivals of each ad slot type is integral. For the vertex-weighted problem under the previously-mentioned integrality assumptions \citet{haeupler2011online} and \citet{jaillet2013online} showed a lower bound of $0.667$ and $0.725$, respectively.
For the AdWords problem under the random order model, \citet{devanur2009adwords} give a $(1-\epsilon)$-competitive algorithm, assuming the online side's size is known in advance and no bid is higher than roughly  $\epsilon^3/|L|^2$ times the optimum value.
\citet{devanur2012asymptotically} gave an algorithm in the unknown distribution model achieving asymptotically optimal competitive ratio of $1-O(\sqrt{R_{\max}})$.

In a different vein, \citet{mahdian2007allocating} considered the AdWords problem given black-box estimates of the input.~They show how to obtain performance trading-off between the worst-case optimal and the black-box's performance on the given input. We require no such algorithm be available, but rather rely on domain-specific structure.

Closer to our work, \citet{buchbinder2007online} considered $(1,d)$-graphs for equal-bids ad allocation. 
We obtain more general results, and strictly better bounds for all $k>d$.

\newpage
\subsection{Paper Outline}
In Section \ref{sec:def} we formally define the problems considered throughout the paper. In Section \ref{sec:greedy} we give a tight analysis of algorithm \textsc{greedy} in $(k,d)$-graphs. In Section \ref{sec:high_deg} we build on the hard examples of Section \ref{sec:greedy} and present optimal algorithms for the online maximum matching and vertex-weighted matching problems in $(k,d)$-graphs. In Section \ref{sec:ad-allocation} we extend these results to the general ad allocation problem. In Section \ref{sec:upper-bounds} we present hardness results for the problems considered. In Section \ref{sec:randomization} we extend our analysis to prove competitiveness of several simple randomized algorithms. We conclude with a discussion of future work and open questions in Section \ref{sec:discuss}.
\section{Problem Definitions}\label{sec:def}
An instance of the \emph{ad allocation} problem consists of a bipartite graph $G=(L, R, E)$. The left-hand $L$ side corresponds to advertisers, and the right-hand side $R$ to ad slots. Each advertiser $i\in L$ has some budget $B_i$ and is willing to bid some value $b_{ij}\leq B_i$ for every neighboring ad slot $j\in N(i)$ (the bids of advertiser $i$ need not be equal for all $j\in N(i)$). Each ad slot $j\in R$ can be allocated to (up to) one advertiser $i$, yielding a profit of $b_{ij}$. The bids for ad slots allocated to an advertiser $i$ may not exceed $i$'s budget, $B_i$. Figure \ref{fig:primaldual-auction} presents the ad allocation problem's LP relaxation and its dual.

\begin{figure}[h]
	\label{fig:primaldual-auction}
	\begin{small}
		\begin{center}
			\begin{tabular}{rl|rl}
				\multicolumn{2}{c|}{Primal (Packing)} & \multicolumn{2}{c}{Dual
					(Covering)} \\ \hline  
				maximize & $\sum_{(i,j)\in E} b_{ij}\cdot x_{ij}$ &
				minimize & $\sum_{i\in L}{B_i \cdot z_i} +  \sum_{j\in R}{y_j}$ \\
				subject to: & & subject to: & \\
				
				$\forall j\in R$: &  $\sum_{(i,j)\in E}x_{ij} \leq 1$ &
				$\forall (i,j)\in E$: & $b_{ij}\cdot z_i + y_j \geq b_{ij}$ \\
				
				$\forall i \in L$: & $\sum_{(i,j)\in E} b_{ij}\cdot x_{ij} \leq
				B_i$ & $\forall i\in L$: & $z_i\geq 0$ \\
				
				$\forall (i,j)\in E$: & $x_{ij} \geq 0$ & $\forall j\in R$: & $y_j \geq0$
			\end{tabular}
		\end{center}
	\end{small}
	\caption{The fractional ad allocation LP and the corresponding dual}
\end{figure} 

An instance of the \emph{online} ad allocation problem consists of an ad allocation instance; the advertisers given up-front, along with their budgets, and the ad slots arriving one-by-one, together with their edges and bids. An online ad allocation algorithm must, upon arrival of an ad slot $j$, determine to which advertiser (if any) to allocate the ad slot. Allocations are irrevocable, and so must be made to \emph{feasible} advertisers, whose residual budget is sufficient to pay their actual bid.

We will consider several interesting special cases of the above problem throughout this paper. These problems are both interesting in their own right (theoretically as well as practically), in addition to providing some insight towards achieving a solution to the general problem.

The \emph{equal-bids} online ad allocation problem is the above problem with each advertiser $i$ bidding the same value for all neighboring ad slots; i.e., $b_{ij}=b_i$ for all $j\in N(i)$.

The online \emph{vertex-weighted matching} problem is the above problem with every advertiser $i$ bidding all its budget for every neighboring ad slot; i.e., $b_{ij}=B_i$ for all $j\in N(i)$.

The online \emph{maximum matching} problem is the above problem with all budgets and bids equal to 1; i.e., $b_{ij}=B_i=1$ for all $j\in N(i)$.

\section{Warm-up: Greediness in $(k,d)$-Graphs}\label{sec:greedy}

In this section we show that the natural greedy algorithms for the problems considered, which in general graphs are only 1/2-competitive, achieve on $(k,d)$-graphs a competitive ratio tending to one as $d/k$ tends to zero. We prove this result by applying dual-fitting, and prove our analysis is tight. 

Algorithm \textsc{greedy} for the online ad allocation problem matches an ad slot $j\in R$ to a feasible neighbor $i$ with highest bid $b_{ij}$. For the vertex-weighted case, where $b_{ij}=B_i$ for all $j\in N(i)$, this reduces to picking an unmatched neighbor of highest weight. Our analysis relies on a dual-fitting formulation, given in Algorithm \ref{alg:adwords-greedy} below.

\begin{algorithm}
	\caption{\textsc{ad allocation greedy} (Dual-Fitting Formulation)}
	\label{alg:adwords-greedy}
	\begin{algorithmic}[1]
		\medskip
		\STATE \textbf{Init:} set $z_i\leftarrow 0\  \text{ for all } i\in L$
		\FORALL{$j\in R$}
			\IF{$j$ has a feasible neighbor}
				\STATE match $j$ to a feasible neighbor maximizing ${b_{ij}}$
				\STATE set $x_{ij}\leftarrow 1$
				\STATE set $z_i\leftarrow \min\{1,z_{i}+\frac{b_{ij}}{B_i}\}$\label{line:match-inc}
				\STATE set $z_{i'}\leftarrow \min\{1,z_{i'}+\frac{b_{i'j}}{k\cdot B_i}\}$ for every feasible neighbor of $j$, $i'\neq i$
		\ENDIF
		\ENDFOR
		\FORALL{$i\in L$}\label{line:prob}
			\IF{$i$'s residual budget is less than $R_{\max}\cdot B_i$}
			\STATE set $z_i\leftarrow 1$\label{line:prob_end}
		\ENDIF
		\ENDFOR
	\end{algorithmic}
\end{algorithm}

\begin{thm}\label{thm:match-greedy}
Algorithm \textsc{greedy} is  $\left(\frac{k}{k+d-1}\right)$-competitive for the unweighted, vertex-weighted maximum matching and equal-bids ad allocation problems on $(k,d)$-graphs.
\end{thm}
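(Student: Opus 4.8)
I would prove the bound by dual fitting against the LP of Figure~\ref{fig:primaldual-auction}. Keep the dual variables $z_i$ produced by Algorithm~\ref{alg:adwords-greedy}, set every $y_j := 0$, and show the resulting dual solution is feasible and has objective value at most $\frac{k+d-1}{k}\cdot P$, where $P = \sum_{(i,j)\in E} b_{ij}x_{ij}$ is \textsc{greedy}'s profit; by weak LP duality this gives $P \ge \frac{k}{k+d-1}\cdot \mathrm{OPT}$. With $y\equiv 0$ the dual objective equals $\sum_{i\in L} B_i z_i$ and every dual constraint collapses to ``$z_i \ge 1$ for each non-isolated $i$''. So it suffices to establish: (i) at termination $z_i = 1$ for every $i$ with $N(i)\ne\emptyset$; and (ii) $\sum_{i\in L} B_i z_i \le \frac{k+d-1}{k}\,P$.

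For (ii) I would bound the growth of $\sum_i B_i z_i$ against the growth of $P$, arrival by arrival. When a slot $j$ is matched to its maximum-bid feasible neighbour $i_j$, $P$ grows by $b_{i_j j}$; on the dual side, Line~\ref{line:match-inc} raises $B_{i_j}z_{i_j}$ by at most $b_{i_j j}$, while the update of each other \emph{feasible} neighbour $i'$ of $j$ (the one scaled down by $k$) raises $B_{i'}z_{i'}$ by at most $b_{i'j}/k$. Since $j$ has at most $d(j)-1 \le d-1$ such neighbours, and every feasible neighbour $i'$ satisfies $b_{i'j}\le b_{i_j j}$ — precisely the reason \textsc{greedy} chose $i_j$ — these updates contribute at most $\frac{d-1}{k}\,b_{i_j j}$ in total. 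Hence each arrival raises $\sum_i B_i z_i$ by at most $\frac{k+d-1}{k}$ times the simultaneous rise of $P$ (the caps only lower the left-hand side, an unmatched slot changes neither quantity, and the clean-up loop, Lines~\ref{line:prob}--\ref{line:prob_end}, is a no-op here, as argued below). Summing over all arrivals yields (ii).

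For (i), fix a non-isolated advertiser $i$ and split into two cases at termination. If $i$ is \emph{saturated}, i.e.\ can no longer afford any of its bids, then it was matched some $t\ge 1$ times, spending $tb$ where $b$ is $i$'s bid (for unweighted and vertex-weighted matching $t=1$ and $tb=B_i$; for equal bids we may assume w.l.o.g.\ that $B_i$ is an integer multiple of $b$, since every allocation to $i$ costs exactly $b$), so the $\frac{b}{B_i}$-increments of Line~\ref{line:match-inc} alone already drive $z_i$ to $1$. Otherwise $i$ can still afford every one of its bids, so — feasibility being monotone along the run — $i$ was a feasible neighbour of every $j\in N(i)$ at the moment $j$ arrived; hence \textsc{greedy} matched \emph{every} neighbour $j$ of $i$, each such match contributing $b_{ij}/B_i$ to $z_i$ if it went to $i$ and $b_{ij}/(kB_i)$ via the neighbour-update otherwise. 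Summing these contributions over $j\in N(i)$ and invoking the $(k,d)$-boundedness inequality $\sum_{j\in N(i)} b_{ij}\ge kB_i$ shows the accumulated value is at least $1$, so the cap makes it exactly $1$. In both cases $z_i$ is already $1$ when the clean-up loop runs, confirming the loop is a no-op for these problems.

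I expect case~(i), and specifically its second case, to be the crux: the correct and slightly non-obvious observation is that an advertiser that never becomes saturated has \emph{all} of its neighbours matched by \textsc{greedy}, so the ``consolation'' increments to its dual variable (the ones divided by $k$) genuinely accumulate, and the lower bound $k$ on the offline degree is exactly what guarantees enough of them accumulate to reach $1$ — this is where the $(k,d)$ structure enters. The per-arrival charging in~(ii) is then routine, using only the maximum-bid rule and the online-degree bound $d(j)\le d$; the equal-bids case adds only the bookkeeping of possibly several matches per advertiser, which the integrality normalization above absorbs.
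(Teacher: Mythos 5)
Your proposal is correct and follows essentially the same dual-fitting argument as the paper: the same per-arrival charge of at most $\frac{k+d-1}{k}\cdot b_{i_j j}$ to the dual cost (using the greedy max-bid rule and $d(j)\le d$), and the same key feasibility observation that an advertiser that never becomes saturated collects at least $b_{ij}/(k\,B_i)$ from every neighbor, so $\sum_j b_{ij}\ge k\,B_i$ forces $z_i=1$ at termination. The only cosmetic difference is that you establish $z_i=1$ directly before the clean-up loop (adding an integrality normalization for equal bids), whereas the paper lets Lines \ref{line:prob}--\ref{line:prob_end} enforce it and then argues, in its Claim (c), that these lines incur no dual cost for the three problems in question.
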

\begin{thm}\label{thm:add-allocation-greed}
	Algorithm \textsc{greedy} is  $\frac{(1-R_{\max})\cdot k}{k+(d-1)\cdot(1-R_{\max})}>(1-R_{\max})\cdot\frac{k}{k+d-1}$ competitive for online ad allocation on $(k,d)$-graphs with $k\geq 1$ and $R_{\max}=\max_{(i,j)\in E} \{b_{ij}/B_i\}<1$.
\end{thm}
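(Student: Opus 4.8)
The plan is to prove the bound by \emph{dual fitting} along the lines of Algorithm~\ref{alg:adwords-greedy}. Running it on the input produces the primal solution $x$ --- the allocation made by \textsc{greedy}, of value $\mathrm{ALG}=\sum_{(i,j):x_{ij}=1}b_{ij}$ --- together with a dual assignment $z$; I additionally define duals $y$, show that $(z,y)$ is feasible for the dual LP of Figure~\ref{fig:primaldual-auction} \emph{upon termination}, and bound the dual objective by $\tfrac1c\,\mathrm{ALG}$ with $c=\tfrac{(1-R_{\max})\,k}{k+(d-1)(1-R_{\max})}$. Weak LP duality, together with the LP being a relaxation of the integral problem, then gives $\mathrm{OPT}\le\sum_iB_iz_i+\sum_jy_j\le\tfrac1c\,\mathrm{ALG}$, which is the theorem (the stated strict inequality with $(1-R_{\max})\tfrac{k}{k+d-1}$ is elementary). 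I set $y_j=0$ for a slot $j$ left unassigned, and $y_j:=\max_{i\in N(j)}b_{ij}\bigl(1-z_i^{\mathrm{final}}\bigr)$ for a slot $j$ that \textsc{greedy} assigns --- just enough to cover the eventual dual slack of every edge at $j$.

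For feasibility of an edge $(i,j)$, the decisive case split is whether $i$ was \emph{feasible} for $j$ when $j$ arrived. If not, then $i$'s residual budget had already dropped below $b_{ij}\le R_{\max}B_i$, so the post-processing loop forces $z_i=1$ (line~\ref{line:prob_end}) and $b_{ij}z_i+y_j\ge b_{ij}$ holds. If $i$ was feasible, then $j$ does get assigned --- to $i$ itself or to a neighbor with bid at least $b_{ij}$ --- so $z_i$ is incremented while $j$ is being processed (line~\ref{line:match-inc} or the subsequent spread step); since $z$-values only increase afterwards, the choice $y_j\ge b_{ij}(1-z_i^{\mathrm{final}})$ makes the constraint hold, and the cap at $1$ only helps.

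For the objective I split $\sum_iB_iz_i+\sum_jy_j$ by which step of Algorithm~\ref{alg:adwords-greedy} created each unit of dual mass: (a) the line-\ref{line:match-inc} increments plus the post-processing jumps; (b) the spread increments to feasible non-assigned neighbors; and (c) the $y_j$'s. Part (a) is at most $\tfrac1{1-R_{\max}}\mathrm{ALG}$: line~\ref{line:match-inc} adds at most $b_{ij}$ to $B_iz_i$ per assignment (total $\mathrm{ALG}$), while $z_i$ is pushed to $1$ by post-processing only if advertiser $i$ spent $S_i>(1-R_{\max})B_i$, so that jump adds at most $B_i-S_i<\tfrac{R_{\max}}{1-R_{\max}}S_i$ (total at most $\tfrac{R_{\max}}{1-R_{\max}}\mathrm{ALG}$). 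Part (b) contributes, for each assigned slot $j\!\to\!i^\ast$, a term $B_{i'}\cdot\tfrac{b_{i'j}}{kB_{i^\ast}}$ for each of its at most $d-1$ feasible non-assigned neighbors $i'$; using that $i^\ast$ has the largest bid among $j$'s feasible neighbors one aims to bound each such term by $\tfrac1k b_{i^\ast j}$, i.e. by $\tfrac1k$ of $j$'s contribution to $\mathrm{ALG}$, so that part (b) $\le\tfrac{d-1}{k}\mathrm{ALG}$. Finally part (c) must be folded in: here one invokes the $(k,d)$-structure again --- because advertiser $i$ faces bids summing to at least $kB_i$, by the time a slot $j$ needs $y_j$ the dual mass already resting on the relevant $z_i$'s is enough that $y_j$ can be charged to $j$'s own profit inside the budget not used by (a) and (b), yielding $(a)+(b)+(c)\le\bigl(\tfrac1{1-R_{\max}}+\tfrac{d-1}{k}\bigr)\mathrm{ALG}=\tfrac1c\,\mathrm{ALG}$.

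I expect the hard part to be this last accounting, compounded by one technical wrinkle specific to \emph{general} (as opposed to equal or vertex-weighted) bids: in the vertex-weighted case $i^\ast$ is also the largest-budget feasible neighbor, so $B_{i'}\le B_{i^\ast}$ and the part-(b) bound $B_{i'}b_{i'j}/(kB_{i^\ast})\le b_{i^\ast j}/k$ is immediate; with general bids $i^\ast$ need only maximize the \emph{bid}, so unequal budgets must be handled with more care, which is exactly why the spread in line~7 is normalized by $kB_{i^\ast}$ rather than $kB_{i'}$. Simultaneously, $y_j$ has to be large enough to cover the slack of \emph{every} edge at $j$ (not just the assigned one, whose slack the line-\ref{line:match-inc} increment already absorbs) yet small enough in total to fit what (a) and (b) leave over --- this balance is what ties the $1/k$-scaled greedy dual updates to the guarantee, and the $(1-R_{\max})$ loss is traceable to the post-processing loop, the one step that can raise a $z_i$ above the fraction of $B_i$ actually spent, with $1-R_{\max}$ the sharpest available bound on that excess.
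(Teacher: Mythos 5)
Your skeleton is the paper's: dual fitting against Algorithm~\ref{alg:adwords-greedy}, with the dual mass split into match-increments plus the post-processing jumps, the spread increments, and the $y_j$'s, and your part (a) bound (the jump costs at most $B_i-S_i<\tfrac{R_{\max}}{1-R_{\max}}S_i$) is exactly the paper's accounting for Lines~\ref{line:prob}--\ref{line:prob_end}. The feasibility discussion is also fine (indeed trivial, given your definition of $y_j$). But the two steps you leave open are where the proof actually lives, and as written they do not go through. Your part (c) is the more serious one: since $\tfrac{1}{1-R_{\max}}+\tfrac{d-1}{k}=\tfrac{k+(d-1)(1-R_{\max})}{(1-R_{\max})k}$, parts (a) and (b) already consume the entire allowed dual budget, so your inequality $(a)+(b)+(c)\le\bigl(\tfrac1{1-R_{\max}}+\tfrac{d-1}{k}\bigr)\mathrm{ALG}$ forces $\sum_j y_j=0$, i.e.\ $z_i^{\mathrm{final}}=1$ for \emph{every} advertiser; ``charging $y_j$ to $j$'s own profit'' cannot supply this, because there is no leftover slack to charge against. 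The paper closes exactly this hole with its Claim (a): if $i$ ends with residual budget below $R_{\max}B_i$, post-processing sets $z_i=1$; otherwise $i$ was feasible at the arrival of \emph{every} neighbor $j$, so each such $j$ raised $z_i$ by at least $b_{ij}/(kB_i)$, and $\sum_j b_{ij}\ge kB_i$ then forces $z_i=1$. With that lemma your $y_j$'s all vanish and part (c) disappears; without it, your final accounting is an unproven assertion.

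Second, your part (b) is not established for general bids, and your explanatory remark gets the normalization backwards. If the spread to a non-matched feasible neighbor $i'$ is $b_{i'j}/(kB_{i^*})$, normalized by the \emph{matched} advertiser's budget as you endorse, its dual cost is $B_{i'}b_{i'j}/(kB_{i^*})$, which exceeds $b_{i^*j}/k$ by the factor $B_{i'}/B_{i^*}$ whenever $b_{i'j}=b_{i^*j}$ and $B_{i'}\gg B_{i^*}$ -- greedy maximizes the bid, not the budget, so nothing rules this out. The reading under which both the per-slot bound and the feasibility lemma above work -- and the one the paper's proof text uses in its Claims (a) and (b) -- is that each neighbor's increment is scaled by its \emph{own} budget, $z_{i'}\leftarrow\min\{1,\,z_{i'}+b_{i'j}/(kB_{i'})\}$, so each spread costs exactly $b_{i'j}/k\le b_{i^*j}/k$ and the per-slot bound $\Delta D\le\bigl(1+\tfrac{d-1}{k}\bigr)\Delta P$ follows. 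Fixing these two points turns your outline into the paper's proof.
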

\begin{proof}
We prove the following claims: (a) $z,y$ form a feasible dual solution (b) for every $j\in R$ the changes to the primal and dual solutions' values, $\Delta P$ and $\Delta D$, satisfy $\Delta D/\Delta P\leq \frac{k+d-1}{k}$. (c) for the vertex-weighted and unweighed matching problems and equal-bids problem Lines \ref{line:prob}-\ref{line:prob_end} incur no dual cost, and (c') for the general ad allocation problem Lines \ref{line:prob}-\ref{line:prob_end} cost the dual solution no more than $R_{\max}/(1-R_{\max})$ times the primal profit. As $x$ forms an integral feasible primal solution, claims (a,b,c) combined entail Theorem \ref{thm:match-greedy}. Similarly, claims (a,b,c') entail Theorem \ref{thm:add-allocation-greed}, as claims (b) and (c') imply the ratio of the programs' overall values is at least 
\[
\frac{P}{D}\geq \frac{P}{\frac{k+d-1}{k}\cdot P + \frac{R_{\max}}{1-R_{\max}}\cdot P}=\frac{(1-R_{\max})\cdot k}{k+(d-1)\cdot(1-R_{\max})}
\]

\textbf{Claim (a):} For every advertiser $i\in L$, if over a $(1-R_{\max})$-fraction of $i$'s budget is spent then $z_i$ is set to one in Line \ref{line:prob_end}. Otherwise, $i$ is a feasible match of all of its neighbors $j$, each such $j$ causing $z_i$ to increase by at least $\frac{b_{ij}}{k\cdot B_i}$. As $\sum_j b_{ij}\geq k\cdot B_i$ then $z_i=1$ by the algorithm's termination. Consequently, all dual inequalities are satisfied. 

\textbf{Claim (b)}: For each ad slot $j\in R$, by the choice of $j$'s match $i$, and the fact that $j$ has degree $d(j)$ at most $d$, the primal value increases by $\Delta P=b_{ij}$ and the dual cost increases by at most $\Delta D=b_{ij} + \sum_{i'\in F_j\setminus\{i\}} b_{i'j}/k\leq b_{ij}\cdot(1+\frac{d-1}{k})$.

\textbf{Claim (c):} For an advertiser $i\in L$ to have spent over $(1-R_{\max})B_i$ for all but the general problem, it must and have $z_i$ set to one. Thus Lines \ref{line:prob}-\ref{line:prob_end} incur no dual cost.

\textbf{Claim (c'):} For an advertiser $i\in L$ to be affected by Lines \ref{line:prob}--\ref{line:prob_end}, it must spend up to a $(1-R_{\max})$-fraction of its budget. However, whenever $i$ spends an $f$-fraction of its budget, the dual variable $z_i$ increases by $f$ in Line \ref{line:match-inc}, and so the cost of increasing $z_i$ in line \ref{line:prob_end} is at most $R_{\max}\cdot B_i$, while $i$ garnered a primal profit of at least $(1-R_{\max})\cdot B_i$. The total dual cost of Lines \ref{line:prob}--\ref{line:prob_end} is thus at most $\frac{R_{\max}}{1-R_{\max}}\cdot P$, for $P$ the primal profit.
\end{proof}

\subsection{Tight Examples for Algorithm \textsc{greedy}}
\label{sec:greedy_tight}
We show that our analysis of algorithm \textsc{greedy} for the unweighted and vertex-weighted matching is tight whenever $k\geq d-1$ .\footnote{For $k<d-1$ the $\frac{k}{k+d-1}$ bound is strictly less than the $\frac{1}{2}$ bound obtained by all maximal matchings, and so the bound cannot be tight for $k<d-1$. We therefore turn our attention to the case $k\geq d-1$.}
\begin{thm}
For all $k\geq d-1$ there exist $(k,d)-$graphs $G$ with maximal matchings that achieve a competitive ratio no better than $\frac{k}{k+d-1}$ on $G$.
\end{thm}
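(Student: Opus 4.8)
The plan is to exhibit, for each pair $k\ge d-1$, one explicit $(k,d)$-graph $G$ together with a maximal matching $M$ of $G$ with $|M|=\tfrac{k}{k+d-1}\,|L|$, such that $G$ also admits a matching saturating all of $L$; since (as we check) this $M$ is produced by \textsc{greedy} on $G$ under a suitable arrival order, this certifies that the bound $\tfrac{k}{k+d-1}$ of Theorem~\ref{thm:match-greedy} cannot be improved. The unweighted instance below also handles the vertex-weighted and equal-bids variants by taking unit weights/bids.

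\textbf{The construction.} Let $L=A\cup C$ with $A=\{a_1,\dots,a_k\}$, $C=\{c_1,\dots,c_{d-1}\}$, and $R=R_0\cup P$ with $R_0=\{r_1,\dots,r_k\}$ and $P=\{p_{i,\ell}:1\le i\le k,\ 1\le\ell\le k-1\}$ (so $P=\emptyset$ when $k=1$). Put edges $a_i r_i$ for all $i$; $c_\ell r_j$ for all $\ell\in[d-1]$ and $j\in[k]$ (every $c$ joined to all of $R_0$); and $a_i p_{i,\ell}$ for all $i,\ell$. Then every left vertex has degree exactly $k$ (an $a_i$ via $r_i$ and its $k-1$ pads; a $c_\ell$ via all $k$ vertices of $R_0$), every $r_j$ has degree exactly $d$ (its partner $a_j$ together with the $d-1$ vertices of $C$), and every pad has degree $1$; hence $G$ is $(k,d)$-bounded.

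\textbf{The maximal matching and the optimum.} Take $M=\{a_i r_i:i\in[k]\}$. Every edge of $G$ is incident to some $a_i$ or some $r_j$, all of which are $M$-saturated, so $M$ has no augmenting edge and is maximal; moreover $M$ is the output of \textsc{greedy} when the $r_j$ arrive first in index order (each $r_j$ matched to $a_j$, its unique and still-free neighbor in $A$) and the $c_\ell$'s and pads arrive afterwards (each then having all its neighbors already matched). For the optimum we use $k\ge d-1$, which guarantees $R_0$ has at least $d-1$ vertices, so
\[
M^{*}=\{c_\ell r_\ell:\ell\in[d-1]\}\ \cup\ \{a_\ell p_{\ell,1}:\ell\in[d-1]\}\ \cup\ \{a_\ell r_\ell:d\le\ell\le k\}
\]
is a matching saturating all of $L=A\cup C$; thus $|M^{*}|=|L|=k+d-1$, which is maximum since no matching exceeds $|L|$. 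Hence on this instance the competitive ratio is $|M|/|M^{*}|=\tfrac{k}{k+d-1}$, as claimed.

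\textbf{Main obstacle.} There is no substantive difficulty; the one point requiring care is the role of the hypothesis $k\ge d-1$ in building $M^{*}$: each $c_\ell$ has all its neighbors in $R_0$, so saturating $C$ forces $d-1$ distinct vertices of $R_0$ to be devoted to $C$, which in turn forces the $d-1$ corresponding vertices of $A$ onto their private pad vertices — and this is possible precisely because $|R_0|=k\ge d-1$. (For $k<d-1$ the target value $\tfrac{k}{k+d-1}$ lies below $\tfrac12$, hence cannot be attained by any maximal matching, which is why the statement is restricted to $k\ge d-1$.)
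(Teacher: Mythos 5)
Your construction is essentially the paper's: $k$ ``lucky'' advertisers each seen by exactly one of the first $k$ ad slots, the $d-1$ ``unlucky'' advertisers forming a $K_{k,d-1}$ with those ad slots, greedy (under adversarial tie-breaking/arrival order) matching the lucky ones, and degree-one pad slots arriving afterwards to raise offline degrees while letting the optimum rematch the lucky advertisers and saturate all of $L$. The argument is correct as written for every pair with $k\geq 2$ (and trivially for $d=1$).

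The one concrete flaw is the boundary pair $(k,d)=(1,2)$, which the theorem's hypothesis $k\geq d-1$ does cover: you insisted on exactly $k-1$ pads per $a_i$ so that left degrees are exactly $k$, and for $k=1$ this leaves $P=\emptyset$. Then your $M^{*}$ uses the nonexistent vertex $p_{1,1}$, and in fact your graph degenerates to the path $a_1$--$r_1$--$c_1$, whose maximum matching has size $1$, so the ratio is $1$ rather than the claimed $\tfrac{k}{k+d-1}=\tfrac12$. The fix is trivial and is exactly the flexibility the paper keeps: the $(k,d)$-bound only requires offline degree \emph{at least} $k$, so give each $a_i$ $\max\{k-1,1\}$ pads (or, as in the paper, simply require the later degree-one slots to both raise degrees to at least $k$ and allow the first $k$ advertisers to be rematched); with that one-line change your proof covers all $k\geq d-1$. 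A cosmetic point: when $r_j$ arrives its free neighbors include the $c_\ell$'s as well as $a_j$, so greedy producing $M$ relies on adversarial tie-breaking among equal bids --- which is fine, since the statement concerns maximal matchings, but you should say so rather than call $a_j$ the unique free neighbor.
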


\begin{table}[h]
\begin{wrapfigure}{r}{0.3\textwidth}
	\begin{center}
		\capstart
		\vspace{-0.5cm}
		\includegraphics[scale=0.45]{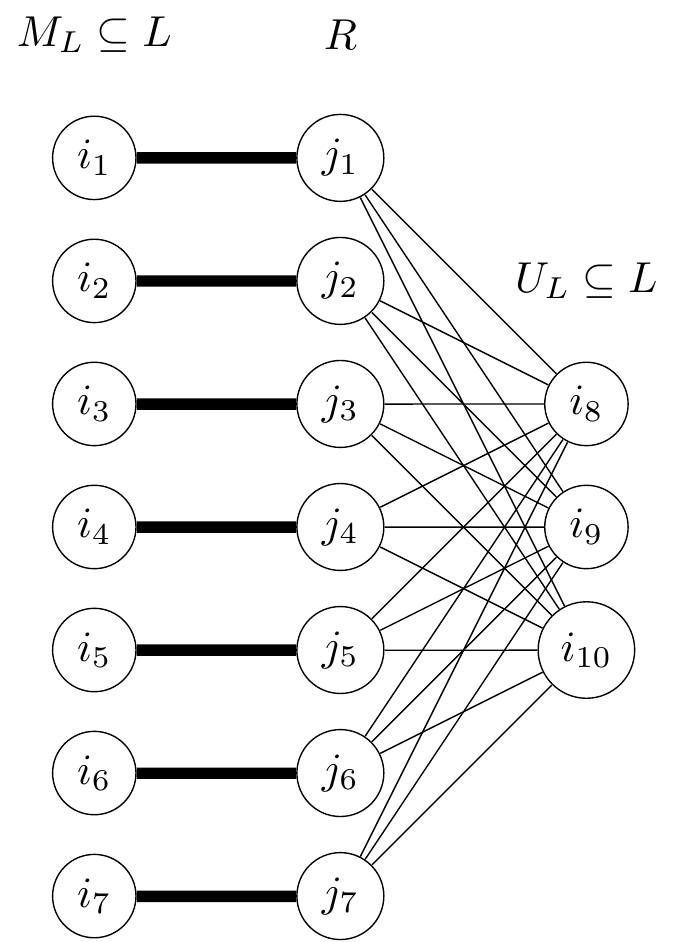}
		\caption{The graph and matching after the arrival of the first $k$ ad slots, for $k=7$ and $d=4$. Matching edges are marked in \textbf{bold}.}
		\label{fig:tight_maximal}
	\end{center}
\end{wrapfigure}

\vspace{-0.4cm}

\emph{Proof.} The tight example, along with a poor choice of matching, is defined as follows: The offline side contains $k+d-1$ advertisers. The first $k$ ad slots by order of arrival $j_1,j_2,\dots,j_k$ each have degree exactly $d$, with the $t$-th ad slot $j_t$ having as neighbors the $t$-th advertiser $i_t$ (to which it is matched), as well as the last $d-1$ advertisers,  $i_{k+1},i_{k+2},\dots,i_{k+d-1}$. After the first $k$ ad slots arrive all $d-1$ unmatched advertisers have degree exactly $k$. (By this stage the ad slots and the last $d-1$ advertisers form a copy of $K_{k,d-1}$. See Figure \ref{fig:tight_maximal}.) The following ad slots are used to increase the degree of the first $k$ advertisers to at least $k$, while guaranteeing that the first $k$ advertisers can be simultaneously matched to the last ad slots (for example, by having all the latter ad slots have degree one).~The resulting graph is a $(k,d)-$graph with $k$ of its advertisers matched for which all $k+d-1$ advertisers can be matched simultaneously. \qed
\end{table}

For any $R_{\max}$ a unit fraction, gluing $1/R_{\max}$ copies of the above tight example at the advertisers, with each advertiser having a budget of $1/R_{\max}$, yields an equal-bid ad allocation instance and greedy allocation for which the same $\frac{k}{k+d-1}$ performance holds, proving tightness of our analysis for equal-bid allocations. We now state a theorem implying our analysis' tightness for \textsc{greedy} in general ad allocations. For a proof of this theorem, see Appendix \ref{sec:greedy-hard}.

\begin{thm}
	For all $k\geq d-1$ and $R_{\max}\leq \frac{1}{2}$ there exist $(k,d)-$graphs $G$ and greedy allocations achieving a competitive ratio at most $\frac{(1-R_{\max})\cdot k}{k+(d-1)\cdot(1-R_{\max})}$ on $G$.
\end{thm}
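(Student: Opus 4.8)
The plan is to build on the tight matching example and the equal-bids gluing above, introducing bid heterogeneity so that \textsc{greedy} strands (close to) an $R_{\max}$-fraction of every advertiser it serves while the offline optimum still saturates every budget. A genuinely new construction is needed: in the equal-bids instance each advertiser's residual stays an integer multiple of the common bid, so \textsc{greedy} never leaves a sub-bid sliver unspent and only loses the $d-1$ never-matched advertisers, giving $k/(k+d-1)$, which is strictly larger than $\frac{(1-R_{\max})k}{k+(d-1)(1-R_{\max})}$. The remaining loss must come from stranded budget, so the construction uses the maximum bid value $R_{\max}B$ together with a vanishing ``top-off'' scale $\delta$, and it exploits both an adversarial arrival order and an adversarial tie-breaking rule, both of which ``a greedy allocation'' is free to choose.

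Concretely, for $1/R_{\max}=m$ a unit fraction I would take $B=m$, $a:=km/(m-1)$ \emph{light} advertisers and $d-1$ \emph{heavy} advertisers, all of budget $B$ (using several glued copies and an $o(1)$ slack when $(m-1)\nmid km$). Ad slots arrive in three waves. Wave one: $a(m-1)$ bids of the maximum value $R_{\max}B=1$, grouped so that $m-1$ of them are incident to each light advertiser and each is \emph{also} incident to all $d-1$ heavy advertisers (degree exactly $d$); the adversary routes each to its light advertiser, so every light advertiser reaches spend $m-1=(1-R_{\max})B$, and — crucially — each heavy advertiser then sees incident bid-mass $a(m-1)R_{\max}B=km=kB$, so the $(k,d)$-bound holds for it. (This is exactly where $R_{\max}\le\tfrac12$ is used: it forces $\lfloor(1-R_{\max})/R_{\max}\rfloor\ge1$, so the ``cheap fill'' of every light advertiser already carries a maximum-sized bid visible to the heavy side.) Wave two: one $\delta$-bid private to each light advertiser, pushing its spend to $(1-R_{\max})B+\delta$, i.e.\ residual just \emph{below} $R_{\max}B$. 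Wave three: enough private maximum-sized bids on each light advertiser to lift its own bid-mass to $kB$; \textsc{greedy} can take none of these and strands them. The optimum instead saturates each light advertiser with the private wave-three bids and each heavy advertiser with $m$ of the shared wave-one bids (affordable since $a(m-1)=km\ge(d-1)m$, using $k\ge d-1$), discarding the $\delta$-bids.

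Everything then reduces to bookkeeping: ad-slot degrees are at most $d$; every advertiser has bid-mass at least $kB$; the optimum's revenue is $(a+d-1)B=\sum_iB_i$, which matches the value of the dual solution that Algorithm~\ref{alg:adwords-greedy} produces, so $\mathrm{OPT}=\sum_iB_i$; and \textsc{greedy}'s revenue is $a\big((1-R_{\max})B+\delta\big)$, so $\mathrm{ALG}/\mathrm{OPT}\to\frac{a(1-R_{\max})}{a+d-1}=\frac{(1-R_{\max})k}{k+(d-1)(1-R_{\max})}$ as $\delta\to0$, by the choice $a=k/(1-R_{\max})$. The only real obstacle — and the reason the bound is realized as the limit of a family rather than by one graph — is that \textsc{greedy} is blocked at an advertiser only when its residual is \emph{strictly} below $R_{\max}B$, so the strand is always a hair below $R_{\max}B$; one must also check that the adversarial arrival order and tie-breaking genuinely force \textsc{greedy} down this path (each light advertiser absorbing exactly its $m-1$ shared bids before any heavy advertiser is touched, then its $\delta$-bid, then facing only infeasible bids). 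The case $d=1$ (a single advertiser, a $\delta$-prefix followed by many maximum-sized degree-$1$ bids) is a clean warm-up isolating the $(1-R_{\max})$ factor, and the general $R_{\max}\le\tfrac12$ case is obtained by the same scheme with the shared maximum bids supplemented by a short $\delta$-prefix on the light side.
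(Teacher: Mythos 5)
Your unit-fraction construction is essentially correct and is, in spirit, the paper's own first construction: shared maximum bids that greedy (by adversarial tie-breaking) routes entirely to the ``served'' side, a tiny top-off that pushes each served advertiser's residual just below $R_{\max}B$, and then stranded private maximum bids, with OPT exhausting all budgets (the paper's lucky/unlucky advertisers play the role of your light/heavy ones, with unit budgets instead of budget $m$). The small issues there are fixable: you should give each light advertiser at least $m$ private wave-three slots (your ``just enough to reach bid-mass $kB$'' count is too small when $k=1$, e.g.\ $k=1,d=2,m=2$ leaves OPT unable to saturate the lights), and the integrality of $a$ and the $\delta\to 0$ limit are handled as you say (the paper's $\epsilon$ plays the same role as your $\delta$).

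The genuine gap is the general case $R_{\max}\leq\frac12$ with $1/R_{\max}\notin\mathbb{Z}$, which the theorem requires and which you dismiss with one clause (``the same scheme with \ldots a short $\delta$-prefix on the light side''). That patch does not go through, and for $k=d-1$ no instance of your scheme's type can work. Write $\rho=B-\lfloor 1/R_{\max}\rfloor R_{\max}B>0$. Since a heavy advertiser that greedy never serves is always feasible, every slot adjacent to a heavy advertiser is matched by greedy to some light advertiser at a light-bid at least as large as its heavy-bids; hence each heavy's bid-mass, which must be at least $kB_h$, is charged against the lights' total greedy spend $a\bigl((1-R_{\max})B+\delta\bigr)$, and achieving the target ratio forces $kB_h\approx(1-R_{\max})aB$, i.e.\ essentially \emph{all} of the lights' greedy intake must consist of shared slots with heavy-bids equal to their light-bids. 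On the other hand, every slot greedy declines has bid strictly above $R_{\max}B-\delta$, so for OPT to fill a light advertiser exactly to $B$ it must reuse at least $\rho$ worth of that light's greedy-matched (hence shared) slots; each heavy must likewise be filled exactly to $B_h$ using only shared slots. Totalling, OPT needs $a\rho+(d-1)B_h$ out of a shared pool of size $a(1-R_{\max})B+a\delta\approx\frac{k}{d-1}(d-1)B_h+a\rho\cdot\frac{?}{}$ -- concretely, the requirement reduces to $(1-R_{\max})B\bigl(1-\frac{d-1}{k}\bigr)\geq\rho-O(\delta)$, which fails for $k=d-1$ since $\rho>0$. So when $1/R_{\max}$ is not an integer you cannot simultaneously keep the heavies at zero, cap every light at $(1-R_{\max})B+\delta$, and let OPT exhaust every budget. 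The paper avoids this entirely by a different device: it writes $R_{\max}$ as a convex combination $w_a\cdot\frac1a+w_b\cdot\frac1b$ of two unit fractions and glues, at the advertisers, scaled copies of the unit-fraction instance with per-copy budgets $w_a/n$ and $w_b/n$; within each copy everything is exact, greedy strands a $\frac1a$- or $\frac1b$-fraction of the copy budget, and the stranded amounts average to exactly $R_{\max}$ per advertiser. You need this (or some comparable averaging idea), not a $\delta$-prefix, to cover non-unit-fraction $R_{\max}$.
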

\vspace{-0.4cm}
\section{Optimal Vertex-Weighted Matching on $(k,d)$-graphs}
\label{sec:high_deg}
The previous section shows our analysis of \textsc{greedy} is tight, though for a particular(ly bad) instantiation of the input, and more importantly of the algorithm.~The family of tight examples suggests the following improved algorithm: match every arriving ad slot to an unmatched neighbor of highest degree.~This algorithm, which we call \textsc{high-degree}, is given below.~The intuition behind it, substantiated by the above examples, is that unmatched advertisers with higher degree may have fewer chances to be matched later.~This approach fares better on the above examples (actually yielding an \emph{optimal} solution), but can it do better than \textsc{greedy} for \emph{all} $(k,d)-$graphs?~We answer this question in the affirmative, proving a lower bound with \emph{exponentially} smaller loss. In Section \ref{sec:upper-bounds} we prove a matching lower bound, implying the algorithm's \emph{optimality}.

\begin{algorithm}[h]
\caption{\textsc{high-degree}}
\label{alg:high_deg}
\begin{algorithmic}[1]
\medskip
\FORALL{$j\in R$}
	\IF{$j$ has an unmatched neighbor}
		\STATE match $j$ to unmatched neighbor of highest degree.
	\ENDIF
\ENDFOR
\end{algorithmic}
\end{algorithm}

\subsection{Analysis of \textsc{high-degree}}
\begin{thm}\label{thm:high_lb}
Algorithm \textsc{high-degree} is
$1-(1-\frac{1}{d})^k$ competitive for all $(k,d)-$graphs.
\end{thm}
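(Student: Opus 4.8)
The plan is to analyze \textsc{high-degree} via a primal-dual / dual-fitting argument, following the overall template used for \textsc{greedy} but with a cleverer dual assignment that exploits the degree-based tie-breaking. For the vertex-weighted case all weights are the budgets $B_i$, so it suffices to show that the matching $M$ produced by \textsc{high-degree} has total weight at least $\bigl(1-(1-\tfrac1d)^k\bigr)\sum_{i\in L} B_i$ (for the unweighted problem this is just $|M|\geq \bigl(1-(1-\tfrac1d)^k\bigr)|L|$), and then compare against the optimum which is at most $\sum_{i\in L} B_i$. First I would set up dual variables $z_i$ for $i\in L$ and $y_j$ for $j\in R$, with $y_j=0$ throughout (this is the ``we make no use of the ad slots' dual variables'' remark in the techniques section), so the whole argument reduces to controlling the $z_i$'s. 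The dual constraint on edge $(i,j)$ becomes $B_i z_i \geq B_i$, i.e.\ we must eventually certify $z_i=1$ for every advertiser $i$ that is adjacent to \emph{some} ad slot — but crucially only upon termination, not during the run.

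The heart of the matter is the bid-scaling / charging scheme hinted at in Section~\ref{subsec:techniques}: scale by $\bigl(\tfrac{d}{d-1}\bigr)^{u}$ where $u$ counts ``unused opportunities.'' Concretely, when an ad slot $j$ arrives and is matched to its highest-degree unmatched neighbor $i$, I would charge primal profit $B_i$ and distribute dual increments to all currently-unmatched neighbors $i'$ of $j$ (including $i$): advertiser $i$ gets its $z_i$ pushed up toward $1$, while each still-unmatched neighbor $i'\neq i$ gets a geometrically-weighted increment. The key invariant to maintain is that an advertiser $i$ which has been ``passed over'' $t$ times (i.e.\ $t$ of its incident ad slots have arrived and been matched elsewhere) has accumulated $z_i \geq 1-(1-\tfrac1d)^{t}$; equivalently its ``deficit'' $(1-z_i)$ has been multiplied by at most $(1-\tfrac1d)$ each time it was passed over. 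Since \textsc{high-degree} matched those ad slots to neighbors of degree \emph{at least} $d(i)$, each such ad slot had at most $d$ unmatched neighbors at that moment, so the increment handed to $i$ can be taken to be a $\tfrac1d$-fraction of its current deficit while the total dual increment over all of $j$'s neighbors stays at most $B_i$ (here is where the degree bound $d(j)\le d$ and the ``highest degree'' rule combine). An advertiser incident to $k$ or more ad slots that is never matched therefore ends with $z_i\geq 1-(1-\tfrac1d)^{k}$; then the final clean-up step (analogous to Lines~\ref{line:prob}--\ref{line:prob_end}) bumps every such $z_i$ up to $1$ at a controlled extra cost. Dividing, $P/D \geq 1-(1-\tfrac1d)^k$.

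Two technical points need care. The first obstacle — and I expect this to be the main one — is that \textsc{high-degree} orders neighbors by degree in the \emph{whole} graph $G$, including ad slots that have not yet arrived, whereas the natural dual charging wants to talk about the number of \emph{unmatched} neighbors at arrival time. One must argue that when $j$ is matched to $i$, every other unmatched neighbor $i'$ of $j$ has $d(i')\le d(i)$, and that the ``budget'' $B_i$ of primal profit suffices to cover one unit increment to $i$ plus a $\tfrac1d$-deficit increment to each of the $\le d-1$ other unmatched neighbors — this requires relating the increments to $B_i$ rather than to the (possibly larger) $B_{i'}$, which works precisely because we scale $i'$'s increment by its own deficit times $B_{i'}/d$ and re-normalize; I'd track the invariant in terms of the quantities $B_i(1-z_i)$. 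The second point is bookkeeping at termination: an advertiser with fewer than $k$ incident ad slots does not exist in a $(k,d)$-graph, so every advertiser with a neighbor is passed over or matched enough times, and the clean-up cost telescopes into the claimed bound. Finally, the ``gains at least a $1-(1-\tfrac1d)^k$ fraction of the total sum of weights'' statement is exactly the inequality $P \geq \bigl(1-(1-\tfrac1d)^k\bigr)\sum_i B_i$ that falls out of the same computation, since $\sum_i B_i$ is itself a valid dual solution's value, giving the stronger ``absolute'' guarantee rather than merely a ratio against OPT.
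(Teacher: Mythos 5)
Your overall template (advertiser duals only, $y_j=0$, feasibility required only at termination, per-arrival charging against the primal gain) is indeed the paper's, but the core charging rule is inverted, and the step you flag as the crux fails. You increment each passed-over neighbor $i'$ by a $\frac{1}{d}$-fraction of its \emph{deficit} $(1-z_{i'})$ and claim the total dual increment per arrival is at most $B_i$. This is false already at the first arrival: all deficits equal $1$, so completing the matched advertiser costs $B_i(1-z_i)=B_i$ and the other (up to $d-1$) unmatched neighbors cost a further $\frac{1}{d}\sum_{i'}B_{i'}$, so the per-step dual-to-primal ratio can approach $1+\frac{d-1}{d}$, not $1$. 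Worse, the \textsc{high-degree} rule gives you no leverage here: under your invariant the matched advertiser has the \emph{smallest} deficit among $j$'s unmatched neighbors, so the dominant charges come from the other neighbors and cannot be compared to $B_i$ or to $(1-z_i)$. Carrying your accounting through (per-step cost at most $(1-z_i)+\frac{d-1}{d}$, plus a cleanup of $(1-\frac1d)^k$ per unmatched advertiser, against final dual value $\sum_i B_i$) yields only roughly a $\frac{d}{2d-1}$ guarantee, far from $1-(1-\frac{1}{d})^k$.

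The paper's scaling is the reverse of yours: the increment to a passed-over $i'$ is proportional to $(z_{i'}+C)$, i.e.\ it \emph{grows} like $(\frac{d}{d-1})^{d(i')}$ (Algorithm \ref{alg:vw-high-pd}), with $C=1/\bigl(\bigl(\frac{d}{d-1}\bigr)^k-1\bigr)$, so unmatched advertisers satisfy $z_i=C\bigl(\bigl(\frac{d}{d-1}\bigr)^{d(i)}-1\bigr)$ and reach exactly $1$ at degree $k$ (no cleanup step is needed). The point of choosing the highest-degree (hence highest-$z$) neighbor is then that every other neighbor's charge is dominated by the matched one's: the per-arrival dual cost is at most $(1-z_i)+(d-1)\cdot\frac{1}{d-1}(z_i+C)=1+C$ per unit of primal gain, the $z_i$ terms cancelling — precisely the cancellation your deficit-based scheme lacks. (The paper's alternative potential proof in Section \ref{sec:potential} uses the same growing exponential, $\phi=\sum_{i\in U_L}(\frac{d}{d-1})^{d(i)}$, and even notes in a footnote that the more ``natural'' bookkeeping fails.) A minor additional remark: your concern about \textsc{high-degree} using degrees ``in the whole graph'' is moot — the algorithm and the analysis use the current degree, i.e.\ the number of already-arrived neighbors, which is exactly what the invariant tracks.
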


\begin{cor}\label{cor:regular}
On $d$-regular graphs algorithm \textsc{high-degree} is $1-(1-\frac{1}{d})^d$ competitive.
\end{cor}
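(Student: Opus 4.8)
The plan is to derive Corollary \ref{cor:regular} directly from Theorem \ref{thm:high_lb}, since it is essentially a specialization. First I would observe that a $d$-regular bipartite graph $G=(L,R,E)$ is in particular a $(d,d)$-bounded graph in the sense of the paper's definition: every left vertex $i\in L$ has degree $d(i)=d\geq d$, and every right vertex $j\in R$ has degree $d(j)=d\leq d$. Hence Theorem \ref{thm:high_lb} applies with $k=d$, and algorithm \textsc{high-degree} is $1-(1-\frac{1}{d})^{d}$ competitive on $G$. As this holds for every $d$-regular graph, the corollary follows.

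To make the claim concrete I would also record that on a $d$-regular bipartite graph the offline optimum is a perfect matching: counting edges gives $d\cdot|L|=d\cdot|R|$, so $|L|=|R|$, and by Hall's (equivalently K\"onig's) theorem a $d$-regular bipartite graph has a perfect matching. Thus the guarantee reads that \textsc{high-degree} matches at least a $1-(1-\frac1d)^{d}$ fraction of the vertices of each side, a quantity decreasing to $1-\frac1e$ as $d\to\infty$. It is worth noting that the argument only uses $(d,d)$-boundedness and not exact regularity: the analysis behind Theorem \ref{thm:high_lb} treats $k$ merely as a lower bound on the left degrees, so the same bound holds for any graph in which the minimum left degree is at least the maximum right degree.

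Since this is an immediate corollary of an already-established theorem, there is no real obstacle; the only thing to check is precisely the reduction above, namely that a $d$-regular graph meets the hypotheses of Theorem \ref{thm:high_lb} with the value $k=d$, which it does because its minimal left degree is exactly $d$ and its maximal right degree is exactly $d$.
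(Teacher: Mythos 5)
Your proof is correct and matches the paper's (implicit) argument: the corollary follows by noting a $d$-regular bipartite graph is a $(d,d)$-bounded graph and applying Theorem \ref{thm:high_lb} with $k=d$. The additional observations about the existence of a perfect matching and the limit $1-\frac{1}{e}$ are accurate but not needed for the claim.
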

This is the first result for maximum online matching in regular graphs in the adversarial setting, beating the $1-\frac{1}{e}$ ``barrier" \emph{deterministically}.

Theorem \ref{thm:high_lb} can be proven directly (see \ref{sec:potential}), but in order to set the groundwork for proofs of our more general results, we generalize this algorithm and rewrite it as a primal-dual algorithm, below (\ref{alg:vw-high-pd}). The constant $C$ will be chosen during the analysis.

\begin{algorithm}
	\caption{Vertex-Weighted \textsc{high-degree} (Primal-Dual Formulation)}
	\label{alg:vw-high-pd}
	\begin{algorithmic}[1]
		\medskip
		\STATE \textbf{Init:} set $z_i\leftarrow 0\  \text{ for all } i\in L$.
		\FORALL{$j\in R$}
		\IF{$j$ has an unmatched neighbor $i$}
		\STATE match $j$ to an unmatched neighbor $i$ maximizing $(z_i+C)\cdot b_{ij}$.
		\STATE set $x_{ij}\leftarrow 1$.
		\STATE set $z_i\leftarrow 1$.
		\STATE set $z_{i'}\leftarrow \min\{1,z_{i'}\cdot\big(\frac{d}{d-1}\big) +\frac{1}{d-1}\cdot C\}$ for every feasible neighbor of $j$, $i'\neq i$.
		\ENDIF
		\ENDFOR
	\end{algorithmic}
\end{algorithm}

\begin{thm}\label{thm:vw-high-deg}
Algorithm \ref{alg:vw-high-pd} generalizes \textsc{high-degree} and is $1-\big(1-\frac{1}{d}\big)^k$ competitive. Moreover, it gains revenue at least $(1-\big(1-\frac{1}{d}\big)^k)\cdot \big(\sum_i B_i\big)$.
\end{thm}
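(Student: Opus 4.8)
The plan is to show Algorithm~\ref{alg:vw-high-pd} with the constant $C=\big(1-\frac{1}{d}\big)^k\big/\big(1-\big(1-\frac{1}{d}\big)^k\big)$ maintains the invariant that the primal value $P$ and dual value $D$ satisfy $P \geq \big(1-\big(1-\frac{1}{d}\big)^k\big)\cdot D$ at every step, and that upon termination $D$ is a feasible dual solution with $\sum_i z_i B_i \geq \sum_i B_i$; by weak LP duality this yields $P \geq \big(1-\big(1-\frac{1}{d}\big)^k\big)\cdot \mathrm{OPT}$ and also $P \geq \big(1-\big(1-\frac{1}{d}\big)^k\big)\cdot\big(\sum_i B_i\big)$, which is the stronger ``total sum of weights'' bound claimed. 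The first easy step is to check the reduction: when $C$ is a fixed constant and all $b_{ij}=B_i$, maximizing $(z_i+C)b_{ij}$ over unmatched neighbors reduces to maximizing $(z_i+C)B_i$; I would verify that on unweighted $(k,d)$-graphs this coincides with picking the highest-degree unmatched neighbor, by arguing that $z_i$ is a monotone function of the number of ``missed opportunities'' and hence a higher-degree unmatched vertex has accumulated a larger $z_i$ --- this is exactly the $\big(\frac{d}{d-1}\big)^u$ bid-scaling described in Section~\ref{subsec:techniques}, so \textsc{high-degree} is recovered and Theorem~\ref{thm:high_lb} follows from Theorem~\ref{thm:vw-high-deg}.

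Next I would establish dual feasibility at termination. The dual constraint for edge $(i,j)$ with $b_{ij}=B_i$ is $z_i + y_j/B_i \geq 1$; since we never use the $y_j$ variables ($y_j=0$), this requires $z_i=1$ for every matched edge and, more delicately, $z_i=1$ for \emph{every} $i$ that ever appeared as a neighbor of some arriving ad slot. The key claim is that any advertiser $i$ which is never matched has $z_i$ driven to $1$ by its $d(i)\geq k$ arriving neighbors: starting from $z_i=0$, each arrival where $i$ is unmatched applies $z_i \mapsto z_i\cdot\frac{d}{d-1} + \frac{C}{d-1}$ (before the $\min$ with $1$). Unrolling this affine recursion for $k$ steps and substituting the chosen $C$ shows $z_i$ reaches $1$; this is the calculation $\big(1+C\big)\big(1-\big(1-\frac1d\big)^{k}\big)=1$ rearranged. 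For advertisers that \emph{do} get matched, $z_i$ is set to $1$ immediately, so feasibility holds. Summing $z_i=1$ over all $i$ that are neighbors of arriving slots and noting isolated advertisers contribute nothing to $\mathrm{OPT}$ gives the $\sum_i B_i$ lower bound on $D$.

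The main work, and the step I expect to be the main obstacle, is the primal-to-dual ratio claim~(b): for each arriving ad slot $j$, letting $\Delta P$ and $\Delta D$ be the increments, show $\Delta D \leq \frac{1}{1-(1-1/d)^k}\cdot\Delta P$, equivalently $\Delta D \leq (1+C)\cdot\Delta P$. When $j$ has an unmatched neighbor it is matched to the maximizer $i$, contributing $\Delta P = B_i$; the dual cost is $\Delta D = B_i\cdot(1-z_i^{\mathrm{old}})$ from raising $z_i$ to $1$, plus $\sum_{i'\in F_j\setminus\{i\}} B_{i'}\cdot\big(\frac{1}{d-1}(z_{i'}^{\mathrm{old}}+C)\big)$ from the at most $d-1$ other feasible neighbors (ignoring the $\min$, which only helps). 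The crux is to bound this using the greedy choice of $i$: since $i$ maximizes $(z_i+C)B_i$, for every competitor $i'$ we have $(z_{i'}+C)B_{i'}\leq (z_i+C)B_i$, so each of the $\leq d-1$ off-diagonal terms is at most $\frac{1}{d-1}(z_i+C)B_i$, giving $\Delta D \leq B_i(1-z_i) + (z_i+C)B_i = (1+C)B_i = (1+C)\Delta P$. The delicate points to handle carefully are: the $\min\{1,\cdot\}$ truncations (which can only decrease $\Delta D$, hence are harmless for the upper bound but must be invoked explicitly), the fact that ``feasible neighbor'' for matching equals ``unmatched neighbor'' in the vertex-weighted case so $F_j\setminus\{i\}$ has size $\leq d-1$, and the case where $j$ has no unmatched neighbor (then $\Delta P=\Delta D=0$ trivially). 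Finally I would assemble: feasibility + the termination bound $D\geq\sum_i B_i\geq \mathrm{OPT}$ + the step-wise ratio telescoped over all $j$ gives $P\geq (1+C)^{-1}D\geq (1+C)^{-1}\sum_i B_i = \big(1-\big(1-\tfrac1d\big)^k\big)\sum_i B_i$, completing the proof of Theorem~\ref{thm:vw-high-deg}.
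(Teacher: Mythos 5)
Your proposal is correct and follows essentially the same route as the paper's proof: the same constant (your $C=\big(1-\frac{1}{d}\big)^k\big/\big(1-\big(1-\frac{1}{d}\big)^k\big)$ equals the paper's $1/\big(\big(\frac{d}{d-1}\big)^k-1\big)$), the same per-arrival bound $\Delta D\leq(1+C)\Delta P$ via the greedy choice of match, and the same observation that unmatched advertisers' dual variables follow the affine recursion and reach $1$ after $k$ updates, yielding $D=\sum_i B_i$ at termination and hence $P\geq\frac{1}{1+C}\sum_i B_i$.
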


\begin{proof} 
We rely on the following observation, verifiable by induction:
All unmatched advertisers $i$ satisfy $z_i= C\cdot(\big(\frac{d}{d-1}\big)^{d(i)}-1)$.
Hence Algorithm \ref{alg:vw-high-pd} matches each ad slot $j$ to an unmatched neighbor $i$ maximizing $b_{ij}\cdot C\cdot(\big(\frac{d}{d-1}\big)^{d(i)}-1)$. For the unweighted problem, $b_{ij}=1$. By monotonicity of exponentiation, picking such $i$ is tantamount to picking an advertiser of highest degree. We proceed to bound the algorithm's gain.

Let $j\in R$ be some ad slot matched to $i$. The incurred change to the primal profit equals $\Delta P=b_{ij}$. By our choice of $j$'s match, the change to the dual cost satisfies
\[
\begin{array}{llr}
\Delta D & = (1-z_i)\cdot b_{ij}+\sum_{i'\in N(j)\setminus\{i\}}\left(\left(\frac{1}{d-1}\right)\cdot (z_{i'}+C)\cdot b_{i'j}\right) \\

& \leq (1-z_i)\cdot b_{ij} + (d-1)\cdot\left(\frac{1}{d-1}\right)\cdot(z_i+C)\cdot b_{ij}\\

& = (1+C)\cdot b_{ij}.\\
\end{array}
\]

Given dual feasibility, the above would imply a competitive ratio of $1/(1+C)$. Hence, we choose the minimal $C$ ensuring $z_i=1$ by the algorithm's end for all advertisers $i$ (matched and unmatched alike). Recall all unmatched advertisers $i$ satisfy $z_i= C\cdot((\frac{d}{d-1})^{d(i)}-1)$. 
As such $i$ have degree at least $k$ by the algorithm's end (but possibly no higher), the minimal $C$ ensuring $z_i=1$ is $C=1/((\frac{d}{d-1})^k-1)$. As the dual solution has $z_i=1$ for all $i$ by the algorithm's termination, the dual cost is exactly $D=\sum_{i\in L}B_i$. Consequently, the primal gain satisfies $P\geq \frac{1}{1+C}\cdot \big(\sum_i B_i\big)$. The theorem follows.
\end{proof}

The above algorithm implies structural Corollary \ref{cor:struct} and the following corollary.

\begin{cor}\label{cor}
	For $(k,d)$-graphs with $k\geq d\cdot \ln {|L|}$, by integrality of number of vertices matched, \textsc{high-degree} successfully matches \emph{all} of $L$, obtaining a \emph{maximum} matching.
\end{cor}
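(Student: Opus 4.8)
The plan is to deduce this purely from the \emph{absolute} guarantee of Theorem \ref{thm:vw-high-deg} together with a rounding argument; no new combinatorics is needed. Specialize to the unweighted matching problem, where $b_{ij}=B_i=1$ for every edge. Then each advertiser is matched at most once, so the primal value $P$ produced by \textsc{high-degree} is exactly the number of matched advertisers -- a non-negative integer, and of course $P\le|L|$ since a matching saturates at most $|L|$ left vertices. By Theorem \ref{thm:vw-high-deg} (equivalently, the unweighted statement of Theorem \ref{thm:high_lb}), this matching has size $P\ge\big(1-(1-\tfrac{1}{d})^k\big)\cdot\sum_{i}B_i=\big(1-(1-\tfrac1d)^k\big)\cdot|L|$.

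Next I would bound the loss term using the hypothesis $k\ge d\ln|L|$. For $d\ge 2$ one has the \emph{strict} inequality $(1-\tfrac1d)^k<e^{-k/d}$ (already recorded in the statement of Theorem \ref{thm:det_better}), and $k\ge d\ln|L|$ gives $e^{-k/d}\le e^{-\ln|L|}=1/|L|$; hence $(1-\tfrac1d)^k<1/|L|$ and therefore $P>\big(1-\tfrac1{|L|}\big)\cdot|L|=|L|-1$. The degenerate case $d=1$ is immediate: there $(1-\tfrac1d)^k=0$, so the bound already reads $P\ge|L|$ (alternatively, each ad slot then has a unique neighbor, which \textsc{high-degree} matches whenever it is free, so all of $L$ is matched).

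Finally, since $P$ is an integer with $P>|L|-1$, we get $P\ge|L|$, and combined with $P\le|L|$ this forces $P=|L|$: \textsc{high-degree} saturates $L$. A matching that saturates all of $L$ is necessarily a \emph{maximum} matching, and a fortiori the offline optimum also saturates $L$, so \textsc{high-degree} is optimal on such inputs -- this is exactly the structural phenomenon of Corollary \ref{cor:struct} with $c=|L|$, now realized algorithmically and online. There is no genuine obstacle here; the only points needing a little care are (i) invoking the \emph{absolute} lower bound $P\ge\big(1-(1-\tfrac1d)^k\big)|L|$ rather than merely the competitive ratio, so one avoids a separate argument about $\mathrm{OPT}$, and (ii) using the \emph{strict} inequality $(1-\tfrac1d)^k<e^{-k/d}$ so that the rounding step $P>|L|-1\Rightarrow P=|L|$ actually goes through (the non-strict bound $e^{-k/d}\le 1/|L|$ alone would only yield $P\ge|L|-1$).
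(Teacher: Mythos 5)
Your proof is correct and is essentially the paper's intended argument: the corollary follows from the absolute guarantee $P\geq\big(1-(1-\tfrac{1}{d})^k\big)\cdot|L|$ of Theorems \ref{thm:high_lb}/\ref{thm:vw-high-deg}, the strict bound $(1-\tfrac{1}{d})^k<e^{-k/d}\leq 1/|L|$, and integrality of the number of matched vertices. Your explicit attention to the strictness of the inequality (and the trivial $d=1$ case) is exactly the care the rounding step requires, so nothing is missing.
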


We can extend our analysis to handle the possible existence of \emph{outlying} advertisers $i$, that do not satisfy $\sum_{j} b_{ij}\geq k\cdot B_i$, and so may not satisfy $z_i=1$, ruining dual feasibility. Let $S\subseteq L$ be the set of outlying advertisers, and assume $\sum_{i\in S} B_i \leq \alpha \cdot \sum_{i\in L} B_i$. As $z_i=1$ for all $i\not\in S$, we have $D\geq (1-\alpha)\cdot \sum_{i\in L} B_i$, implying the following theorem.

\begin{thm}[Outliers]\label{thm:outliers_vw}
	Let $S\subseteq L$ be the set of outlying advertisers, and $\alpha$ be such that $\sum_{i\in S} B_i \leq \alpha \cdot \sum_{i\in L}$. Then Algorithm \ref{alg:vw-high-pd} gains at least $(1-\alpha)\cdot (1-\big(1-\frac{1}{d}\big)^k)\cdot (\sum_i B_i)$, and in particular is $(1-\alpha)\cdot (1-\big(1-\frac{1}{d}\big)^k)$-competitive.	
\end{thm}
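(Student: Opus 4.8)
The plan is to reuse the analysis of Theorem~\ref{thm:vw-high-deg} almost verbatim, isolating the single place where $(k,d)$-boundedness was used. That analysis of Algorithm~\ref{alg:vw-high-pd} rested on two facts: (i) a purely local inequality $\Delta D \le (1+C)\cdot\Delta P$ for every arriving ad slot, with $C = 1/\big((\tfrac{d}{d-1})^{k}-1\big)$; and (ii) the claim that $z_i=1$ for every advertiser at termination, which turned the running bound $D \le (1+C)\cdot P$ into $P \ge \tfrac{1}{1+C}\sum_i B_i = \big(1-(1-\tfrac1d)^k\big)\sum_i B_i$. Since the algorithm never modifies any $y_j$, we have $D = \sum_{i\in L} z_i B_i$ throughout, so (i) still gives $D \le (1+C)\cdot P$; and (i) depends only on the degree bound $d(j)\le d$ and the greedy rule, so it is entirely unaffected by the presence of outliers. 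Only (ii) degrades: an outlying advertiser $i\in S$ may terminate with final degree below $k$, hence with $z_i<1$.

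The first step, then, is to replace (ii) by a lower bound on $D$ robust to outliers. Invoking the induction from the proof of Theorem~\ref{thm:vw-high-deg} --- every unmatched advertiser $i$ has $z_i=\min\{1,\,C\cdot((\tfrac{d}{d-1})^{d(i)}-1)\}$ where $d(i)$ is its current degree --- we observe that a non-outlier $i\notin S$ satisfies $\sum_j b_{ij}\ge k B_i$, i.e.\ $d(i)\ge k$ in the vertex-weighted case, so at termination $C\cdot((\tfrac{d}{d-1})^{d(i)}-1)\ge C\cdot((\tfrac{d}{d-1})^{k}-1)=1$ and the cap pins $z_i=1$; matched advertisers have $z_i=1$ by definition. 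Hence $z_i=1$ for every $i\notin S$, and therefore
\[
D \;=\; \sum_{i\in L} z_i B_i \;\ge\; \sum_{i\notin S} B_i \;=\; \sum_{i\in L}B_i-\sum_{i\in S}B_i \;\ge\; (1-\alpha)\sum_{i\in L}B_i .
\]
Combining with $P \ge \tfrac{1}{1+C}D$ and the identity $\tfrac{1}{1+C}=1-(1-\tfrac1d)^k$ yields the revenue bound $P \ge (1-\alpha)\big(1-(1-\tfrac1d)^k\big)\sum_i B_i$; since the offline optimum is at most $\sum_i B_i$ (each advertiser contributes at most $B_i$), the stated competitive ratio follows.

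A point worth emphasizing --- and the only place where the argument departs conceptually from the base case --- is that weak duality is never used here: $D$ serves purely as a bookkeeping quantity tied to $P$ by the local inequality, so the loss of dual feasibility flagged before the theorem statement is harmless. The main (and essentially only) obstacle is thus the routine verification that $\Delta D\le(1+C)\Delta P$ truly makes no reference to which advertisers are outliers; this is immediate once one notes that already-matched neighbors of $j$ contribute zero to $\Delta D$ (their $z$-values are capped at $1$) and that the $z_{i'}$ of unmatched neighbors are controlled by the greedy maximization of $(z_i+C)b_{ij}$ exactly as in Theorem~\ref{thm:vw-high-deg}.
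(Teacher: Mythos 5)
Your proposal is correct and follows essentially the same route as the paper: the paper likewise observes that the per-arrival bound $\Delta D\leq(1+C)\cdot\Delta P$ is unaffected by outliers, that $z_i=1$ at termination for every $i\notin S$ (matched or of final degree at least $k$), hence $D\geq(1-\alpha)\sum_{i\in L}B_i$, and concludes $P\geq\frac{1}{1+C}D\geq(1-\alpha)\big(1-(1-\frac{1}{d})^k\big)\sum_i B_i$ with competitiveness following from $OPT\leq\sum_i B_i$. Your write-up is just a more detailed rendering of the paper's one-paragraph argument.
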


\subsection{Potential-based Analysis of \textsc{high-degree}}\label{sec:potential}
In this subsection we present a potential-based proof of Theorem \ref{thm:high_lb}. We note that this proof can easily be extended to provide alternative proofs of Theorems \ref{thm:vw-high-deg} and \ref{thm:outliers_vw}.

\begin{thm}\label{thm:strong-high-deg}
Algorithm \textsc{high-degree} achieves value at least
$\big(1-(1-\frac{1}{d})^k\big)\cdot |L|$ for all $(k,d)-$graphs $G=(L,R,E)$, and it is therefore $\big(1-(1-\frac{1}{d})^k\big)$-competitive.
\end{thm}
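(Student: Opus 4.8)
The plan is to track a potential function that certifies, after processing each ad slot, a lower bound on the number of matched left vertices in terms of the current degrees of the \emph{unmatched} left vertices. Concretely, I would assign to each currently-unmatched advertiser $i$ with current (revealed) degree $d(i)$ a weight $w_i = (1-\tfrac{1}{d})^{d(i)}$, and define the potential $\Phi = M + \sum_{i \text{ unmatched}} w_i$, where $M$ is the number of left vertices matched so far. The claim to prove by induction on the arrivals is $\Phi \geq |L|$ at all times; since this mirrors the primal-dual bookkeeping, the natural normalization is to note that initially each $i$ has degree $0$, so $\Phi = 0 + \sum_{i} 1 = |L|$, and the goal is that $\Phi$ never decreases.

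The inductive step analyzes the arrival of an ad slot $j$. If $j$ has no unmatched neighbor, its arrival only possibly increases revealed degrees of already-matched vertices, which does not affect $\Phi$ (matched vertices contribute only through $M$). If $j$ has an unmatched neighbor, \textsc{high-degree} matches $j$ to an unmatched neighbor $i^\star$ of highest degree; the change to $\Phi$ is then $\Delta\Phi = \underbrace{+1}_{M\text{ increases}} - w_{i^\star} + \sum_{i' \in N(j)\setminus\{i^\star\},\ i' \text{ unmatched}} \big(w_{i'}^{\text{new}} - w_{i'}^{\text{old}}\big)$, where for each such still-unmatched neighbor $i'$ its revealed degree goes up by one, so $w_{i'}^{\text{new}} = (1-\tfrac1d) w_{i'}^{\text{old}}$ and the bracketed term is $-\tfrac1d w_{i'}^{\text{old}}$. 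Since $j$ has at most $d$ neighbors, there are at most $d-1$ such $i'$, and by the highest-degree choice each satisfies $w_{i'}^{\text{old}} \leq w_{i^\star}$ (larger degree means smaller weight). Hence $\Delta\Phi \geq 1 - w_{i^\star} - \tfrac{d-1}{d} w_{i^\star} \cdot \tfrac{1}{\,} $— wait, more carefully: $\Delta\Phi \geq 1 - w_{i^\star} - (d-1)\cdot \tfrac1d \cdot w_{i^\star} = 1 - w_{i^\star}\big(1 + \tfrac{d-1}{d}\big)$, which is not obviously nonnegative, so I would instead use $w_{i^\star} \le 1$ directly together with the sharper accounting: the $-w_{i^\star}$ term removes $i^\star$ from the sum entirely while $+1$ is added to $M$, giving net $1 - w_{i^\star} \geq 0$ from that pair, and the remaining terms $-\tfrac1d w_{i'}^{\text{old}} \le 0$ must be absorbed. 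The clean fix is to charge the decrease of each $i'$ against the fact that $i^\star$ had at least as high a degree: replacing $i^\star$'s contribution before the step, $w_{i^\star}$, it suffices that $1 \geq w_{i^\star} + \sum_{i'} \tfrac1d w_{i'}^{\text{old}}$; using $w_{i'}^{\text{old}} \leq w_{i^\star}$ and at most $d-1$ terms gives RHS $\leq w_{i^\star}(1 + \tfrac{d-1}{d}) \le 2 w_{i^\star}$, which only works when $w_{i^\star} \le \tfrac12$. So the potential must be chosen so that the \emph{arriving} ad slot's neighbors that were previously degree-$(d(i^\star))$ or higher have their decrements pre-paid; the correct invariant, matching Algorithm \ref{alg:vw-high-pd}, is obtained by using the dual variables $z_i = C\big((\tfrac{d}{d-1})^{d(i)}-1\big)$ rather than $w_i$, i.e. defining $\Phi = M + \sum_{i\text{ unmatched}}(1 - z_i)$ with $C = 1/((\tfrac{d}{d-1})^k - 1)$; then the per-step computation is exactly the $\Delta D \le (1+C)\Delta P$ bound already carried out in the proof of Theorem \ref{thm:vw-high-deg}, specialized to unit weights.

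At termination every remaining unmatched advertiser has revealed degree at least $k$ (the $(k,d)$-property), so its contribution $1 - z_i$ is at most $(1-\tfrac1d)^k$ (equivalently $z_i \ge 1$ is forced when degree would exceed the threshold, but the bound we need is the lower one). Therefore $|L| \leq \Phi = M + \sum_{i\text{ unmatched}} (1-z_i) \leq M + (|L| - M)\,(1-\tfrac1d)^k$, and rearranging yields $M \geq (1-(1-\tfrac1d)^k)\,|L|$. Since the offline optimum matches at most $|L|$ left vertices, this is the claimed competitive ratio.

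I expect the main obstacle to be pinning down the exact potential (the weights $w_i$ versus the affinely-shifted $1 - z_i$) so that the single-step inequality goes through with \emph{exactly} the slack that the highest-degree tie-breaking provides and no more — this is where the choice of the base $\tfrac{d}{d-1}$ and the constant $C$ is forced, and getting the algebra to close requires the identity $(d-1)\cdot\tfrac1d\cdot(\tfrac{d}{d-1})^{m} \cdot(\text{shift}) + (\text{shift}) = (\tfrac{d}{d-1})^{m}\cdot(\text{shift})$ that underlies the inductive maintenance of $z_i = C((\tfrac{d}{d-1})^{d(i)}-1)$. Everything else (handling ad slots with no unmatched neighbor, the terminal bound from the $(k,d)$-property, comparison to OPT $\le |L|$) is routine.
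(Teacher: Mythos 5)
Your write-up has the right ingredients in orbit (the base $\frac{d}{d-1}$, the constant $C$, the terminal use of ``unmatched $\Rightarrow$ degree $\geq k$''), but the argument as stated does not close, for three concrete reasons. First, both potentials you propose are paired with the invariant $\Phi \geq |L|$, and that invariant is simply false: for $\Phi = M + \sum_{i\ \mathrm{unmatched}}(1-\frac{1}{d})^{d(i)}$, the very first arrival with $d$ previously-isolated neighbors gives $\Delta\Phi = 1 - 1 - (d-1)\cdot\frac{1}{d} < 0$, and for your ``fixed'' potential $\Phi = M + \sum_{i\ \mathrm{unmatched}}(1-z_i)$ the same arrival gives $\Delta\Phi = -C < 0$; gesturing at the $\Delta D \leq (1+C)\Delta P$ computation of Theorem \ref{thm:vw-high-deg} does not help, because that inequality bounds a ratio of increments of $D=\sum_{i\in L} z_i$ versus $P$, it is not a monotonicity statement about your $\Phi$. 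Second, the comparison you extract from the high-degree rule is reversed: since $i^\star$ has the \emph{largest} degree among $j$'s unmatched neighbors, $w_{i^\star}=(1-\frac1d)^{d(i^\star)}$ is the \emph{smallest} of the weights, so $w_{i'}^{\mathrm{old}} \geq w_{i^\star}$, not $\leq$; this is exactly why a weight that is decreasing in degree cannot be charged against removing the highest-degree vertex. Third, even granting your invariant, the final rearrangement is wrong: $|L| \leq M + (|L|-M)(1-\frac1d)^k$ rearranges (divide by $1-(1-\frac1d)^k>0$) to $M \geq |L|$, i.e.\ your chain would prove that \textsc{high-degree} matches \emph{all} of $L$, contradicting Theorem \ref{thm:matching_ub}.

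The paper's proof keeps only the unmatched vertices and uses a weight that is \emph{increasing} in degree: $\phi = \sum_{i\ \mathrm{unmatched}} \big(\frac{d}{d-1}\big)^{d(i)}$, shown to be non-increasing at every arrival. There the high-degree rule supplies the comparison in the correct direction, $\big(\frac{d}{d-1}\big)^{d(i')} \leq \big(\frac{d}{d-1}\big)^{d(i^\star)}$, so matching $i^\star$ removes $\big(\frac{d}{d-1}\big)^{d(i^\star)}$ while the at most $d-1$ other unmatched neighbors add at most $\frac{1}{d-1}\big(\frac{d}{d-1}\big)^{d(i^\star)}$ each, giving $\Delta\phi \leq 0$; then $\big(\frac{d}{d-1}\big)^k|U_L| \leq \phi_{\mathrm{final}} \leq \phi_{\mathrm{start}} = |L|$ yields $|U_L| \leq (1-\frac1d)^k|L|$, which is the theorem. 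If instead you want to route through Algorithm \ref{alg:vw-high-pd}, the correct bookkeeping is the dual-fitting of Theorem \ref{thm:vw-high-deg}: track $D=\sum_{i\in L} z_i$ over \emph{all} advertisers (matched vertices have $z_i=1$), prove $\Delta D \leq (1+C)\Delta P$ per arrival, and conclude from $z_i=1$ for every $i$ at termination that $P \geq \frac{1}{1+C}|L| = \big(1-(1-\frac1d)^k\big)|L|$ --- no ``$\Phi \geq |L|$ at all times'' claim is needed or available.
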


\begin{proof}
Let $U_L\subseteq L$ denote the set of unmatched advertisers. Consider the following potential:
\begin{center}
$\phi = \underset{i\in U_L}{\sum} \left(\frac{d}{d-1}\right)^{d(i)}.$
\end{center}
Algorithm \textsc{high-degree} outputs a matching that effectively strives to greedily minimize $\phi$.\footnote{The sum of unmatched advertisers' degrees may seem like a more natural potential function to consider, but it turns out that it cannot be used to derive tight bounds. E.g.,~it does not yield a bound significantly better than $\frac{5}{8}=0.625$ for $k=d$.} The initial and final values of the potential function hold $\phi_{start}=|L|$ and $\phi_{final}\geq \left(d/(d-1)\right)^k\cdot |U_L|$, respectively. Denote by $\Delta\phi_j$ the change to $\phi$ incurred by the arrival of ad slot $j\in R$. Clearly, if $j$ is unmatched we have $\Delta\phi_j=0$. On the other hand, if $j$ is matched to a neighbor $i$, previously of degree $d(i)$, we find that $i$'s matching results in $\phi$ decreasing by $\left(d/(d-1)\right)^{d(i)}$, and the degree of $j$'s remaining unmatched neighbors increase each cause $\phi$ to increase by at most $\left(d/(d-1)\right)^{d(i)+1}-\left(d/(d-1)\right)^{d(i)}$. Therefore, if $j$ is matched to $i$ we have
\[
	\begin{array}{llr}
		\Delta\phi_j & \leq - \left(\frac{d}{d-1}\right)^{d(i)} + (d-1)\cdot\left(\left(\frac{d}{d-1}\right)^{d(i)}\cdot(\frac{d}{d-1}-1)\right) \\
		
		& = -\left(\frac{d}{d-1}\right)^{d(i)} + \left(\frac{d}{d-1}\right)^{d(i)} = 0.\\
	\end{array}
\]

In other words $\Delta\phi_j\leq 0$, irrespective of whether or not $j$ is matched. By this fact and our bounds on the initial and final potential, we  find that
\begin{equation*}
\left(\frac{d}{d-1}\right)^k\cdot |U_L|\leq \phi_{final} \leq \phi_{start} = |L|.
\end{equation*}
The theorem follows.
\end{proof}

\section{Online Ad Allocation}\label{sec:ad-allocation}
In this section we solve the ad allocation problem. We consider first the equal-bids case, where each advertiser $i$ offers the same bid for all its neighbors; i.e., $b_{ij}=b_i$ $\forall j\in N(i)$. This will prove to be a useful stepping-stone towards a solution for general bids, in \ref{subsec:gen-bids}.

One way to solve equal-bids ad allocation is via an online reduction to vertex-weighted matching in $(k,d)$-graphs. As each advertiser $i$ bids $\sum_{j\in N(i)}b_i \geq k\cdot B_i$ in total, we have $d(i)\geq k\cdot B_i/b_i$. Without loss of generality, $B_i/b_i$ is integral. 
The reduction splits each $i$ into $B_i/b_i$ copies, each of value $b_i$ and receiving up to $k$ distinct edges of $i$, stopping if the copy is matched. The obtained graph $G$ is $(k,d)$-bounded (perhaps after adding inconsequential neighbors to matched advertisers), and matchings in $G$ induce allocations of same value for the ad allocation instance. As Algorithm \ref{alg:vw-high-pd} gains $1-(1-\frac{1}{d})^k$ of the sum of vertex weights, or equivalently the sum of budgets, applying it yields a $1-(1-\frac{1}{d})^k$ competitive solution to the original ad allocation instance.

We restate the above as a primal-dual algorithm for equal-bids ad allocation. (See Algorithm \ref{alg:primal-dual-ad-allocation} below). In this algorithm, $z_i^c$ serves the role of $z_i$ in Algorithm \ref{alg:vw-high-pd} for $i$'s ``current copy'' (hence the $c$ in the notation), weighted to reflect the copy contributes $b_i/B_i$ of $i$'s budget. Intuitively, when $i$ is matched we imagine its current copy is matched, and set $z_i^c$ to $b_i/B_i$. Conversely, we ensure that once the copy has $k$ edges $z_i^c=b_i/B_i$. Either way, once $z_i^c=b_i/B_i$, we add $z_i^c$ to $z_i$ and nullify $z_i^c$ (moving to $i$'s next copy, whose dual variable would be zero in Algorithm \ref{alg:vw-high-pd}.) The number of copies of $i$ guarantees dual feasibility and the choice of match and dual updates guarantee the desired bound.

\begin{algorithm}
	\caption{Equal-Bid Ad Allocation in $(k,d)$-graphs}
	\label{alg:primal-dual-ad-allocation}
	\begin{algorithmic}[1]
		\medskip
		\STATE \textbf{Init:} set $z_i\leftarrow 0\ , z_i^c\leftarrow 0\ \text{ for all } i\in L$
		\FORALL{$j\in R$}
		\IF{$j$ has a feasible neighbor $i$}
			\STATE match $j$ to feasible neighbor $i$ maximizing $z_i^c\cdot B_i+C\cdot b_i$.
			\STATE set $x_{i,j}\leftarrow 1$.
			\STATE set $z_i^c\leftarrow b_i/B_i$.\label{line:bar1}
			\FORALL{feasible neighbor of $j$, $i'\neq i$}
				\STATE set $z_{i'}^c\leftarrow \min\{ b_{i'}/B_{i'},\ z_{i'}^c\cdot\big(\frac{d}{d-1}\big) +\frac{1}{d-1}\cdot C\cdot b_{i'}/B_{i'}\}$ \label{line:bar2}
			\ENDFOR
			\FORALL{$i'\in N(j)$ with $z_{i'}^c=b_{i'}/B_{i'}$}
				\STATE set $z_{i'}\leftarrow z_{i'} + z_{i'}^c$.
				\STATE set $z_{i'}^c\leftarrow 0$.
			\ENDFOR
		\ENDIF
		\ENDFOR
	\end{algorithmic}
\end{algorithm}

\vspace{-0.4cm}
\begin{thm}\label{thm:equal-bids}
	Algorithm \ref{alg:primal-dual-ad-allocation} with $C=1/\big(\big(\frac{d}{d-1}\big)^k-1\big)$ gains revenue $\left(1-(1-\frac{1}{d})^k\right)\cdot \sum_i B_i$, and is thus $\left(1-(1-\frac{1}{d})^k\right)$-competitive for the equal-bid problem on $(k,d)$-graphs.
\end{thm}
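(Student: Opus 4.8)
The plan is to run essentially the same primal-dual accounting as in the proof of Theorem~\ref{thm:vw-high-deg}, but tracking the pair $(z_i, z_i^c)$ rather than a single $z_i$, and exploiting that the algorithm simulates the vertex-weighted algorithm on the split graph described in the paragraph preceding Algorithm~\ref{alg:primal-dual-ad-allocation}. First I would establish the structural invariant analogous to the one in Theorem~\ref{thm:vw-high-deg}: at any point in the execution, a ``current copy'' of an unmatched advertiser $i$ that has so far received $t$ distinct edges of $i$ (i.e.\ the vertex-weighted copy has current degree $t$) satisfies $z_i^c = (b_i/B_i)\cdot C\cdot\big((\tfrac{d}{d-1})^t - 1\big)$. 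This follows by induction exactly as before: the update in Line~\ref{line:bar2} is the weighted version of the update $z_{i'}\leftarrow z_{i'}\cdot(\tfrac{d}{d-1}) + \tfrac{1}{d-1}C$, scaled by $b_{i'}/B_{i'}$ and capped at $b_i/B_i$. In particular, picking the feasible neighbor $i$ maximizing $z_i^c\cdot B_i + C\cdot b_i$ is the same as maximizing $b_i\cdot C\cdot (\tfrac{d}{d-1})^{t(i)}$, i.e.\ the high-degree choice on the split graph.

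Next I would do the per-ad-slot $\Delta D/\Delta P$ bound. When $j$ is matched to $i$, the primal increases by $\Delta P = b_i$. The dual increases from (i) bumping $z_i^c$ up to $b_i/B_i$, contributing $(b_i/B_i - z_i^c)\cdot B_i = b_i - z_i^c\cdot B_i$; and (ii) for each of the at most $d-1$ other feasible neighbors $i'$ of $j$, bumping $z_{i'}^c$ by at most $\tfrac{1}{d-1}\big(z_{i'}^c\cdot(\tfrac{d}{d-1}) + \tfrac{1}{d-1}C\cdot b_{i'}/B_{i'}\big)\cdot B_{i'}$ --- wait, more precisely the increment to $z_{i'}^c\cdot B_{i'}$ is at most $\tfrac{1}{d-1}(z_{i'}^c B_{i'} + C b_{i'})$ by the (uncapped) form of Line~\ref{line:bar2}. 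The $z$-updates in the last loop merely transfer value from $z_i^c\cdot B_i$ to $z_i\cdot B_i$ and cost nothing. By the choice of $i$ as the maximizer of $z_{i'}^c B_{i'} + C b_{i'}$ over feasible neighbors, summing (i) and (ii) gives $\Delta D \le (b_i - z_i^c B_i) + (z_i^c B_i + C b_i) = (1+C)\cdot b_i = (1+C)\,\Delta P$, exactly as in Theorem~\ref{thm:vw-high-deg}.

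Then I would argue dual feasibility at termination and compute $D$. A dual constraint for edge $(i,j)$ reads $b_i z_i + y_j \ge b_i$, and with $y_j$ accounting absent (the algorithm never sets $y_j$; matched slots can be charged via $z_i$), it suffices to show $z_i = 1$ at the end for every advertiser $i$ that is never ``saturated'' in the sense of spending its whole budget --- and for advertisers that do spend their full budget, feasibility is immediate. Here is where the ``$B_i/b_i$ copies'' count matters: an advertiser $i$ with $\sum_{j\in N(i)} b_i \ge k B_i$ has degree $d(i)\ge k\cdot B_i/b_i$, so its edges fill $B_i/b_i$ successive copies each to degree $k$ (with possible spillover), and each completed copy contributes $b_i/B_i$ to $z_i$ via the transfer loop, so $z_i$ reaches $1$. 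One should check the bookkeeping when $i$ is matched before its current copy accumulates $k$ edges --- then Line~\ref{line:bar1} sets $z_i^c = b_i/B_i$ immediately and that copy is ``retired'' with full credit, which is exactly what a match does in the vertex-weighted reduction. Hence $D = \sum_i B_i\cdot z_i = \sum_i B_i$ at termination, and combining with $\Delta D \le (1+C)\Delta P$ summed over all slots gives $P \ge \frac{1}{1+C}\sum_i B_i$. Plugging $C = 1/\big((\tfrac{d}{d-1})^k - 1\big)$ yields $\frac{1}{1+C} = 1 - (1-\tfrac{1}{d})^k$, as claimed.

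The main obstacle I expect is not the per-step inequality (which is mechanical once the invariant is in place) but cleanly verifying the termination/feasibility argument in the presence of the capping, the interleaving of matches with degree-accumulation, and the edge case where $B_i/b_i$ copies do not neatly partition $d(i)$ (the ``adding inconsequential neighbors'' caveat). I would handle this by making the invariant precise about what ``current copy'' and $t(i)$ mean even after a match resets things, and by noting that capping $z_{i'}^c$ at $b_{i'}/B_{i'}$ only ever helps the dual (decreases $\Delta D$) while still guaranteeing a full copy is credited, so the reduction to the vertex-weighted guarantee of Theorem~\ref{thm:vw-high-deg} goes through verbatim on the split graph. Everything else is a restatement of that proof with the weight $b_i/B_i$ carried along.
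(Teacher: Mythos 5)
Your proposal is correct and follows essentially the same route as the paper's proof: the invariant $z_i^c=C\cdot\frac{b_i}{B_i}\big(\big(\tfrac{d}{d-1}\big)^{d^c(i)}-1\big)$ for the current copy, the per-slot bound $\Delta D\leq(1+C)\,b_i$ via the maximizer choice, and dual feasibility at termination from $d(i)\geq k\cdot B_i/b_i$ forcing $z_i\geq 1$. The split-graph framing you use is exactly the reduction the paper describes just before Algorithm \ref{alg:primal-dual-ad-allocation}, so there is nothing substantively different to compare.
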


\begin{proof} To bound the primal-dual ratio, we bound increases of $z_i^c\cdot B_i$, as all dual costs can be traced back to past increases of $z_i^c$. Consider some ad slot $j$ matched to $i$. The primal gain is $\Delta P=b_i$, whereas the dual cost satisfies 
\[
\begin{array}{ll}
\boldsymbol{\Delta D} &\leq (b_i/B_i - z_i^c)\cdot B_i + \sum_{i'\in N(j)\setminus\{i\}}\left(\frac{1}{d-1}\right) \cdot (z_{i'}^c+C\cdot b_{i'}/B_{i'})\cdot B_{i'} \\

& \leq b_i - z_i^c\cdot B_i + (d-1)\cdot \left(\frac{1}{d-1}\right)\cdot(z_{i} ^c \cdot B_{i}+C\cdot b_{i}) \boldsymbol{\leq (1+C)\cdot b_{i}}
\end{array}
\]
As in Theorem \ref{thm:vw-high-deg}'s proof, $z_i^c=C\cdot \frac{b_i}{B_i}\big(\big(\frac{d}{d-1}\big)^{d^c(i)}-1\big)$, where $d^c(i)$ is the degree of $i$'s current copy, or equivalently, the number of $i$'s edges since $z_i^c$ was last nullified. Hence, by our choice of $C$, after at most $k$ $i$-edges, $z_i^c=\frac{b_i}{B_i}$ (whether or not $i$ is matched), and $z_i$ is increased by $\frac{b_i}{B_i}$. As $d(i)\geq k\cdot \frac{B_i}{b_i}$ by the end, $z_i\geq 1$ for all $i$. The theorem follows.
\end{proof}

\subsection{General Bids}\label{subsec:gen-bids}
A natural way to extend Algorithm \ref{alg:primal-dual-ad-allocation} to general bids would be to replace for every ad slot $j$ and every neighbor $i$ (or $i'$) all appearances of $b_i$ (or $b_{i'}$) by $b_{ij}$ (resp., $b_{i'j}$) in the choice of $j$'s match and updates to $z_i^c$, $z_{i'}^c$ and $z_i$. Such dual updates would guarantee, similarly to our prior algorithms, that an advertiser $i$ with budget $B_i$ and rejected bids $b_{i0},b_{i1},\dots,b_{it}$ since its last match (ordered chronologically) would have dual variable 
\begin{equation}\label{eqn:dual}
z_i^c=\frac{1}{d-1}\cdot C \cdot \sum_{r=0}^{t} \frac{b_{i r}}{B_i}\cdot \left(\frac{d}{d-1}\right)^{t-r}
\end{equation}
Unfortunately, replacing $b_i$ by $b_{ij}$ in the updates for matched $i$ could result in $z_i$ arbitrarily small. Worse still, since previously-rejected bids may be greater than the current bid, setting $z_i^c$ to $\frac{b_{ij}}{B_i}$ could even \emph{decrease} $z_i^c$, complicating the task of bounding the primal-dual ratio. Algorithm \ref{alg:primal-dual-ad-allocation-general} below sidesteps these issues by considering bounded fractions of $z_i^c$, and using the following notation, motivated by Equation \ref{eqn:dual}, to represent variables $z_i^c$, and $z_i^f$ (the $f$ in the notation refers to a bounded fraction of $z_i^c$ ``used''). This notation's use will become apparent during the algorithm's analysis.

\begin{Def}
Let $z=\frac{1}{d-1}\cdot C \cdot \sum_{r=0}^t b_{r}\cdot \big(\frac{d}{d-1}\big)^r$. We think of $z$ as a number in base $\frac{d}{d-1}$, denoting it by $z=[b_t,\dots,b_1,b_0]$, disregarding the $\frac{1}{d-1}\cdot C$ term for simplicity. Addition and subtraction of numbers in this notation is done place-wise, disallowing carries/borrows. In particular, if $z=[b_t,\dots,b_1,b_0]$, then $z\cdot \frac{d}{d-1}+\frac{1}{d-1}\cdot C\cdot b=[b_t,\dots, b_1,b_0,b]$. Comparisons involving numbers in this notation refer to their numerical value.
\end{Def}

\begin{algorithm}[h]
	\caption{Online Ad Allocation in $(k,d)$-graphs with general bids.}
	\label{alg:primal-dual-ad-allocation-general}
	\begin{algorithmic}[1]
		\medskip
		\STATE \textbf{Init:} set $z_i\leftarrow 0\ , z_i^c\leftarrow 0\ \text{ for all } i\in L$
		\FORALL{$j\in R$}
		\IF{$j$ has a feasible neighbor $i$}
		
			\FORALL{feasible neighbors $i$}
				\STATE let $z_i^c=[b_{k-1},\dots,b_1,b_0]$.
				\STATE set $z_i^f\leftarrow [\min\{b_{k-1},b_{ij}/B_i\},\dots, \min\{b_{1},b_{ij}/B_i\}, \min\{b_{0},b_{ij}/B_i\}]$.
				\STATE set $z_i^c\leftarrow z_i^c-z_i^f$.
			\ENDFOR
			
			\STATE match $j$ to feasible neighbor $i$ maximizing $z_i^f\cdot B_i+C\cdot b_{ij}$.
			\STATE set $x_{i,j}\leftarrow 1$. 
			\STATE set $z_i^f\leftarrow 0$.
			\STATE set $z_i\leftarrow z_i+b_{ij}/B_i$.\label{line:general-match}
			
			\FORALL{feasible neighbor of $j$, $i'\neq i$}
				\STATE set $z_{i'}^f\leftarrow z_{i'}^f\cdot\big(\frac{d}{d-1}\big) +\frac{1}{d-1}\cdot C\cdot b_{i'j}/B_{i'}$
				\STATE $z_{i'}^c\leftarrow z_{i'}^c + z_{i'}^f$.
				\IF{$z_{i'}^c=[b_k,b_{k-1},\dots ,b_1,b_0]$ with $b_k\neq 0$}\label{line:k-digit-Loop}
					\STATE set $z_{i'}\leftarrow z_{i'} + b_k \cdot \frac{1}{k}$.
					\STATE set $z_{i'}^c\leftarrow [b_{k-1},\dots,b_1,b_0]$.\label{line:k-digit-LoopEnd}
				\ENDIF
				\IF{$z_{i'}^c=[b_{k-1},\dots,b_1,b_0]$ with all digits $b_r\neq 0$}\label{line:k-1-digit-loop}
					\STATE let $b=\min\{b_r\}_{r=0}^{k-1}$.
					\STATE set $z_{i'}\leftarrow z_{i'} + b$.
					\STATE set $z_{i'}^c\leftarrow [b_{k-1}-b,\dots,b_1-b,b_0-b]$.\label{line:k-1-digit-loopEnd}
				\ENDIF
			\ENDFOR
		\ENDIF
		\ENDFOR
		
		\FORALL{$i\in L$}\label{line:feasibilityLoop}
		\STATE set $z_i\leftarrow \max\{1,z_i\}$.\label{line:feasibility}
		\ENDFOR
	\end{algorithmic}
\end{algorithm}

The algorithm for the general bids setting is Algorithm \ref{alg:primal-dual-ad-allocation-general}, below. The algorithm's primal feasibility is trivial, as is its dual feasibility, due to Lines \ref{line:feasibilityLoop}-\ref{line:feasibility}. It remains to bound the ratio of the cost of the dual solution to the value of the primal solution. 

\paragraph{High-Level intuition:}\label{par:intuition} The algorithm asserts three invariants.~The first guarantees increases in $z_i$ are ``paid for'' by increases in $z_i^c$, allowing us to focus on bounding changes to $z_i^c$.~A second invariant guarantees every increase of $z_i$ by some value $b/B_i$ can be accredited to previous bids (or fractions thereof) of total value at most $k\cdot b/B_i$. As the graph is $(k,d)$, if every bid of $i$ of value $b$ were to cause $z_i$ to increase (by at least $b/(k\cdot B_i)$, by the above), then eventually $z_i\geq1$. However, some bids may not incur an increase in $z_i$. The third and last invariant guarantees the total value of fractions of bids that do not cause $z_i$ to increase is at most $k\cdot R_{\max}$, and so $z_i\geq (1-R_{\max})$ before Lines \ref{line:feasibilityLoop}-\ref{line:feasibility}. Thus, the cost of rounding each $z_i$ to one in these lines is at most $R_{\max}/(1-R_{\max})$ of the previously-paid dual cost. The bound will follow. The following four lemmas formalize the above, allowing us to derive our sought-after bound.

\begin{lem}\label{lem:digital}
Before every ad slot's arrival and before Line \ref{line:feasibilityLoop}, every $z_i^c$ is a number in the above numeral system satisfying the following three properties:
\begin{enumerate}[(i)]
\itemsep0em 
\item $z_i^c$ is a $k$-digit number; i.e., $z_i^c=[b_{k-1},\dots,b_1,b_0]$.\label{item:k-digit}
\item $z_i^c$ has at most $k-1$ non-null digits.\label{item:k-1non-null}
\item Each digit of $z_i^c$ is no greater than $\max_j\{\frac{b_{ij}}{B_i}\}$.\label{item:small-digits}
\end{enumerate}
\end{lem}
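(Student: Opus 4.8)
The plan is to prove Lemma~\ref{lem:digital} by induction on the number of ad slots processed, showing that the three invariants \eqref{item:k-digit}, \eqref{item:k-1non-null}, \eqref{item:small-digits} are preserved by every iteration of the main \texttt{FOR} loop. The base case is immediate: initially $z_i^c = 0 = [0,\dots,0]$, which is a $k$-digit number with zero non-null digits and all digits zero. For the inductive step, fix an ad slot $j$ and an advertiser $i' \in N(j)$ with $i' \neq i$ (the matched advertiser $i$ trivially satisfies all three properties afterwards, since its $z_i^f$ is zeroed and one needs to argue $z_i^c$, which was only decremented by $z_i^f \ge 0$, still satisfies the invariants — decrementing digits of a number already satisfying \eqref{item:k-digit}--\eqref{item:small-digits} can only shrink digits and cannot create new non-null ones, so these are preserved; similarly the feasible-but-not-chosen neighbors of an ad slot $j$ with no feasible neighbor are untouched). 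So the crux is tracking what happens to $z_{i'}^c$ through the sequence: first the "fractional extraction" step $z_{i'}^c \leftarrow z_{i'}^c - z_{i'}^f$, then after the match the "shift" step $z_{i'}^f \leftarrow z_{i'}^f \cdot \frac{d}{d-1} + \frac{1}{d-1}C\, b_{i'j}/B_{i'}$ followed by $z_{i'}^c \leftarrow z_{i'}^c + z_{i'}^f$, and finally the two conditional blocks in Lines~\ref{line:k-digit-Loop}--\ref{line:k-digit-LoopEnd} and \ref{line:k-1-digit-loop}--\ref{line:k-1-digit-loopEnd} that prune digits.

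The key steps, in order, are as follows. First I would establish that $z_{i'}^f$ always has the same "digit positions occupied" structure as a truncation of $z_{i'}^c$: after the extraction step, each digit of $z_{i'}^f$ equals $\min\{b_r, b_{i'j}/B_{i'}\}$, so $z_{i'}^f$ has no more non-null digits than $z_{i'}^c$ had, and every digit of $z_{i'}^f$ is at most $b_{i'j}/B_{i'} \le \max_j\{b_{i'j}/B_{i'}\}$; also the remaining $z_{i'}^c$ (after subtraction) has each digit $b_r - \min\{b_r, b_{i'j}/B_{i'}\}$, which is still bounded by the old bound and has no new non-null digits. Next, the shift step $z_{i'}^f \cdot \frac{d}{d-1} + \frac{1}{d-1}C\,b_{i'j}/B_{i'} = [\text{old digits of } z_{i'}^f,\ b_{i'j}/B_{i'}]$ appends one digit at position $0$ and moves everything up by one position; combined with $z_{i'}^c \leftarrow z_{i'}^c + z_{i'}^f$ this is a place-wise addition. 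The main work is to argue that after this addition, $z_{i'}^c$ has at most $k$ occupied digit positions, and then that the two pruning blocks restore invariants \eqref{item:k-digit} and \eqref{item:k-1non-null}: the first block (Lines~\ref{line:k-digit-Loop}--\ref{line:k-digit-LoopEnd}) removes the degree-$k$ digit if present, restoring \eqref{item:k-digit}; the second block (Lines~\ref{line:k-1-digit-loop}--\ref{line:k-1-digit-loopEnd}) fires exactly when all $k$ lower digits are non-null and subtracts the minimum digit from all of them, which zeroes at least one digit and so restores \eqref{item:k-1non-null}. I would also need to check \eqref{item:small-digits} survives the additions — this is where the extraction step is essential, since without it adding $z_{i'}^f$'s digit at each position could blow the per-digit bound, whereas with the $\min$ the "used" part that gets shifted and re-added was already capped, and the digits of $z_{i'}^c$ that participate are ones that had been "saved" (not used) in earlier rounds. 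The careful bookkeeping is that a digit of $z_{i'}^c$ at position $r$ was set in some earlier round to at most that round's $\min\{\cdot, b/B_{i'}\} \le \max_j\{b_{i'j}/B_{i'}\}$, and the pruning/extraction operations only ever shrink digits, so the bound is an invariant.

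The main obstacle I anticipate is the interaction between the "shift-by-one" in the $z_{i'}^f$ update and the digit-count bookkeeping: I need to argue that before the shift, $z_{i'}^f$ (equivalently the pre-shift $z_{i'}^c$) occupied positions $0,\dots,k-1$, so after shifting it occupies $1,\dots,k$, and after appending the new low digit it occupies $0,\dots,k$ — a $(k{+}1)$-digit number — which is why Lines~\ref{line:k-digit-Loop}--\ref{line:k-digit-LoopEnd} are needed to chop off position $k$ before the next slot. The subtlety is that $z_{i'}^f$ may not have had a non-null digit at position $k-1$, so I should phrase everything in terms of "the highest occupied position is $\le k-1$" rather than literal digit counts, and confirm the algorithm's notation $z_i^c = [b_{k-1},\dots,b_0]$ is consistent with carrying a transient $k$-th digit during one iteration. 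A secondary subtlety is making sure invariant \eqref{item:k-1non-null} (at most $k-1$ non-null digits) genuinely holds \emph{between} ad slots: after the first pruning block $z_{i'}^c$ has $\le k$ occupied positions among $0,\dots,k-1$, i.e.\ possibly all $k$; then the second block is precisely designed to catch the "all $k$ non-null" case and knock one out. I would close by remarking that these invariants are exactly what the subsequent lemmas (bounding $z_i$-increases by accredited bid fractions, bounding the "wasted" bid mass by $k\cdot R_{\max}$) rely on, so Lemma~\ref{lem:digital} is the structural backbone of the general-bids analysis.
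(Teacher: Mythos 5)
Your proposal is correct and follows essentially the same route as the paper's proof: properties (i) and (ii) are enforced explicitly by the two pruning blocks (Lines \ref{line:k-digit-Loop}--\ref{line:k-digit-LoopEnd} and \ref{line:k-1-digit-loop}--\ref{line:k-1-digit-loopEnd}), while (iii) is verified by induction through the extraction and re-addition of $z_i^f$. Only tighten your final bookkeeping sentence for (iii): in each round a digit first decreases by $\min\{b_r, b_{ij}/B_i\}$ and then increases by at most $b_{ij}/B_i$, hence is bounded by the maximum of its previous value and $b_{ij}/B_i$, both at most $\max_j\{b_{ij}/B_i\}$ --- digits are repeatedly incremented across rounds, not merely ``set once and then shrunk''.
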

\begin{proof}
Properties \ref{item:k-digit} and \ref{item:k-1non-null} are enforced explicitly by Lines \ref{line:k-digit-Loop}-\ref{line:k-digit-LoopEnd} and \ref{line:k-1-digit-loop}-\ref{line:k-1-digit-loopEnd}, respectively. Property \ref{item:small-digits} follows by induction: When $z_i^f$ is subtracted from $z_i^c$, every digit of $z_i^c$ is either nullified, if it was smaller than $b_{ij}/B_i$, or decreased by $b_{ij}/B_i$. After $z_i^f$ is updated and added to $z_i^c$, each digit of $z_i^c$ is increased by at most $b_{ij}/B_i$. Thus each digit is no greater than its previous value and $b_{ij}/B_i$, both of which are at most $\max_j\{\frac{b_{ij}}{B_i}\}$.
\end{proof}

\begin{lem}\label{lem:overflow}
If $k\geq d-1$ and $C=1/\big(\big(\frac{d}{d-1}\big)^k-1\big)$, every increase in $z_i$ by some $b$ in Lines \ref{line:k-digit-Loop}-\ref{line:k-digit-LoopEnd} and \ref{line:k-1-digit-loop}-\ref{line:k-1-digit-loopEnd} goes hand-in-hand with both
\begin{enumerate}[(i)]
\itemsep0em 
	\item a decrease of the same value or higher in $z_i^c$, and \label{item:dec-zic}
	\item a decrease of $k$ times this value or less in the sum of digits of $z_i^c$.\label{item:dec-digits-zic}
\end{enumerate}
\end{lem}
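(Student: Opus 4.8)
The plan is to verify Lemma~\ref{lem:overflow} by a case analysis on which of the two increase-blocks fires, tracking how $z_{i'}^c$ (written in the $\frac{d}{d-1}$-ary notation) changes and checking both bookkeeping inequalities directly. Recall that increases of $z_i$ happen only in Lines~\ref{line:k-digit-Loop}--\ref{line:k-digit-LoopEnd} (the ``$k$-th digit overflow'' block) and in Lines~\ref{line:k-1-digit-loop}--\ref{line:k-1-digit-loopEnd} (the ``all $k$ digits nonzero'' block), and in both cases the advertiser $i$ here plays the role of a rejected neighbor $i'\neq$ (the matched vertex).

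First I would handle the overflow block. There $z_{i'}^c$ has the form $[b_k,b_{k-1},\dots,b_0]$ with $b_k\neq 0$, $z_{i'}$ is increased by $b_k\cdot\frac1k$, and $z_{i'}^c$ is truncated to $[b_{k-1},\dots,b_0]$. Numerically, removing the leading digit decreases $z_{i'}^c$ by exactly $\frac{1}{d-1}\cdot C\cdot b_k\cdot\big(\frac{d}{d-1}\big)^k$. Using the choice $C=1/\big(\big(\frac{d}{d-1}\big)^k-1\big)$, one checks that $\frac{1}{d-1}\cdot C\cdot\big(\frac{d}{d-1}\big)^k = \frac{1}{d-1}\cdot\frac{1}{1-(1-1/d)^k}\ge 1$ precisely when $k\ge d-1$ — this is the only place the hypothesis $k\ge d-1$ is used — so the decrease in $z_{i'}^c$ is at least $b_k\ge b_k\cdot\frac1k$, giving (i). For (ii): the sum of digits drops by exactly $b_k$, which is $k$ times the increase $b_k/k$, so the bound is met with equality. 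Next I would handle the second block, where $z_{i'}^c=[b_{k-1},\dots,b_0]$ with all digits nonzero, $b=\min_r b_r$, $z_{i'}$ increases by $b$, and each digit is reduced by $b$. The sum of digits decreases by exactly $k\cdot b$, which is exactly $k$ times the $z_{i'}$-increase, so (ii) holds with equality; for (i), the numerical value $z_{i'}^c$ decreases by $b\cdot\frac{1}{d-1}\cdot C\cdot\sum_{r=0}^{k-1}\big(\frac{d}{d-1}\big)^r = b\cdot\frac{1}{d-1}\cdot C\cdot\frac{(d/(d-1))^k-1}{(d/(d-1))-1} = b\cdot C\cdot\big(\big(\tfrac{d}{d-1}\big)^k-1\big) = b$, using the geometric-sum identity and the value of $C$; so $z_{i'}^c$ drops by exactly $b$, matching the $z_{i'}$-increase. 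That settles (i).

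I expect the main obstacle to be purely notational bookkeeping rather than anything deep: one must be careful that the two blocks fire \emph{in sequence} on the same ad-slot step (first the overflow truncation, then possibly the uniform decrement), so that the ``increase in $z_i$'' in the lemma statement should be read as the sum of the (at most two) increments applied at that step, and the two parts of the analysis then add up. Concretely, if both blocks fire, the total increase of $z_{i'}$ is $b_k/k + b$, the total numerical decrease of $z_{i'}^c$ is $\ge b_k + b$ (from the two computations above, noting the second computation uses the $k$ remaining digits after truncation), and the total decrease in the digit sum is $b_k + k\cdot b \le k\cdot(b_k/k + b)$; so both (i) and (ii) survive the summation. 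The only subtlety worth spelling out is that after the overflow truncation the retained $k$ digits may not all be nonzero, in which case the second block simply does not fire and there is nothing further to prove; and that Lemma~\ref{lem:digital} guarantees we never need to consider more than a $(k{+}1)$-st digit, so the two blocks suffice to restore the invariant. With these observations the proof is a short chain of geometric-series identities keyed to the definition of $C$.
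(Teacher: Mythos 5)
Your overall structure matches the paper's proof (case analysis on the two blocks, geometric-sum identities keyed to $C$), and your treatment of the second block and of part (ii) in both blocks is correct. But the justification of part (i) for the overflow block contains a genuine error: you claim $\frac{1}{d-1}\cdot C\cdot\big(\frac{d}{d-1}\big)^k=\frac{1}{d-1}\cdot\frac{1}{1-(1-1/d)^k}\ge 1$ ``precisely when $k\ge d-1$'', and conclude that the numerical decrease of $z_{i'}^c$ is at least $b_k$. This inequality is false for essentially all $d\ge 3$ in the regime of the lemma: for $d=3$, $k=2$ (so $k\ge d-1$) the quantity equals $\frac{1}{2}\cdot\frac{1}{1-4/9}=\frac{9}{10}<1$, and it only gets smaller as $k$ grows, since $1-(1-1/d)^k$ increases in $k$. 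Indeed, $\frac{1}{d-1}\cdot\frac{1}{1-(1-1/d)^k}\ge 1$ is equivalent to $\big(\frac{d}{d-1}\big)^k\le\frac{d-1}{d-2}$, which for $d\ge 3$ holds only for \emph{small} $k$ --- essentially the opposite direction of the hypothesis. So the intermediate claim ``decrease $\ge b_k$'' does not hold, and the stated characterization of where $k\ge d-1$ enters is wrong.

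The lemma itself does not need that strong bound: the increase of $z_{i'}$ in this block is only $b_k/k$, so it suffices that the decrease of $z_{i'}^c$, namely $\frac{1}{d-1}\cdot C\cdot b_k\cdot\big(\frac{d}{d-1}\big)^k$, is at least $\frac{b_k}{k}$. This follows immediately from $C\cdot\big(\frac{d}{d-1}\big)^k=\frac{(d/(d-1))^k}{(d/(d-1))^k-1}\ge 1$ together with $\frac{1}{d-1}\ge\frac{1}{k}$ (here is where $k\ge d-1$ is used), which is exactly the inequality the paper invokes. With this one-line repair your argument goes through, including the combined case where both blocks fire on the same step: the total decrease of $z_{i'}^c$ is then at least $b_k/k+b$, matching the total increase of $z_{i'}$, and the digit-sum decrease $b_k+k\cdot b$ is at most $k$ times that increase.
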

\begin{proof}
In Lines \ref{line:k-digit-Loop}-\ref{line:k-digit-LoopEnd}, $z_i$ is increased by $b_k/k$. On the other hand, we remove $b_k$, the $k$-th digit of $z_i^c$ in this numeral system, resulting in a decrease of $z_i^c$ by
\[
\frac{1}{d-1}\cdot C \cdot b_k \cdot \left(\frac{d}{d-1}\right)^k\geq \frac{1}{k}\cdot b_k.
\]
Thus, Properties \ref{item:dec-zic} and \ref{item:dec-digits-zic} both hold for Lines \ref{line:k-digit-Loop}-\ref{line:k-digit-LoopEnd}.
In Lines \ref{line:k-1-digit-loop}-\ref{line:k-1-digit-loopEnd}, the value of $z_i^c$ is decreased by $\frac{1}{d-1}\cdot C\cdot \sum_{r=0}^{k-1} b \cdot \big(\frac{d}{d-1}\big)^r=C\cdot \big(\big(\frac{d}{d-1}\big)^k-1\big) \cdot b$, which is exactly $b$, by our choice of $C$. The decrease in the sum of digits of $z_i^c$ on the other hand is exactly $k\cdot b$.
\end{proof}

\begin{lem}\label{lem:match}
Taking $C=1/\big(\big(\frac{d}{d-1}\big)^k-1\big)$ guarantees every increase in $z_i$ by $b_{ij}/B_i$ in Line \ref{line:general-match} coincides with a decrease of at most $b_{ij}/B_i$ in $z_i^c$. Moreover, $\Delta digit$, the decrease in sum of digits of $z_i^c$, satisfies $\Delta digit + b_{ij}/B_i \leq k\cdot b_{ij}/B_i$.
\end{lem}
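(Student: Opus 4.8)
The plan is to trace exactly what happens to $z_i^c$ when ad slot $j$ is matched to $i$ in Line~\ref{line:general-match}. Just before the match, in the feasible-neighbor preprocessing loop, we have $z_i^c$ replaced by $z_i^c - z_i^f$, where $z_i^f=[\min\{b_{k-1},b_{ij}/B_i\},\dots,\min\{b_0,b_{ij}/B_i\}]$; then Line~\ref{line:general-match} sets $z_i^f\leftarrow 0$ and increases $z_i$ by $b_{ij}/B_i$. So the net effect on $z_i^c$ across the matching of $j$ is a decrease by exactly the numerical value of $z_i^f$, namely $\tfrac{1}{d-1}\cdot C\cdot\sum_{r=0}^{k-1}\min\{b_r,b_{ij}/B_i\}\cdot\bigl(\tfrac{d}{d-1}\bigr)^r$. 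The first claim is that this is at most $b_{ij}/B_i$. Since each $\min\{b_r,b_{ij}/B_i\}\le b_{ij}/B_i$, this value is at most $\tfrac{1}{d-1}\cdot C\cdot (b_{ij}/B_i)\cdot\sum_{r=0}^{k-1}\bigl(\tfrac{d}{d-1}\bigr)^r = C\cdot\bigl(\bigl(\tfrac{d}{d-1}\bigr)^k-1\bigr)\cdot(b_{ij}/B_i)$, which equals $b_{ij}/B_i$ by the choice $C=1/\bigl(\bigl(\tfrac{d}{d-1}\bigr)^k-1\bigr)$ — exactly the geometric-sum identity already used in the proof of Lemma~\ref{lem:overflow}. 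This gives part~\ref{item:dec-zic}-style conclusion for the matched vertex.

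For the second claim, $\Delta digit$ is the decrease in the sum of digits of $z_i^c$ caused by subtracting $z_i^f$, i.e.\ $\Delta digit=\sum_{r=0}^{k-1}\min\{b_r,b_{ij}/B_i\}$. Since by Lemma~\ref{lem:digital}(\ref{item:k-digit}) $z_i^c$ has exactly $k$ digits (indices $0,\dots,k-1$), each term is at most $b_{ij}/B_i$, so $\Delta digit\le k\cdot b_{ij}/B_i$. The statement to prove, $\Delta digit + b_{ij}/B_i\le k\cdot b_{ij}/B_i$, is slightly stronger by one extra $b_{ij}/B_i$; the plan is to get it from Lemma~\ref{lem:digital}(\ref{item:k-1non-null}), which says $z_i^c$ has at most $k-1$ non-null digits. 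Hence at most $k-1$ of the $k$ minima $\min\{b_r,b_{ij}/B_i\}$ are nonzero (a null digit contributes $0$ to the min as well), so $\Delta digit=\sum_{r=0}^{k-1}\min\{b_r,b_{ij}/B_i\}\le (k-1)\cdot b_{ij}/B_i$, and adding $b_{ij}/B_i$ yields $k\cdot b_{ij}/B_i$ as required. The role of the $+b_{ij}/B_i$ slack will presumably be to account in a later lemma for the bid $b_{ij}$ itself being "used up" by the match rather than contributing to a future increase of $z_i$.

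The main obstacle is bookkeeping rather than mathematical depth: one must be careful that the "decrease in $z_i^c$'' and "decrease in sum of digits'' are measured over precisely the right window — from the state right before $j$'s preprocessing loop to the state right after Line~\ref{line:general-match}, and only counting the subtraction of $z_i^f$, not the subsequent updates in the $i'\neq i$ loop (which do not apply to the matched $i$ anyway). One should also double-check the edge case where $z_i^c$ has fewer "logical'' digits because $i$ was matched more recently than $k$ steps ago: Lemma~\ref{lem:digital}(\ref{item:k-digit}) normalizes this to always be a $k$-digit number with leading zeros, so the counting argument via (\ref{item:k-1non-null}) goes through uniformly. No new invariant is needed; both parts follow by combining the geometric-sum identity with the two already-established structural properties of $z_i^c$.
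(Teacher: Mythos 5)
Your proposal is correct and follows essentially the same route as the paper: the decrease in $z_i^c$ is exactly the value of the nullified $z_i^f$, whose value is bounded by $b_{ij}/B_i$ via the same geometric-sum identity used in Lemma \ref{lem:overflow}, and whose digit sum is bounded by $(k-1)\cdot b_{ij}/B_i$ using Lemma \ref{lem:digital}'s properties that $z_i^c$ has at most $k-1$ non-null digits and that each digit of $z_i^f$ is at most $b_{ij}/B_i$. The paper's proof is just a terser statement of exactly this argument.
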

\begin{proof}
In Line \ref{line:general-match}, $z_i^f$, which was subtracted from $z_i^c$, is nullified. Both bounds follow similarly to our proof of Lemma \ref{lem:overflow} relying on $z_i^f$ being a $k$-digit number with at most $k-1$ non-null digits, by Lemma \ref{lem:digital}, and each digit of $z_i^f$ being no greater than $b_{ij}/B_i$, by initialization of $z_i^f$.
\end{proof}

\begin{lem}\label{lem:almost-feasible}
By Line \ref{line:feasibilityLoop} each $i$ satisfies $z_i\geq \frac{\sum_j b_{ij}-k\cdot \max_j\{b_{ij}\}}{k\cdot B_i}\geq 1-\frac{\max_{j}b_{ij}}{B_i}\geq  1-R_{\max}$.
\end{lem}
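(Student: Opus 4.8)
The plan is to run a single potential argument per advertiser. For each $i\in L$ set
$\Phi_i := k\cdot z_i + \sigma_i$, where $\sigma_i$ is the digit-sum of $z_i^c$ in the $\tfrac{d}{d-1}$-ary system of the Definition preceding Algorithm~\ref{alg:primal-dual-ad-allocation-general} (i.e.\ $\sigma_i=\sum_{r=0}^{k-1}b_r$ when $z_i^c=[b_{k-1},\dots,b_0]$, ignoring the common $\tfrac{1}{d-1}C$ factor). I will establish three facts: (i) $\Phi_i$ is non-decreasing throughout the run of Algorithm~\ref{alg:primal-dual-ad-allocation-general}; (ii) whenever an ad slot $j$ arrives for which $i$ is a feasible neighbor, $\Phi_i$ grows by at least $b_{ij}/B_i$; and (iii) just before Line~\ref{line:feasibilityLoop}, $\sigma_i\le (k-1)\cdot\max_j\{b_{ij}\}/B_i$. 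Since $\Phi_i$ starts at $0$, facts (i)--(ii) give $k\,z_i+\sigma_i=\Phi_i\ge \sum_{j:\,i\text{ feasible for }j} b_{ij}/B_i$ at that point, and (iii) converts this into the desired lower bound on $z_i$; the remaining two inequalities in the statement then follow from $(k,d)$-boundedness ($\sum_j b_{ij}\ge kB_i$) and $\max_j\{b_{ij}\}\le R_{\max}B_i$.

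\textbf{Proving (i) and (ii).} Fix $j$ with $i\in N(j)$ feasible. If $i$ is not matched to $j$, a routine digit-chase through the inner block — subtract $z_i^f$ from $z_i^c$; shift $z_i^f$ and inject the new bottom digit $\tfrac{1}{d-1}C\,b_{ij}/B_i$; re-add $z_i^f$ to $z_i^c$; run the cleanups of Lines~\ref{line:k-digit-Loop}--\ref{line:k-digit-LoopEnd} and \ref{line:k-1-digit-loop}--\ref{line:k-1-digit-loopEnd} — shows that the subtract/shift/re-add steps raise $\sigma_i$ by exactly $b_{ij}/B_i$ (shifting preserves digit-sum, one digit of value $b_{ij}/B_i$ is added, and the subtracted mass is restored), while in each cleanup every unit of $z_i$ gained is matched by an exactly $k$-fold drop in $\sigma_i$ (Lemma~\ref{lem:overflow}, both clauses), so the net effect of the cleanups on $\Phi_i$ is zero. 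Hence $\Delta\Phi_i=b_{ij}/B_i$. If $i$ \emph{is} matched to $j$, then $z_i$ grows by $b_{ij}/B_i$ in Line~\ref{line:general-match} while $\sigma_i$ drops by the digit-sum of $z_i^f$, which by Lemma~\ref{lem:digital}(\ref{item:k-1non-null}),(\ref{item:small-digits}) has at most $k-1$ non-null digits each at most $b_{ij}/B_i$ — this is precisely Lemma~\ref{lem:match} — so $\Delta\Phi_i\ge k\cdot b_{ij}/B_i-(k-1)\cdot b_{ij}/B_i=b_{ij}/B_i$. When $i$ is not a feasible neighbor of $j$, $z_i$ and $z_i^c$ are untouched, so $\Delta\Phi_i=0$. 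This gives both (i) and (ii). Fact (iii) is immediate from Lemma~\ref{lem:digital}: just before Line~\ref{line:feasibilityLoop}, $z_i^c$ is a $k$-digit number with at most $k-1$ non-null digits, each at most $\max_j\{b_{ij}\}/B_i$.

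\textbf{Combining.} If $i$ is a feasible neighbor of \emph{every} $j\in N(i)$ — which in particular holds whenever $i$'s final residual budget is at least $\max_j\{b_{ij}\}$, since residual budgets only decrease — then the sum in (ii) runs over all neighbors, and facts (i)--(iii) give $k\,z_i\ge \sum_j b_{ij}/B_i-(k-1)\max_j\{b_{ij}\}/B_i$, i.e.\ $z_i\ge \bigl(\sum_j b_{ij}-k\max_j\{b_{ij}\}\bigr)/(kB_i)$, and the rest of the chain follows from $\sum_j b_{ij}\ge kB_i$ and $\max_j\{b_{ij}\}\le R_{\max}B_i$. Otherwise $i$ was, at some point, not a feasible neighbor of some ad slot $j^\star$, meaning its residual budget had already fallen below $b_{ij^\star}\le\max_j\{b_{ij}\}$; then the bids matched to $i$ summed to more than $B_i-\max_j\{b_{ij}\}$, each of them contributed its full fraction to $z_i$ in Line~\ref{line:general-match}, and $z_i$ never decreases, so $z_i> 1-\max_j\{b_{ij}\}/B_i\ge 1-R_{\max}$, which is the operative conclusion of the lemma (the one used in the sequel).

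\textbf{Main obstacle.} The only genuinely fiddly point is the digit-level bookkeeping in step~(ii): one must check that the composite operation (subtraction of $z_i^f$, shift, re-addition, and the two overflow cleanups) conserves $\Phi_i=k\,z_i+\sigma_i$ up to the $+b_{ij}/B_i$ injection. This is exactly what Lemmas~\ref{lem:overflow} and \ref{lem:match} were designed to package, so the work is in invoking them with the correct accounting rather than in any delicate estimate. A secondary, purely conceptual, point is the feasibility caveat, dispatched by the case split in the previous paragraph.
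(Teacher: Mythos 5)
Your proof is correct and takes essentially the same route as the paper: your potential $\Phi_i = k\,z_i + \sigma_i$ is exactly the paper's accounting that the increase in $z_i$ is at least $1/k$ times the total of $i$'s bids so far minus the digit-sum of $z_i^c$, relying on the same Lemmas \ref{lem:digital}, \ref{lem:overflow} and \ref{lem:match}, with the same final bound on $\sigma_i$. Your explicit case split for advertisers that become infeasible (falling back on $z_i \geq \text{spent}/B_i > 1-\max_j\{b_{ij}\}/B_i$, and noting that only the operative conclusion, not the first inequality of the chain, is guaranteed there) is a small point of care that the paper's own proof glosses over, but it does not change the argument.
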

\begin{proof}
Throughout the algorithm, every edge $(i,j)$ causes the sum of digits of $z_i^c$ to increase by $b_{ij}/B_i$ (again ignoring the $\frac{1}{d-1}\cdot C$ term), unless $(i,j)$ are matched. Moreover, the sum of digits does not decrease due to carries. On the other hand, every increase in $z_i$ by $b$ coincides with a decrease in the sum of digits of $z_i^c$ plus $\sum_{(i,j)\textnormal{ matched}}b_{ij}/B_i$, of at most $k\cdot b$, by Lemmas \ref{lem:overflow} and \ref{lem:match}. 
Put otherwise, the increase in $z_i$ is at least $1/k$ times the total sum of $i$'s bids so far, minus the sum of digits of $z_i^c$. By Lemma \ref{lem:digital}, the sum of digits of $z_i^c$ by Line \ref{line:feasibilityLoop} cannot exceed $k\cdot \max_j\{b_{ij}/B_i\}$. The lemma follows.
\end{proof}

Given the above we can now prove our main result.

\begin{thm}\label{thm:adwords_lb}
On general-bid ad allocations on $(k,d)$-graphs with $k\geq d-1$ Algorithm \ref{alg:primal-dual-ad-allocation-general} gains $\sum_i \big(B_i - \max_{j}b_{ij} \big)\cdot\big(1-\big(1-\frac{1}{d}\big)^k\big)$, and is thus $\big(1-R_{\max}\big)\cdot\big(1-\big(1-\frac{1}{d}\big)^k\big)$-competitive.
\end{thm}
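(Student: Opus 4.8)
The plan is to run the standard online primal-dual accounting argument, with the novel twist (already flagged in the high-level intuition) that dual feasibility is only restored at termination, and to split the dual cost into the ``productive'' part incurred during the online phase and the ``rounding'' part incurred in Lines~\ref{line:feasibilityLoop}--\ref{line:feasibility}. First I would verify the per-ad-slot bound $\Delta D \le (1+C)\cdot \Delta P$: when ad slot $j$ is matched to $i$, the primal gain is $\Delta P = b_{ij}$, and the dual cost comes from (a) increasing $z_i$ by $b_{ij}/B_i$, which by Lemma~\ref{lem:match} is paid for by a decrease of at most $b_{ij}/B_i$ in $z_i^c$ (so it is free against the already-accounted-for growth of $z_i^c$), plus the nullified fraction $z_i^f B_i \le (z_i^c+C b_{ij}/B_i)B_i$ at match time — wait, more carefully, the genuinely new dual cost at $j$'s arrival is $z_i^f\cdot B_i$ for the matched vertex plus $\sum_{i'\in N(j)\setminus\{i\}}\frac{1}{d-1}(z_{i'}^f\cdot\frac{d}{d-1}\text{-growth contribution})$; by the choice of match $i$ maximizing $z_i^f B_i + C b_{ij}$ and $|N(j)\setminus\{i\}|\le d-1$, the off-vertex contributions telescope exactly as in Theorem~\ref{thm:vw-high-deg}'s proof to give total new dual cost at most $(1+C)b_{ij}$. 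The key point I must make precise is that all downstream increases of the $z_{i'}$ variables in Lines~\ref{line:k-digit-Loop}--\ref{line:k-1-digit-loopEnd} are \emph{not} new cost: by Lemma~\ref{lem:overflow} they are matched by equal-or-larger decreases in $z_{i'}^c$, so the total dual cost accumulated online is at most $(1+C)\cdot P$ where $P$ is the online primal value (which here equals the final primal value).

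Next I would bound the cost of the feasibility-rounding loop. By Lemma~\ref{lem:almost-feasible}, before Line~\ref{line:feasibilityLoop} each $i$ has $z_i \ge 1-R_{\max}$, so rounding $z_i$ up to $\max\{1,z_i\}$ costs at most $R_{\max}\cdot B_i$ per advertiser, for a total of at most $\sum_i R_{\max} B_i = R_{\max}\sum_i B_i$. I then relate $\sum_i B_i$ to the online dual cost, or more directly: after rounding, $D = \sum_i B_i z_i + \sum_j y_j = \sum_i B_i$ (every $z_i=1$, and no $y_j$ variables are ever used), and this final $D$ is an upper bound on $\mathrm{OPT}$ by weak LP duality. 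Combining, $\mathrm{OPT}\le D = \sum_i B_i$, while the per-slot analysis gives $P \ge \frac{1}{1+C}\big(\text{online dual cost}\big)$, and the online dual cost is at least $\sum_i B_i z_i^{\text{pre-round}} \ge (1-R_{\max})\sum_i B_i$ — here I need to be careful that the online dual cost, i.e.\ the sum of all genuinely-incurred dual increments before Line~\ref{line:feasibilityLoop}, equals $\sum_i B_i z_i^{\text{pre-round}} + \sum_i z_i^c B_i \ge \sum_i B_i z_i^{\text{pre-round}}$. Hence $P \ge \frac{1-R_{\max}}{1+C}\sum_i B_i$, and with $C = 1/\big((\tfrac{d}{d-1})^k-1\big)$ we get $\frac{1}{1+C} = 1-(1-\tfrac1d)^k$, yielding $P \ge (1-R_{\max})\big(1-(1-\tfrac1d)^k\big)\sum_i B_i$. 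Sharpening $R_{\max}$ to $\max_j b_{ij}/B_i$ per advertiser (via the stronger form of Lemma~\ref{lem:almost-feasible}) upgrades this to the claimed $\sum_i(B_i-\max_j b_{ij})\big(1-(1-\tfrac1d)^k\big)$.

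The last ingredient is that $P$, the value of the primal solution $x$ maintained by the algorithm, is a feasible integral allocation: primal feasibility is immediate since each $j$ is assigned to at most one $i$ and the algorithm only matches feasible neighbors (residual budget at least the bid), so the budget constraints hold; thus $P \le \mathrm{OPT}$ as well and $P$ is genuinely realized revenue. The competitive ratio $(1-R_{\max})(1-(1-\tfrac1d)^k)$ then follows by dividing the revenue bound by $\sum_i B_i \ge \mathrm{OPT}$.

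The main obstacle I anticipate is the bookkeeping that makes ``all dual cost traces back to increases of $z_i^c$'' rigorous in the presence of the subtract-then-re-add dance with $z_i^f$ (Lines for $z_i^f \leftarrow z_i^c - z_i^f$ style updates). One must argue that the quantity $z_i^c + z_i^f$ behaves monotonically enough — specifically that the \emph{new} dual mass created at each ad slot is exactly the $z_i^f$-growth term $\frac{1}{d-1}C\,b_{i'j}/B_{i'}$ on the $\le d-1$ non-matched neighbors plus the $z_i^f B_i$ released at the matched neighbor, and that the reshuffling in Lines~\ref{line:k-digit-Loop}--\ref{line:k-1-digit-loopEnd} is value-conserving-or-decreasing on $z_i^c + z_i$ (Lemma~\ref{lem:overflow}(i)). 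Once that invariant is nailed down, the telescoping over $N(j)\setminus\{i\}$ and the $1/(1+C)$ collapse are exactly as in the vertex-weighted case, so I would present the argument as: (1) invariant that $z_i$-increases are covered by $z_i^c$-decreases; (2) per-slot $\Delta D\le(1+C)\Delta P$; (3) Lemma~\ref{lem:almost-feasible} to bound rounding cost; (4) assemble via weak duality.
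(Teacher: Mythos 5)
Your proposal is correct and takes essentially the same route as the paper's own proof: the per-slot bound $\Delta D\leq(1+C)\cdot\Delta P$ via the argmax choice of match together with Lemmas \ref{lem:overflow} and \ref{lem:match} (so that the increases of $z_{i'}$ in the carry/overflow lines are charged to earlier growth of $z_{i'}^c$), then Lemma \ref{lem:almost-feasible} to lower-bound the pre-rounding dual mass by $\sum_i\big(B_i-\max_j b_{ij}\big)$, and finally the comparison with $OPT\leq\sum_i B_i$. The only slip is bookkeeping: the genuinely new dual cost at the matched vertex is $b_{ij}-z_i^f\cdot B_i$ rather than $z_i^f\cdot B_i$, which is precisely what makes the telescoping collapse to $(1+C)\cdot b_{ij}$ exactly as in Theorem \ref{thm:vw-high-deg}.
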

\begin{proof}
Lemmas \ref{lem:overflow} and \ref{lem:match} imply increases in $z_i$ can be traced back to a previous increase in $z_i^c$ of the same value or higher. We therefore bound increases of $z_i^c\cdot B_i$ in order to bound the total dual cost.
For each online $j\in R$, by our choice of match $i$, the change to the dual cost is at most $(1+C)$ times the change to the primal value, as in Algorithm \ref{alg:primal-dual-ad-allocation}. However, by Lemma \ref{lem:almost-feasible}, by Line \ref{line:feasibilityLoop} each $i$ satisfies $z_i\geq (1-\max_{j}b_{ij}/B_i)$. Consequently, we have that before Line \ref{line:feasibilityLoop} the primal value $P$ and dual cost $D$ satisfy
\[
P\geq \frac{1}{1+C}\cdot D\geq \sum_i \big(B_i - \max_{j\in N(i)}b_{ij} \big)\cdot \left(1-\left(1-\frac{1}{d}\right)^k\right)
\]
As the primal value is unaffected by Lines \ref{line:feasibilityLoop}-\ref{line:feasibility}, $P$ above is our algorithm's gain. The competitive ratio follows from $OPT\leq \sum_i B_i$ and the definition of $R_{\max}$.
\end{proof}

Finally, we note that Lemmas \ref{lem:digital},\ref{lem:overflow},\ref{lem:match} and \ref{lem:almost-feasible} hold for all advertisers $i$ satisfying $\sum_j b_{ij} \geq k\cdot B_i$, irrespective of outliers who don't hold this property, implying the following.

\begin{thm}[Outliers]\label{thm:outliers_general_ad_allocation}
	Let $S\subseteq L$ be the set of outlying advertisers (advertisers $i$ with $\sum_j b_{ij} < k\cdot B_i$), and $\alpha$ be such that $\sum_{i\in S} B_i \leq \alpha \cdot \sum_{i\in L}$. Then Algorithm \ref{alg:vw-high-pd} gains at least 
	\[
	(1-\alpha)\cdot\left(\big(1-R_{\max}\big)\cdot (1-\left(1-\frac{1}{d}\right)^k)\right)\cdot \big(\sum_i B_i\big),
	\]
	and in particular is $(1-\alpha)\cdot\left(\big(1-R_{\max}\big)\cdot (1-\left(1-\frac{1}{d}\right)^k)\right)$-competitive.	
\end{thm}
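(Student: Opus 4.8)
The statement to prove is Theorem \ref{thm:outliers_general_ad_allocation}, the outlier version of the general-bid ad allocation guarantee. Let me think about how this follows from the machinery already developed.

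\textbf{Plan.} The proof is a direct adaptation of the proof of Theorem \ref{thm:adwords_lb}, observing that nothing in the core analysis actually required \emph{every} advertiser to satisfy $\sum_j b_{ij}\geq k\cdot B_i$ — only the advertisers outside $S$ need this property for the lower bound on their individual $z_i$ to go through. First I would point out that Lemmas \ref{lem:digital}, \ref{lem:overflow}, and \ref{lem:match} are purely structural statements about the evolution of $z_i^c$, $z_i^f$, and $z_i$ under the algorithm's update rules; their proofs never invoke the $(k,d)$-boundedness of the offline side, only the per-ad-slot degree bound $d$ and the numeral-system bookkeeping. Hence they hold verbatim for all advertisers, outliers included. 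Lemma \ref{lem:almost-feasible} is the only place where $\sum_j b_{ij}\geq k\cdot B_i$ is used, and there it is used per-advertiser: for each $i$ with $\sum_j b_{ij}\geq k\cdot B_i$ we still get $z_i\geq 1-R_{\max}$ by Line \ref{line:feasibilityLoop}. So for every $i\notin S$ this bound survives unchanged.

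\textbf{Key steps, in order.} (1) Re-run the per-ad-slot primal-to-dual ratio argument exactly as in Theorem \ref{thm:adwords_lb}: by the choice of match $i$ maximizing $z_i^f\cdot B_i + C\cdot b_{ij}$, together with the degree-$d$ bound on $N(j)$, the dual cost increment is at most $(1+C)$ times the primal increment, and by Lemmas \ref{lem:overflow} and \ref{lem:match} all dual cost is traceable to past increases of $z_i^c\cdot B_i$. This gives $P\geq \frac{1}{1+C}\cdot D'$ where $D'$ is the dual cost accrued \emph{before} Lines \ref{line:feasibilityLoop}-\ref{line:feasibility}, and $\frac{1}{1+C}=1-\big(1-\frac1d\big)^k$ by the choice of $C$. (2) Bound the final dual value from below using only non-outliers: by Lemma \ref{lem:almost-feasible} applied to each $i\notin S$, we have $z_i\geq 1-R_{\max}$ by Line \ref{line:feasibilityLoop}, hence $D'\geq \sum_{i\notin S}(1-R_{\max})\cdot B_i \geq (1-R_{\max})(1-\alpha)\sum_{i\in L}B_i$, using $\sum_{i\in S}B_i\leq \alpha\sum_{i\in L}B_i$. (Outliers contribute only nonnegative amounts, so dropping them is safe.) (3) Combine: $P\geq \frac{1}{1+C}\cdot D' \geq (1-\alpha)\cdot(1-R_{\max})\cdot\big(1-\big(1-\frac1d\big)^k\big)\cdot\sum_i B_i$. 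Since the primal value is untouched by Lines \ref{line:feasibilityLoop}-\ref{line:feasibility}, this $P$ is the algorithm's gain, and dividing by $OPT\leq \sum_i B_i$ yields the competitive ratio.

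\textbf{Main obstacle.} There is no real mathematical obstacle here — the theorem is essentially a corollary whose proof is ``the same proof, but only sum over $i\notin S$ at the end.'' The one point requiring a moment's care is making sure that the \emph{per-ad-slot} ratio bound (step 1) genuinely does not degrade in the presence of outliers: outliers participate in matches and in dual updates like any other advertiser, but since the ratio bound $\Delta D\leq (1+C)\Delta P$ is argued edge-by-edge from the local update rules (which are uniform across advertisers), the presence of outliers is irrelevant there. The only asymmetry between outliers and non-outliers is in the \emph{lower} bound on the terminal $z_i$, and that is exactly what we handle by restricting the sum to $i\notin S$ in step 2. (A minor notational caveat: the theorem statement refers to ``Algorithm \ref{alg:vw-high-pd}'' but the relevant algorithm for general bids is Algorithm \ref{alg:primal-dual-ad-allocation-general}; the proof is carried out for the latter.) I would also remark, as the paragraph preceding the theorem does, that since the bound on $D'$ is in fact $\sum_{i\notin S}(B_i - \max_j b_{ij})\cdot\big(1-\big(1-\frac1d\big)^k\big)$, the stronger outlier-robust-\emph{and}-$R_{\max}$-robust guarantee $\sum_{i\notin S}(B_i-\max_j b_{ij})\cdot\big(1-\big(1-\frac1d\big)^k\big)$ also holds, of which the stated bound is a weakening.
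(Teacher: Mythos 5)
Your proof is correct and takes essentially the same route as the paper, which proves this theorem by the one-line observation that Lemmas \ref{lem:digital}--\ref{lem:almost-feasible} hold per-advertiser for every $i$ with $\sum_j b_{ij}\geq k\cdot B_i$, so the dual lower bound from Theorem \ref{thm:adwords_lb} is simply restricted to $i\notin S$, costing the factor $(1-\alpha)$. You also correctly flag the paper's typo that the statement should refer to Algorithm \ref{alg:primal-dual-ad-allocation-general} rather than Algorithm \ref{alg:vw-high-pd}.
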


\section{Upper Bounds}\label{sec:upper-bounds}
\subsection{Maximum Matching and Vertex-Weighted Matching}

In order to construct hard examples, we start by showing that the optimal matching in $(k,d)-$graphs matches \emph{all} the advertisers whenever $k\geq d$.

\begin{lem}\label{lem:saturate}
	Every $(k,d)-$graph $G=(L,R,E)$ with $k\geq d$ has a matching matching all of $L$. 
\end{lem}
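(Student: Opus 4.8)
The natural approach is Hall's theorem: it suffices to show that every subset $S \subseteq L$ satisfies $|N(S)| \geq |S|$. I would count edges between $S$ and its neighborhood $N(S)$ in two ways. On one hand, every vertex $i \in S$ has degree at least $k$, so the number of edges incident to $S$ is at least $k \cdot |S|$. On the other hand, all of these edges land in $N(S)$, and each vertex $j \in N(S) \subseteq R$ has degree at most $d$, so the number of such edges is at most $d \cdot |N(S)|$. Combining, $k \cdot |S| \leq d \cdot |N(S)|$, hence $|N(S)| \geq (k/d)\cdot |S| \geq |S|$ since $k \geq d$. By Hall's theorem there is a matching saturating $L$.

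I would double-check the degenerate cases: if $S = \emptyset$ the inequality is trivial, and the assumption that every left vertex has degree at least $k \geq d \geq 1$ guarantees no isolated left vertices (so the $k\geq d$ comparison is never vacuous in a problematic way). One should also note the definition of $(k,d)$-graph used here is the combinatorial one ($d(i)\ge k$), which is exactly what the edge-counting argument needs; the ad-allocation variant with $\sum_j b_{ij}\ge k B_i$ is not relevant for this particular lemma.

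I do not anticipate a genuine obstacle here — this is a clean double-counting plus Hall argument. The only mild subtlety is making sure the edge count ``edges incident to $S$'' is interpreted as a sum of degrees $\sum_{i\in S} d(i)$ (with no overcounting, since $G$ is bipartite and $S\subseteq L$), which is where bipartiteness is implicitly used. This lemma is then the springboard for the matching upper-bound constructions in Section \ref{sec:upper-bounds}: knowing $\mathrm{OPT}$ saturates all of $L$ lets one measure an online algorithm's loss directly against $|L|$.
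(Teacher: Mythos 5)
Your proposal is correct and follows essentially the same route as the paper: a double count of the edges incident to $S$ (at least $k|S|$ by the left-degree bound, at most $d|N(S)|$ by the right-degree bound) combined with Hall's theorem. Nothing to add.
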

\begin{proof}
	By Hall's Theorem $G$ has a matching with all of $L$ matched if and only if every subset $A\subseteq L$ satisfies $|\Gamma(A)|\geq |A|$. But, as $G$ is a $(k,d)-$graph we have
	\begin{equation*}
	k \cdot |A| \leq |E[G[A]]| \leq d\cdot |\Gamma(A)|
	\end{equation*}
	Consequently, we find that $|\Gamma(A)|\geq \frac{k}{d} \cdot|A| \geq |A|$, and the lemma follows.
\end{proof}

Equipped with Lemma \ref{lem:saturate} we may now prove this section's main result -- an upper bound matching the lower bounds of Section \ref{sec:high_deg}, implying algorithm \textsc{high-degree}'s optimality. To this end we cause \textsc{high-degree} to be effectively indistinguishable from any other algorithm.

\begin{thm}\label{thm:matching_ub}
	For all $k\geq d$ no deterministic online algorithm for bipartite matching can achieve competitive ratio better than $1-(1-\frac{1}{d})^k$ on $(k,d)-$graphs.
\end{thm}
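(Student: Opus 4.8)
The plan is to construct, for every deterministic online algorithm $A$ and every pair $k \geq d$, a family of $(k,d)$-graphs on which $A$ matches at most a $\big(1-(1-\frac1d)^k\big)$-fraction of the left side, while Lemma~\ref{lem:saturate} guarantees the offline optimum matches all of $L$. The key idea, hinted at by the sentence ``we cause \textsc{high-degree} to be effectively indistinguishable from any other algorithm,'' is to present the adversarial instance in a way that forces \emph{any} algorithm's choices to mimic those \textsc{high-degree} would make, then recycle the tightness analysis that accompanies \textsc{high-degree}'s lower bound (the potential function $\phi = \sum_{i \in U_L} (d/(d-1))^{d(i)}$ of Theorem~\ref{thm:strong-high-deg} should be the right bookkeeping device, now read in reverse to show the bound cannot be beaten).

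Concretely, I would build the graph in rounds / levels. Start with a large set of advertisers all of degree $0$. In each round, send a batch of online ad slots each having degree exactly $d$, whose $d$ neighbors are advertisers that are currently still unmatched and, crucially, are all at the \emph{same current degree level}; by symmetry the algorithm cannot prefer one over another, so whichever it picks, the adversary may relabel so that it is as if a canonical ``first'' advertiser in each slot's neighborhood was chosen. Each such arrival matches one advertiser and raises the degree of $d-1$ unmatched advertisers by one. Choosing the batch sizes so that after round $t$ the still-unmatched advertisers all have degree exactly $t$, and stopping at round $k$, the surviving unmatched set has size exactly $(1-\frac1d)^k$ times the original count: each round a $\frac1d$ fraction of the ``active'' advertisers gets matched and a $\frac{d-1}{d}$ fraction survives to the next level (this is precisely the recursion that makes the potential bound tight). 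After round $k$ one then pads, exactly as in the tight example for \textsc{greedy} in Section~\ref{sec:greedy_tight}: append degree-one ad slots to lift every advertiser's degree to at least $k$ and to ensure that \emph{all} advertisers — including the ones the online algorithm left unmatched — could have been matched simultaneously, so the resulting graph is genuinely $(k,d)$-bounded and $OPT = |L|$ by Lemma~\ref{lem:saturate}.

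To make this rigorous I would formalize an adversary argument by induction on the level $t$: maintain the invariant that, against \emph{any} deterministic $A$, after appropriate relabeling the set of unmatched advertisers of degree exactly $t$ has the prescribed size $n\cdot(1-\frac1d)^t$ (for $n$ the initial pool), and that all still-unmatched advertisers have degree $\le t$. The inductive step sends the next batch of degree-$d$ slots among the degree-$t$ unmatched advertisers, partitioned so that the neighborhoods are disjoint groups of $d$; $A$ must match one per group (or leave it unmatched, which only helps the adversary), and in either case the count of degree-$(t{+}1)$ unmatched vertices is pinned down. The relabeling step — choosing which advertiser in each group plays which role based on $A$'s actual decision — is what makes the construction oblivious to $A$'s internal logic. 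Summing, $A$ leaves at least $n(1-\frac1d)^k$ advertisers permanently unmatched, so its value is at most $|L| - n(1-\frac1d)^k$; taking $n$ to be essentially all of $|L|$ (the padding advertisers are a negligible additive term, or one scales up) yields competitive ratio $\le 1-(1-\frac1d)^k + o(1)$, and a limiting / exact-parameter argument removes the lower-order slack.

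The main obstacle I anticipate is the divisibility/counting: for the construction to land exactly on $(1-\frac1d)^k$ the pool sizes at each level must be integers dividing cleanly by $d$, which forces $n$ to be a multiple of $d^k$ and requires care that the ``$d-1$ unmatched neighbors per slot'' can always be assembled into disjoint $d$-groups without reusing a vertex within a round — this is a routine but fiddly combinatorial design (essentially arranging a disjoint union of copies of the biclique-like gadget from Figure~\ref{fig:tight_maximal} at each level). The second delicate point is arguing genuine adversary-obliviousness: one must be sure that at the moment each batch arrives, the relevant advertisers are truly indistinguishable to $A$ (same degree, same history of offered-and-declined edges), so that $A$'s decision can be post-hoc relabeled; presenting the instance level-by-level with perfectly symmetric gadgets is what secures this, and I would state it as a clean symmetry lemma before running the induction.
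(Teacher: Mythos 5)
Your proposal is correct and follows essentially the same route as the paper: a $k$-phase adversarial construction in which each phase's degree-$d$ ad slots partition the currently unmatched advertisers into disjoint groups (so a $(1-\frac{1}{d})$-fraction survives each phase), followed by padding slots attached to the already-matched advertisers to make the instance $(k,d)$-bounded, and an appeal to Lemma~\ref{lem:saturate} to get $OPT=|L|$. The only cosmetic differences are that the paper dispenses with your symmetry/relabeling lemma by letting the adaptive adversary simply build each phase on whichever advertisers the deterministic algorithm left unmatched (treating any slot the algorithm declines as matched to an arbitrary neighbor, which only helps the algorithm), starts with $d^{k+1}$ advertisers so all divisibility works out exactly, and pads with degree-$d$ slots, so the bound $1-(1-\frac{1}{d})^k$ is attained exactly with no $o(1)$ slack to remove.
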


\begin{proof}
	Let $\mathcal{A}$ be some online matching algorithm. The adversarial input consists of $d^{k+1}$ advertisers, with the ad slots arriving in $k$ phases, numbered $0$ to $k-1$. During the $i$-th phase, which begins with $d^{k+1}\cdot (1-\frac{1}{d})^i$ unmatched advertisers each of degree $i$, the arriving ad slots each have exactly $d$ neighbors, all unmatched; every unmatched advertiser neighbors exactly one new ad slot per phase.~Every phase causes unmatched advertisers to have their degree increase by one, and exactly a $(1-\frac{1}{d})$-fraction of the advertisers unmatched at the phase's beginning remain unmatched. (If algorithm $\mathcal{A}$ does not match some ad slot to one of its $d$ unmatched neighbors, we consider it matched to an arbitrary neighbor; this can only serve to improve $\mathcal{A}$'s performance.) After the $k$ phases additional ad slots of degree exactly $d$ arrive in order to increase the degree of the matched advertisers to $k$. The resulting graph is $k$-regular and $d$-regular on the offline and online sides respectively, and is thus a $(k,d)-$graph. Moreover, exactly $d^{k+1}\cdot (1-\frac{1}{d})^k$ of the $d^{k+1}$ advertisers are unmatched. However, by Lemma \ref{lem:saturate} all $d^{k+1}$ advertisers can be matched simultaneously. The theorem follows.
\end{proof}


\begin{cor}\label{cor:regular_hard}
	The bound of Theorem \ref{thm:matching_ub} holds for $d$-regular graphs with $d^{d+1}\leq n$, where $n=|L|=|R|$. In particular, Algorithm \textsc{high-degree} is optimal for $d$-regular graphs with $d=O\left(\frac{\log n}{\log \log n}\right)$.
\end{cor}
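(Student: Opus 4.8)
The plan is to obtain the corollary as a direct specialization of Theorem~\ref{thm:matching_ub} to $k=d$, combined with Corollary~\ref{cor:regular}.

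First I would note that the adversarial instance built in the proof of Theorem~\ref{thm:matching_ub} is, by that proof's own conclusion, $k$-regular on the offline side and $d$-regular on the online side, and has exactly $d^{k+1}$ vertices on each side. Setting $k=d$ therefore produces a genuinely $d$-regular bipartite graph on $d^{d+1}$ vertices per side on which no deterministic online algorithm matches more than a $\bigl(1-(1-\tfrac1d)^{d}\bigr)$-fraction of the offline side, even though (via Lemma~\ref{lem:saturate}, as used inside that proof) a perfect matching exists. Hence as soon as $n\ge d^{d+1}$ we have a valid $d$-regular instance of the required size witnessing the upper bound. If one insists on exactly $n$ vertices per side, I would take a disjoint union of $\lfloor n/d^{d+1}\rfloor$ copies of this instance: the competitive ratio on a disjoint union is a weighted average of the per-copy ratios, each at most $1-(1-\tfrac1d)^{d}$, so the union still forces ratio at most $1-(1-\tfrac1d)^{d}$ (cleanest when $d^{d+1}\mid n$, so the copies exhaust all $n$ vertices).

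Combining this with Corollary~\ref{cor:regular}, which guarantees \textsc{high-degree} is $\bigl(1-(1-\tfrac1d)^{d}\bigr)$-competitive on every $d$-regular graph, yields optimality of \textsc{high-degree} on $d$-regular graphs whenever $d^{d+1}\le n$. For the ``in particular'' clause I would simply solve $d^{d+1}\le n$ for $d$: taking logarithms this reads $(d+1)\log d\le\log n$, and a routine estimate shows it holds whenever $d\le(1-o(1))\,\log n/\log\log n$, i.e.\ $d=O(\log n/\log\log n)$.

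Since the hard construction is entirely borrowed from Theorem~\ref{thm:matching_ub}, there is no real obstacle here; the only point needing a little care is the vertex-count bookkeeping for $n$ not a multiple of $d^{d+1}$, where one either restricts to such multiples or appends a small hard $d$-regular gadget on the leftover vertices — padding instead with an \emph{easy} $d$-regular graph would raise the overall ratio and so must be avoided.
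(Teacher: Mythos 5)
Your derivation is essentially the paper's intended one: the paper offers no separate proof, the corollary being exactly the $k=d$ specialization of Theorem~\ref{thm:matching_ub} (whose construction is $d$-regular with $d^{d+1}$ vertices on each side and admits a perfect matching via Lemma~\ref{lem:saturate}), combined with Corollary~\ref{cor:regular} for the matching lower bound of \textsc{high-degree}, plus the observation that $d^{d+1}\le n$ translates to $d=O\left(\frac{\log n}{\log\log n}\right)$. Your extra bookkeeping for $n$ not a multiple of $d^{d+1}$ is more careful than anything the paper does; the only shaky point there is the suggested ``small hard $d$-regular gadget'' on the leftover vertices (for leftovers of size close to $d$, e.g.\ $K_{d,d}$, no such gadget forces the ratio down to $1-(1-\tfrac1d)^{d}$), but restricting to multiples of $d^{d+1}$ or reading the claim asymptotically, as the paper implicitly does, suffices.
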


\subsection{Upper Bound for Ad Allocation}\label{sec:1-R}
In this subsection we prove an upper bound on the possible competitive ratio of deterministic algorithms in $(k,d)$-graphs. We start by showing a simple weaker bound, useful in proving this section's main result.

\begin{lem}\label{lem:1-R}
For all ratio $R_{\max}$ no deterministic algorithm can achieve competitive ratio better than $(1-R_{\max})$ for the ad allocation problem under the adversarial model. This bound holds even for $(k,d)$-graphs for all $k$ and $d$.
\end{lem}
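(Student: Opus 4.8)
The plan is to exhibit, for any target competitive ratio $c > 1 - R_{\max}$, an adversarial instance on which every deterministic algorithm gains strictly less than $c \cdot OPT$. The instance uses a single advertiser $i$ with budget $B_i = 1$ and bid $b_{ij} = R_{\max}$ for every ad slot $j$ it neighbors (so the ratio is exactly $R_{\max}$). First I would present the ``small'' instance: a single ad slot $j_1$ with the unique neighbor $i$. A deterministic algorithm either allocates $j_1$ to $i$ or not. If it does \emph{not}, the input stops; $OPT = R_{\max}$ while the algorithm gains $0$, so its ratio is $0 < c$. If it \emph{does} allocate $j_1$ to $i$, the adversary continues: it now presents $\lceil 1/R_{\max} \rceil$ further ad slots, each a neighbor of $i$ only. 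The algorithm has already committed $R_{\max}$ of $i$'s budget and can keep allocating until $i$'s residual budget drops below $R_{\max}$; thus the algorithm's total gain is at most $1$ (it cannot exceed the budget). Meanwhile the optimum allocation simply declines $j_1$ and spreads the remaining ad slots across $i$'s budget, again gaining (essentially) $1$ — here I would just note $OPT \ge 1$ is immediate once there are $\lceil 1/R_{\max}\rceil$ ad slots available. That gives ratio $\le 1/1 = 1$, which is not yet below $c$, so this bare version is too weak; the fix is to scale up.

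To get the tight $(1-R_{\max})$ bound I would take $N$ disjoint copies of the gadget above, one advertiser $i_1,\dots,i_N$ per copy, each with budget $1$ and bids $R_{\max}$, and feed them to the algorithm copy by copy, making the adversary's ``continue or stop'' decision for each copy depend on whether the algorithm allocated that copy's first (degree-one) ad slot to its advertiser. For each copy, if the algorithm allocates the first ad slot, it is ``locked in'': on that copy it earns at most $1$ but forfeits nothing the optimum could not also earn, whereas the optimum declines the first ad slot and earns $\lceil 1/R_{\max}\rceil \cdot R_{\max} \ge 1$ from the flood of later ad slots. The key quantitative point is the per-copy comparison: when the algorithm takes the first bid it spends $R_{\max}$ on an ad slot that the optimum leaves unserved, so on each copy the algorithm's gain is at most $OPT_{\text{copy}} - R_{\max} \le OPT_{\text{copy}} - R_{\max}\cdot OPT_{\text{copy}}= (1-R_{\max})\cdot OPT_{\text{copy}}$ after normalizing $OPT_{\text{copy}} = 1$; and if the algorithm does \emph{not} take the first ad slot on a copy, the adversary stops that copy immediately, giving the algorithm $0$ there while $OPT_{\text{copy}} = R_{\max}$, an even worse ratio. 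Summing over the $N$ copies yields $ALG \le (1-R_{\max})\cdot OPT$. To make the statement hold for \emph{every} $(k,d)$-graph rather than just degree-one gadgets, I would pad: attach $\max\{k,0\}$ extra dummy ad slots to each advertiser (so $\sum_j b_{ij} \ge k B_i$) and, if $d>1$ is required, introduce auxiliary advertisers with disjoint large budgets and matching dummy ad slots so that every ad slot's degree can be inflated to $d$ without changing the algorithm's or the optimum's effective behavior on the gadget; these paddings are inconsequential because the dummy ad slots can all be simultaneously satisfied.

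The main obstacle I anticipate is the padding/bookkeeping rather than the core idea: one must verify that inflating degrees to meet the $(k,d)$-constraints — adding the $k$-fold ``flood'' of ad slots to each advertiser and the side advertisers needed to raise online-degrees to $d$ — can be done so that (a) the resulting graph is genuinely a $(k,d)$-graph, (b) the optimum on the padded instance still essentially equals $\sum_i B_i$, and (c) the online algorithm still faces the same fork on each copy (this is why the first ad slot of each copy is kept degree-one, or has only dummy co-neighbors that are never worth serving). Given Lemma \ref{lem:1-R}'s modest claim, none of this is deep; the construction is a routine ``commit-or-lose'' adversary argument, and the cleanest writeup is to first do the one-copy, degree-unconstrained version to isolate the $(1-R_{\max})$ loss, then remark that $N$-fold repetition plus padding upgrades it to the full statement.
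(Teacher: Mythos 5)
There is a genuine gap, and it is in the central quantitative step, not the padding. In your gadget every edge bids the same value $R_{\max}$ against a budget of $1$, and for such uniform-bid instances there is simply no $(1-R_{\max})$ loss to be had: if the algorithm accepts the first (degree-one) ad slot and the adversary then floods with $\lceil 1/R_{\max}\rceil$ further slots, the algorithm can keep accepting until it has served $\lfloor 1/R_{\max}\rfloor$ slots, earning exactly $\lfloor 1/R_{\max}\rfloor\cdot R_{\max}$ --- which is precisely what the optimum earns on that copy (the optimum is also capped by the budget and can only collect multiples of $R_{\max}$). Your claimed per-copy inequality ``ALG $\le OPT_{\mathrm{copy}}-R_{\max}$'' is therefore false: the $R_{\max}$ spent on the first slot is not forfeited, it is revenue, and serving a slot that the optimum leaves unserved costs the algorithm nothing when all bids are equal. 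Concretely, the plain \textsc{greedy} algorithm achieves ratio $1$ (up to integrality of $1/R_{\max}$) on your family, so no repetition over $N$ copies or degree padding can push the bound down to $1-R_{\max}$; the ``commit-or-lose'' fork only punishes declining, which greedy never does.

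The loss of $R_{\max}$ fundamentally requires bids of \emph{different} magnitudes interacting with the feasibility rule that an advertiser whose residual budget is below a bid cannot accept it. The paper's construction does exactly this, adaptively: each advertiser (the center of a star, so the instance is trivially $(k,d)$-bounded for any $k,d$ by adding enough leaves) sees bids of a tiny value $\epsilon$ as long as its residual budget is at least $R_{\max}$, and bids of $R_{\max}$ once the residual drops below $R_{\max}$. Then either the algorithm never spends past $1-R_{\max}$ (gain at most $1-R_{\max}$), or it does, gaining at most $1-R_{\max}+\epsilon$, after which every arriving bid is $R_{\max}$ and is infeasible for it; in both cases the offline optimum exhausts the full budget (using the $R_{\max}$-bids, or enough $\epsilon$-bids). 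To repair your proof you would need to replace the uniform-$R_{\max}$ flood by this kind of two-scale, residual-budget-dependent bidding; the disjoint-copies and degree-padding bookkeeping you describe is fine but cannot substitute for that missing ingredient.
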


\begin{proof}
The hard input consists of disjoint stars with advertisers for internal vertices and ad slots for leaves. Every advertiser $i$ has budget $B_i=1$, with $i$'s bids given by
\[
b_{ij} =
\begin{cases}
R_{\max} & \text{if } i\text{'s remaining budget is less than } R_{\max} \\
\epsilon  & \text{else} \\
\end{cases}
\]
Given enough ad slots, an optimal allocation exhausts all advertisers' budgets, but every advertiser $i$ gains at most $1-R_{\max}+\epsilon$, whether or not $i$ has neighbors $j$ with $b_{ij}=R_{\max}$. Summing over all advertisers, the lemma follows.
\end{proof}
Using the above and extending Theorem \ref{thm:matching_ub}'s proof, we can now prove the following.

\begin{thm}\label{thm:adwords_ub}
	For all $k\geq d$ no deterministic online algorithm for ad allocation is better than $(1-R_{\max})\cdot \left(1-\left(1-\frac{1}{d}\right)^{k/R_{\max}}\right)$-competitive on $(k,d)$-graphs with $R_{\max}\leq \frac{1}{2}$ a unit fraction.
\end{thm}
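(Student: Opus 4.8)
\textbf{Proof plan for Theorem \ref{thm:adwords_ub}.}
The plan is to combine the two hardness constructions already in hand: the ``phased regular graph'' construction from the proof of Theorem \ref{thm:matching_ub}, which forces a deterministic algorithm to leave a $(1-\frac1d)^k$ fraction of advertisers unmatched, and the ``budget-wasting stars'' construction from Lemma \ref{lem:1-R}, which forces an extra $(1-R_{\max})$ loss on every advertiser the algorithm \emph{does} serve. The key observation is that the matching hardness instance, when reinterpreted as an ad allocation instance with unit budgets and unit bids, uses $(k,d)$-boundedness only through the bound $\sum_j b_{ij} \ge k\cdot B_i$; if each advertiser instead bids a small amount $\epsilon$ (or $R_{\max}$) per ad slot, then to keep $\sum_j b_{ij}\ge k\cdot B_i$ we must give each advertiser roughly $k/R_{\max}$ neighbors rather than $k$. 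Running the phased argument with $k/R_{\max}$ phases instead of $k$ phases then forces a $(1-\frac1d)^{k/R_{\max}}$ fraction of advertisers to be ``effectively unmatched'' (i.e.\ to have received only the wasteful small bids), which explains the exponent $k/R_{\max}$ in the statement.

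Concretely, I would proceed as follows. First, fix a deterministic algorithm $\mathcal A$ and set $m = k/R_{\max}$ (an integer, since $R_{\max}$ is a unit fraction). Take $d^{m+1}$ advertisers, each with budget $B_i = 1$. Run the phased construction of Theorem \ref{thm:matching_ub} for $m$ phases: in phase $i$ each arriving ad slot has exactly $d$ neighbours among the currently-``active'' advertisers, every active advertiser gets one new ad slot per phase, and each ad slot bids $R_{\max}$ to its $d$ neighbours. An advertiser becomes ``inactive'' once $\mathcal A$ allocates one of these ad slots to it (consuming $R_{\max}$ of its budget). Exactly as in Theorem \ref{thm:matching_ub}, after $m$ phases a $(1-\frac1d)^m = (1-\frac1d)^{k/R_{\max}}$ fraction of the advertisers have never received any ad slot, while each such advertiser has accumulated degree $m$, hence total incident bid value $m\cdot R_{\max} = k$. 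Second, to the advertisers that $\mathcal A$ did match I append additional ad slots so that the graph becomes $(k,d)$-bounded on both sides (each such advertiser ends with total incident bid value $\ge k$ and each ad slot keeps degree $\le d$) — this is the same padding step as before. Third, to force the extra $(1-R_{\max})$ factor, I attach to \emph{every} advertiser a private star of leaf ad slots as in Lemma \ref{lem:1-R}: while the advertiser's residual budget exceeds $R_{\max}$ it is offered bids of $\epsilon$, and once it drops below $R_{\max}$ it is offered a single bid of exactly $R_{\max}$; this caps each advertiser's total revenue at $1 - R_{\max} + \epsilon$, whereas an offline optimum exhausts every budget. Finally, I compare: the offline optimum earns $\sum_i B_i = d^{m+1}$ (every budget fully spent, using Lemma \ref{lem:saturate}-type reasoning on the regular core plus the stars), while $\mathcal A$ earns at most $1-R_{\max}+\epsilon$ from each of the $d^{m+1}(1-(1-\frac1d)^m)$ advertisers it touched in the core phases, and nothing beyond $\epsilon$-noise from the untouched ones; letting $\epsilon\to 0$ yields competitive ratio at most $(1-R_{\max})\cdot\bigl(1-(1-\frac1d)^{k/R_{\max}}\bigr)$.

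The main obstacle I anticipate is making the two constructions genuinely \emph{compatible} inside one $(k,d)$-graph rather than just informally gluing them. Two issues need care. (1) In the core phases each ad slot has degree exactly $d$ and bids $R_{\max}$, so I must verify that the budget accounting still works: an advertiser that stays active through all $m$ phases accumulates $m$ bids of value $R_{\max}$ totalling exactly $k\cdot B_i$, which is precisely what $(k,d)$-boundedness demands — but I must ensure $\mathcal A$ cannot match such an advertiser \emph{twice} in a way that circumvents the phase count, and ensure the adversary's convention (if $\mathcal A$ leaves an ad slot unmatched, treat it as matched to an arbitrary neighbour) still only helps $\mathcal A$ and still deactivates exactly one advertiser per matched slot. (2) The private stars add bids to advertisers who are \emph{also} in the core; I must check that adding the star bids (which are $\epsilon$ or $R_{\max}$) does not push any ad slot's degree above $d$ (stars are disjoint, so fine) and that the padding ad slots for matched advertisers can be chosen with degree $\le d$ without interfering with the stars. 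Getting the arithmetic of ``active advertiser count $\times (1-\frac1d)$ per phase'' to hold exactly requires $d \mid d^{m+1}(1-\frac1d)^i$ for all $i\le m$, which it does, and requires that exactly a $\frac1d$ fraction is matched per phase — again guaranteed because each active advertiser sees exactly one $d$-neighbour ad slot per phase and $\mathcal A$ must match each such ad slot to one of those $d$. Once these bookkeeping points are nailed down, the ratio computation is routine.
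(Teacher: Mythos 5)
Your overall plan is the same as the paper's: run the phased construction of Theorem \ref{thm:matching_ub} for $k/R_{\max}$ rounds with bids $R_{\max}$, so a $(1-\frac{1}{d})^{k/R_{\max}}$ fraction of advertisers is never allocated anything while already satisfying $\sum_j b_{ij}=k\cdot B_i$, then use the gadget of Lemma \ref{lem:1-R} to cap the revenue of the remaining advertisers at $1-R_{\max}+\epsilon$, and use Lemma \ref{lem:saturate} to argue OPT exhausts all budgets (in particular, the untouched advertisers can be assigned $1/R_{\max}$ of the core ad slots each). Up to the choice $d^{m+1}$ versus $d^{k/R_{\max}}$ advertisers, this is exactly the paper's proof.

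However, one concrete step in your construction breaks the bound: you attach a Lemma \ref{lem:1-R} star to \emph{every} advertiser, including the untouched ones, and then assert the algorithm gets ``nothing beyond $\epsilon$-noise'' from them. That assertion is false. The star gadget, by design, contains enough leaves for the budget to be exhausted, and it offers bids of $\epsilon$ as long as the advertiser's residual budget is at least $R_{\max}$; a deterministic algorithm that simply accepts these bids collects revenue approaching $1-R_{\max}$ from each untouched advertiser. Since the whole $(1-(1-\frac{1}{d})^{k/R_{\max}})$ factor comes precisely from the untouched advertisers earning \emph{nothing}, your instance would only certify an upper bound of roughly $1-R_{\max}$, not the claimed product, so the theorem does not follow from the construction as written. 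The fix (and what the paper does) is to give the untouched advertisers no further neighbors at all --- they already meet the offline-side constraint, and OPT fills their budgets from the core ad slots via Lemma \ref{lem:saturate} --- and to apply the Lemma \ref{lem:1-R} gadget only to the advertisers matched during the core phases. A related caution applies to your separate ``padding step as before'': unlike in the matching setting, matched advertisers here still have residual budget $1-R_{\max}\geq R_{\max}$, so any padding edges must themselves follow the adaptive bid rule of Lemma \ref{lem:1-R} (bids of $\epsilon$ while the advertiser is feasible, $R_{\max}$ only once it is not); padding with plain $R_{\max}$-bid slots would let the algorithm spend those advertisers' budgets in full. In the paper the gadget itself serves as the padding, so no separate padding step is needed.
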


\begin{proof}
The offline side consists of $d^{k/R_{\max}}$ advertisers, each with a budget $B_i=1$. For the first phase, all edges have bids $R_{\max}$. During $k/R_{\max}$ rounds ad slots arrive, each neighboring $d$ distinct advertisers, and a $\frac{1}{d}$-fraction of the advertisers are matched. The next round is as the last, but restricted to the previously unmatched advertisers. There are $(1-\frac{1}{d})^{k/R_{\max}}$ unmatched advertisers by this phase's termination; these advertisers now satisfy the offline side's constraints for $(k,d)$-graphs, and receive no more neighbors. All of these advertisers' potential profit is lost. For the matched advertisers we now apply the construction of Lemma \ref{lem:1-R} to guarantee that at most a $(1-R_{\max})$-fraction of their potential profit in an optimal solution is gained, for a total gain of $(1-R_{\max})\cdot \left(1-\left(1-\frac{1}{d}\right)^{k/R_{\max}}\right)\cdot |L|$. Applying Lemma \ref{lem:saturate} repeatedly we find that there exists an allocation with all advertisers  unmatched by algorithm $\mathcal{A}$ matched $1/R_{\max}$ times (to neighbors for which they bid $R_{\max}$), and all advertisers matched by $\mathcal{A}$ also exhausting their budgets simultaneously. The theorem follows.
\end{proof}

\newpage
\section{Randomized Algorithms}\label{sec:randomization}
By relying on the dual updates of Algorithm \ref{alg:vw-high-pd}, we prove competitiveness of algorithm \textsc{random}, which matches every arriving ad slot to some feasible neighbor.
\begin{thm}\label{thm:rand-matching}
	Algorithm \textsc{random} achieves expected competitive ratio of $1-\left(1-\frac{1}{d}\right)^k$ for both unweighted and vertex-weighted matching problems.
\end{thm}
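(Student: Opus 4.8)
The plan is a dual-fitting argument that re-uses the dual variables and dual-update rule of Algorithm \ref{alg:vw-high-pd}, but feeds them with the random matches of \textsc{random} rather than the greedy rule maximizing $(z_i+C)b_{ij}$. I would keep dual variables $z_i$ (initially $0$) and $y_j\equiv 0$, with $C=1/\big(\big(\tfrac{d}{d-1}\big)^k-1\big)$: whenever \textsc{random} matches an arriving ad slot $j$ to an unmatched neighbor $i$, set $z_i\leftarrow 1$ and, for every still-unmatched neighbor $i'\neq i$ of $j$, set $z_{i'}\leftarrow\min\{1,\ z_{i'}\cdot\tfrac{d}{d-1}+\tfrac{1}{d-1}C\}$. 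The first step is to observe that the only property used in the proof of Theorem \ref{thm:vw-high-deg} to conclude that an unmatched advertiser $i$ of current degree $d(i)$ satisfies $z_i=C\cdot\big(\big(\tfrac{d}{d-1}\big)^{d(i)}-1\big)$ is that every ad slot with an unmatched neighbor is matched to one of its unmatched neighbors -- which \textsc{random} guarantees. Hence at termination every unmatched advertiser has degree at least $k$, so $z_i=1$, and every matched advertiser has $z_i=1$ directly; thus $z\equiv 1$, $y\equiv 0$ is a feasible dual solution whose cost is \emph{deterministically} $\sum_i B_i$ (which equals $|L|$ in the unweighted case).

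Next I would bound the expected dual cost against the expected primal profit step by step, conditioning on the history. Fix an ad slot $j$ and condition on all of \textsc{random}'s earlier choices; this fixes $j$'s set $F_j$ of currently unmatched neighbors together with the values $z_i$ for $i\in F_j$. If $F_j=\emptyset$ then $\Delta P=\Delta D=0$. Otherwise let $m=|F_j|$ and note $m\le d(j)\le d$. Since \textsc{random} matches $j$ to a uniformly random $i\in F_j$, we have $\mathbb{E}[\Delta P\mid\text{hist}]=\tfrac1m\sum_{i\in F_j}b_{ij}$, while bounding the $\min$ from above gives $\mathbb{E}[\Delta D\mid\text{hist}]\le\tfrac1m\sum_{i\in F_j}\big[(1-z_i)b_{ij}+\sum_{i'\in F_j\setminus\{i\}}\tfrac{1}{d-1}(z_{i'}+C)\,b_{i'j}\big]$. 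Exchanging the order of summation, each $i'\in F_j$ occurs $m-1$ times in the inner sums, so this bound equals $\tfrac1m\sum_{i\in F_j}\big[(1-z_i)+\tfrac{m-1}{d-1}(z_i+C)\big]\,b_{ij}$; since $\tfrac{m-1}{d-1}\le 1$ and $z_i+C\ge 0$, each summand is at most $(1+C)b_{ij}$. Therefore $\mathbb{E}[\Delta D\mid\text{hist}]\le(1+C)\,\mathbb{E}[\Delta P\mid\text{hist}]$ for both problems ($b_{ij}=1$, resp.\ $b_{ij}=B_i$).

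Finally I would sum over all ad slots: pathwise the total increase in the dual objective is $\sum_i B_i$ (by the first step, regardless of the random path), and it is at most the sum over $j$ of the per-step upper bounds above; taking total expectation and applying the per-step inequality yields $\sum_i B_i\le(1+C)\,\mathbb{E}[P]$, i.e.\ $\mathbb{E}[P]\ge\tfrac{1}{1+C}\sum_i B_i=\big(1-(1-\tfrac1d)^k\big)\sum_i B_i\ge\big(1-(1-\tfrac1d)^k\big)\cdot OPT$, using $OPT\le\sum_i B_i$. I expect the only real obstacle to be the per-step step: checking that replacing the greedy argmax with a uniformly random match costs nothing. This works precisely because averaging over the $m\le d$ unmatched neighbors converts the per-neighbor slack $\tfrac{m-1}{d-1}(z_i+C)\le z_i+C$ into exactly the $(1+C)$ factor the deterministic analysis obtains; dual feasibility and the termination identity $z_i\equiv 1$ transfer verbatim from Theorem \ref{thm:vw-high-deg}, as they never depended on how the match is chosen among unmatched neighbors.
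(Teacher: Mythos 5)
Your proposal is correct and follows essentially the same route as the paper: maintain the dual updates of Algorithm \ref{alg:vw-high-pd} alongside \textsc{random}'s matches, note that the invariant $z_i=C\big(\big(\tfrac{d}{d-1}\big)^{d(i)}-1\big)$ for unmatched advertisers (and hence dual feasibility with $C=1/\big(\big(\tfrac{d}{d-1}\big)^k-1\big)$) holds for any maximally-matching rule, and bound $\mathbb{E}[\Delta D\mid\text{history}]\le(1+C)\,\mathbb{E}[\Delta P\mid\text{history}]$ per arriving ad slot using $|F_j|\le d$ before taking total expectation. The paper's proof does exactly this conditioning-on-state calculation, so no further comment is needed.
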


\begin{proof} 
	Consider some maximally-matching algorithm $\mathcal{A}$.\footnote{By ``maximally-matching'' we mean algorithm $\mathcal{A}$ always matches an arriving ad slot, if at all possible.} We maintain and update a dual solution as in our deterministic Algorithm \ref{alg:vw-high-pd} while choosing matches according to Algorithm $\mathcal{A}$. As observed in the proof of Theorem \ref{thm:vw-high-deg}, such dual updates guarantee all unmatched advertisers $i\in L$ with current degree $d(i)$ satisfy
	\[
	z_i=C\cdot\left(\left(\frac{d}{d-1}\right)^{d(i)}-1\right)
	\]
	Consequently, these dual update rules guarantee dual feasibility for any maximally-matching algorithm, including \textsc{random}, provided $C= 1/((\frac{d}{d-1})^k-1)$.
	We need only bound the expected ratio between the dual and primal solutions' values.
	
	Consider some ad slot $j$ matched to some $i$. We recall that in the vertex-weighted matching problem the bid $b_{ij}$ is exactly $b_{ij}=B_i$. Therefore, given the current state (determined by the previous random choices), including the set $N_F(j)$ of $j$'s unmatched (feasible) neighbors, $j$'s match is chosen uniformly among $N_F(j)$ by \textsc{random}, and consequently
	\[
	\mathbb{E}[\Delta P | state]=\frac{1}{|N_F(j)|}\cdot \sum_{i\in N_F(j)}B_i.
	\]
	On the other hand, by the same argument
	\[
	\begin{array}{ll}	
	\mathbb{E}[\Delta D | state] & =
	\frac{1}{|N_F(j)|}\cdot \sum\limits_{i\in N_F(j)}\left((1-z_i)\cdot B_i + 
	\sum\limits_{i'\in N_F(j)\setminus\{i\}} \left(\frac{1}{d-1}\cdot (z_{i'}+C)\cdot B_{i'}\right)\right) \\
	
	& =
	\frac{1}{|N_F(j)|}\cdot \sum\limits_{i\in N_F(j)}\left((1-z_i)\cdot B_i\right) 
	+ \frac{1}{|N_F(j)|}\cdot \sum\limits_{i'\in N_F(j)} \left(\frac{|N_F(j)|-1}{d-1}\cdot (z_{i'}+C)\cdot B_{i'}\right) \\
	
	& \leq \frac{1}{|N_F(j)|}\cdot \sum\limits_{i\in N_F(j)} B_i\cdot (1+C). \\
	\end{array}
	\]
	With the last inequality following from $|N_F(j)|\leq |N(j)|\leq d$. Taking total expectation over the possible states, we obtain $\mathbb{E}[\Delta D]\leq (1+C)\cdot \mathbb{E}[\Delta P]$. The theorem follows.
\end{proof}

We note that Theorem \ref{thm:rand-matching} can also be proved using the potential-based proof of Subsection \ref{sec:potential}, observing that the expected potential change incurred by the processing of every online arrival is non-negative. In addition, in the same way that Theorem \ref{thm:vw-high-deg} is extended in Theorem \ref{thm:outliers_vw}, we can show that \textsc{random} is also robust to outliers. We omit the details for brevity. Finally, we show that \textsc{random} also performs well for the general online ad allocation problem.

\begin{thm}
	Algorithm \textsc{random} achieves expected competitive ratios of  $1-\left(1-\frac{1}{d}\right)^k$ and $(1-R_{\max})\cdot\big(1-\big(1-\frac{1}{d}\big)^k\big)$ for the equal-bids and general-bids ad allocation problems.
\end{thm}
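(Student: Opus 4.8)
The plan is to mirror the dual-fitting argument of Theorem \ref{thm:rand-matching}, but now using the dual updates of the equal-bids Algorithm \ref{alg:primal-dual-ad-allocation} and the general-bids Algorithm \ref{alg:primal-dual-ad-allocation-general} in place of those of Algorithm \ref{alg:vw-high-pd}. Fix a maximally-matching algorithm $\mathcal{A}$ that always assigns an arriving ad slot to \emph{some} feasible neighbor (this is exactly \textsc{random} when the feasible neighbor is chosen uniformly at random). While $\mathcal{A}$ runs, maintain the auxiliary dual variables $z_i$, $z_i^c$ (and, for general bids, $z_i^f$) and perform the same deterministic updates to them as in the corresponding primal-dual algorithm, keyed off whatever match $\mathcal{A}$ actually made. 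The point, exactly as in the vertex-weighted case, is that the structural invariants that guarantee dual feasibility — namely $z_i^c = C\cdot\frac{b_i}{B_i}\big((\frac{d}{d-1})^{d^c(i)}-1\big)$ for equal bids, and Lemmas \ref{lem:digital}, \ref{lem:overflow}, \ref{lem:match}, \ref{lem:almost-feasible} for general bids — depend only on the \emph{sequence of rejections} an advertiser experiences, not on which feasible neighbor absorbed each slot. Since every advertiser still satisfies $\sum_j b_{ij}\ge k\cdot B_i$ and the graph is still $(k,d)$-bounded, by the end we get $z_i\ge 1$ (equal bids) or $z_i\ge 1-R_{\max}$ before the final rounding loop (general bids), so $D\ge\sum_i B_i$, just as in the deterministic proofs.

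It then remains to bound $\mathbb{E}[\Delta D]/\mathbb{E}[\Delta P]$ for each arriving ad slot $j$, conditioned on the current state (the history of \textsc{random}'s coin flips, which determines $N_F(j)$ and all the dual variables). When $j$ is matched to a uniformly random $i\in N_F(j)$, I would write
\begin{align*}
\mathbb{E}[\Delta P\mid \text{state}] &= \frac{1}{|N_F(j)|}\sum_{i\in N_F(j)} b_{ij},\\
\mathbb{E}[\Delta D\mid \text{state}] &= \frac{1}{|N_F(j)|}\sum_{i\in N_F(j)}\Big(\text{(matched-}i\text{ cost)} + \sum_{i'\in N_F(j)\setminus\{i\}}\text{(neighbor-}i'\text{ cost)}\Big),
\end{align*}
then swap the order of summation in the double sum exactly as in Theorem \ref{thm:rand-matching}: each $i'$ appears as a "neighbor" term for the $|N_F(j)|-1$ other choices of matched vertex, contributing a factor $\frac{|N_F(j)|-1}{d-1}\le 1$ (using $|N_F(j)|\le|N(j)|\le d$) against each $i'$'s $\frac{1}{d-1}$-scaled increment. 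Combining this with the per-slot bound $\Delta D\le(1+C)\,\Delta P$ that the deterministic analyses already establish term-by-term for the matched vertex plus its neighbors yields $\mathbb{E}[\Delta D\mid\text{state}]\le(1+C)\,\mathbb{E}[\Delta P\mid\text{state}]$. Taking total expectation over states gives $\mathbb{E}[D]\le(1+C)\,\mathbb{E}[P]$, and with $C=1/\big((\frac{d}{d-1})^k-1\big)$ and $\mathbb{E}[D]\ge\sum_i B_i$ (resp.\ $\ge(1-R_{\max})\sum_iB_i$ before rounding) we recover the claimed expected ratios $1-(1-\frac1d)^k$ for equal bids and $(1-R_{\max})\cdot(1-(1-\frac1d)^k)$ for general bids.

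The one genuinely delicate point — and the step I expect to be the main obstacle — is that in the general-bids algorithm the choice of match is not purely greedy on $z_i^f$: Algorithm \ref{alg:primal-dual-ad-allocation-general} first computes $z_i^f$ for \emph{all} feasible neighbors (the subtraction $z_i^c\leftarrow z_i^c-z_i^f$ loop), and only then matches. So when \textsc{random} instead picks $i$ uniformly, I must verify that (a) the bookkeeping $z_i^c\leftarrow z_i^c-z_i^f$ step is still performed for all feasible neighbors exactly as written, so Lemma \ref{lem:digital}'s digit bounds survive, and (b) the per-slot dual-cost decomposition used to derive $\Delta D\le(1+C)\,b_{ij}$ — which in the deterministic proof leans on $i$ being the \emph{maximizer} of $z_i^f B_i+C\,b_{ij}$ — must be re-examined, because a random (non-maximizing) $i$ breaks that inequality pointwise. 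The fix, as in Theorem \ref{thm:rand-matching}, is that after averaging over $i\in N_F(j)$ and re-indexing, the "lost maximality" is exactly compensated: the neighbor-side cost charged to each $i'$ is already $\le \frac{1}{d-1}(z_{i'}^f+C\,b_{i'j})B_{i'}$, and summing the matched-side term $\sum_i(b_{ij}-z_i^fB_i)$ plus $\sum_{i'}\frac{|N_F(j)|-1}{d-1}(z_{i'}^f+C\,b_{i'j})B_{i'}$ telescopes to $\le(1+C)\sum_i b_{ij}$ without ever needing $i$ to be the argmax. I would carry this out carefully for general bids first, note that equal bids is the special case $b_{ij}=b_i$, and remark (as the paper does for outliers in Theorem \ref{thm:outliers_vw}) that the argument is insensitive to a small-budget set of outlying advertisers, giving the claimed robustness as well.
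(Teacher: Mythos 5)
Your proposal follows essentially the same route as the paper's own (sketched) proof: a dual-fitting analysis that runs the dual updates of Algorithms \ref{alg:primal-dual-ad-allocation} and \ref{alg:primal-dual-ad-allocation-general} alongside \textsc{random}, observes that the feasibility invariants (the $z_i^c$ formula, resp.\ Lemmas \ref{lem:digital}--\ref{lem:almost-feasible}) never use the maximizing choice and so survive arbitrary feasible matches, and bounds the expected per-slot primal-dual ratio by averaging over $N_F(j)$ and reindexing exactly as in Theorem \ref{thm:rand-matching}. Your write-up is in fact more explicit than the paper's sketch on the one delicate point (that the per-advertiser pairing $(b_{ij}-z_i^fB_i)+(z_i^fB_i+Cb_{ij})=(1+C)b_{ij}$ makes the argmax unnecessary), and is correct.
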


\begin{proof}[Proof (sketch)]
	The proof resembles that of Theorem \ref{thm:rand-matching}, relying on Algorithms \ref{alg:primal-dual-ad-allocation} and \ref{alg:primal-dual-ad-allocation-general} respectively for the dual-fitting analysis. Dual feasibility is guaranteed by the dual updates. On the other hand, linearity of expectation implies the expected primal-dual ratio matches that of Algorithms \ref{alg:primal-dual-ad-allocation} and \ref{alg:primal-dual-ad-allocation-general} (for the latter, this requires showing Lemmas \ref{lem:digital}--\ref{lem:almost-feasible} all hold in expectation). The claimed bounds follow.
\end{proof}

\section{Future Work and Open Questions}\label{sec:discuss}
This paper attempts to give a theoretical explanation of the empirical success of simple heuristic algorithms for online ad allocation in practice and further proposes better algorithms under assumptions that could explain above-mentioned success. The paper further poses several interesting follow-up questions.

\textbf{Optimality for Adwords.} We proved optimality of our algorithms for online maximum and vertex-weighted matching. However, for the general ad allocation problem our lower and upper bounds differ by a factor of  $\big(1-\big(1-\frac{1}{d}\big)^k\big)/\big(1-\big(1-\frac{1}{d}\big)^{k/R_{\max}}\big)$. For small $R_{\max}$ (i.e., the AdWords problem), this discrepancy is large. Can better algorithms be obtained for this problem, or can the upper bounds be tightened (or both)?

\textbf{Randomization.} We are able to prove that a multitude of randomized algorithms, which we do not discuss for the sake of brevity, match our deterministic algorithms' expected competitive on $(k,d)$-graphs. However, we have no upper bounds on randomized algorithms' performance. We believe that randomization does allow for better results, and have some preliminary results that indicate this is indeed the case.

\textbf{Stochastic Models.} An interesting direction would be to extend our analysis of $(k,d)$-graphs to stochastic models, in which it seems plausible that even better competitiveness guarantees should be achievable.

\section{Acknowledgments}
This work was supported in part by United States-Israel BSF Grant No. 2010-246 and ISF Grant No. 954/11.

\bibliographystyle{acmsmall}
\bibliography{online-matching}


\newpage
\appendix

\section{Bad Instance for Algorithm \textsc{greedy} for General Ad Allocation}\label{sec:greedy-hard}
We now prove our analysis of Algorithm \textsc{greedy} for the ad allocation problem is tight.

\begin{thm}
For all $k\geq d-1$ and $R_{\max}\leq \frac{1}{2}$ a unit fraction, there exist $(k,d)$ ad allocation instances for which algorithm \textsc{greedy} can achieve competitive ratio exactly $\frac{(1-R_{\max})\cdot k}{k+(d-1)\cdot(1-R_{\max})}$.
\end{thm}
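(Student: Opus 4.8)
The plan is to glue together many copies of the hard matching instance from Section~\ref{sec:greedy_tight}, turning each unit-weight matched/unmatched advertiser into a budgeted advertiser that loses exactly an $R_{\max}$-fraction of its budget, so that the two sources of loss --- the ``wrong'' greedy choice in the $K_{k,d-1}$ gadget and the $R_{\max}$ slack forced by bids --- compound multiplicatively in precisely the way the claimed ratio $\frac{(1-R_{\max})\cdot k}{k+(d-1)\cdot(1-R_{\max})}$ demands. Write $1/R_{\max}=m$, an integer by the unit-fraction hypothesis, and give every advertiser budget $B_i=m$.

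First I would take $m$ disjoint copies of the tight maximal-matching example, each on $k+d-1$ advertisers: in copy $\ell$ the first $k$ advertisers $i^\ell_1,\dots,i^\ell_k$ get matched by \textsc{greedy} to early ad slots $j^\ell_1,\dots,j^\ell_k$ while the last $d-1$ advertisers form, together with those ad slots, a copy of $K_{k,d-1}$ left entirely unmatched; later degree-one ad slots bring the first $k$ advertisers' degrees up to $k$ and allow an offline matching that saturates all $k+d-1$ advertisers. Now identify the corresponding advertisers across the $m$ copies (so there are $k+d-1$ budgeted advertisers total, each with budget $m$), and have every ad slot carry bid $1$. Replaying this across all $m$ copies in sequence lets each advertiser be ``matched'' up to $m$ times, i.e.\ spend its full budget. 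To inject the $R_{\max}$ loss, after this phase I would append, for each of the $k$ advertisers that \textsc{greedy} has filled, the star construction of Lemma~\ref{lem:1-R} rescaled to budget $m$: a batch of ad slots bidding $\epsilon$ until the residual budget drops below $1=R_{\max}\cdot m$, then ad slots bidding $R_{\max}\cdot m$ which \textsc{greedy}, already nearly full, must reject --- so \textsc{greedy} extracts only $(1-R_{\max})\cdot m+\epsilon$ worth of revenue from the offline-matchable value $2m$ available at such an advertiser (it could be matched $m$ times at $\epsilon$ and once more, counting the high bids, via the Hall-type argument), whereas the $d-1$ fully-unmatched advertisers contribute $0$ to \textsc{greedy} but $m$ each to OPT. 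Using Lemma~\ref{lem:saturate} (repeatedly, as in the proof of Theorem~\ref{thm:adwords_ub}) I would verify that all $k+d-1$ advertisers can simultaneously exhaust their budgets offline, so $OPT=(k+d-1)\cdot m$.

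The arithmetic then reads: \textsc{greedy}'s gain is $k\cdot\big((1-R_{\max})\cdot m+\epsilon\big)$ from the filled advertisers plus $0$ from the $d-1$ stuck ones; normalizing and letting $\epsilon\to 0$ gives ratio $\frac{k(1-R_{\max})m}{(k+d-1)m}$, which is \emph{not} yet the target --- so the gadget must be balanced more carefully. The fix is to make the $d-1$ ``stuck'' advertisers lose only their matching value (not an extra $R_{\max}$-adjusted factor) while the $k$ ``filled'' advertisers are the ones whose revenue is scaled by $(1-R_{\max})$, and to size the degree-one tail ad slots so the offline optimum for the $k$ filled advertisers is exactly $m$ (via the $R_{\max}$-bids) rather than $2m$; then \textsc{greedy} gets $(1-R_{\max})m+\epsilon$ against an offline contribution of $m$ from each filled advertiser and $m$ from each stuck advertiser, so OPT $=(k+d-1)m$ and greedy $=k(1-R_{\max})m+\epsilon k$, but now I re-weight so that the $K_{k,d-1}$ loss and the budget loss interact: give the $d-1$ common advertisers of the gadgets budget scaled by $\tfrac{1}{1-R_{\max}}$ relative to the others, or equivalently run $(d-1)$ gadget-advertisers at ``effective weight'' $1$ and the $k$ at effective weight $1-R_{\max}$ inside a common normalization, so that the ratio becomes $\frac{k(1-R_{\max})}{k(1-R_{\max})+(d-1)}=\frac{(1-R_{\max})k}{k+(d-1)(1-R_{\max})}$, matching Theorem~\ref{thm:add-allocation-greed}.

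I expect the main obstacle to be exactly this bookkeeping: ensuring the composed instance is genuinely $(k,d)$-bounded (every advertiser ends with $\sum_j b_{ij}\ge k B_i$, every ad slot has degree $\le d$) \emph{after} appending the Lemma~\ref{lem:1-R} stars and the degree-one padding, and simultaneously ensuring (a) \textsc{greedy}, breaking ties adversarially, actually makes the bad choices in the $K_{k,d-1}$ phase and the ``reject the big bid'' choices in the star phase, and (b) the offline optimum really does saturate every advertiser to its full budget, which I would establish by a Hall's-condition check in the spirit of Lemma~\ref{lem:saturate} applied copy-by-copy. Once the instance is shown to be $(k,d)$-bounded with $OPT=\sum_i B_i$ and \textsc{greedy}'s value equal to the weighted count above, the claimed competitive ratio follows immediately, and tightness of the analysis of Theorem~\ref{thm:add-allocation-greed} is established.
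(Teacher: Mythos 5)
Your proposal has a genuine gap at the decisive rebalancing step. The identity you invoke at the end, $\frac{k(1-R_{\max})}{k(1-R_{\max})+(d-1)}=\frac{(1-R_{\max})k}{k+(d-1)(1-R_{\max})}$, is false whenever $k>d-1$ (for $k=2$, $d=2$, $R_{\max}=\frac{1}{2}$ the two sides are $\frac{1}{2}$ and $\frac{2}{5}$), so the re-weighted gadget does not attain the claimed ratio. Worse, the re-weighting goes in the wrong direction: if $W_\ell$ is the total budget of the advertisers that \textsc{greedy} fills to a $(1-R_{\max})$-fraction and $W_u$ the total budget of the advertisers it leaves empty, the achieved ratio is $\frac{(1-R_{\max})W_\ell}{W_\ell+W_u}$, which equals the target exactly when $W_u=\frac{(d-1)(1-R_{\max})}{k}\cdot W_\ell$; i.e., the ``stuck'' advertisers must carry total budget \emph{deflated} by a factor $1-R_{\max}$ relative to the proportional share $\frac{d-1}{k}W_\ell$, whereas you inflate it by $\frac{1}{1-R_{\max}}$. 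The paper calibrates precisely this: writing $1-R_{\max}=\frac{a}{b}$, it uses $k\cdot b$ unit-budget ``lucky'' and only $(d-1)\cdot a$ unit-budget ``unlucky'' advertisers, so the count itself encodes the factor $1-R_{\max}$ and the ratio $\frac{(1-R_{\max})\,kb}{kb+(d-1)a}$ is the target.

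A second, independent problem is how you inject the $R_{\max}$ loss. In your glued instance the $k$ filled advertisers spend their entire budget $m$ during the replayed matching phases (unit bids, matched $m$ times), so the Lemma \ref{lem:1-R} stars appended afterwards cannot cost them anything --- there is no residual budget left to strand --- and your proposed fix (``size the tail so the offline optimum for the filled advertisers is $m$ via the $R_{\max}$-bids'') is not worked out. The loss has to be built into the main phase itself: in the paper's construction every first-phase edge bids $R_{\max}$, each ad slot bids on one lucky and $d-1$ unlucky advertisers and is greedily matched to the lucky one, and the phase length is chosen so each lucky advertiser is matched exactly $\frac{1}{R_{\max}}-1$ times, leaving residual budget exactly $R_{\max}$, which the subsequent Lemma \ref{lem:1-R} gadget then wastes while simultaneously providing the degrees needed for $(k,d)$-boundedness (unlucky advertisers end with degree $k/R_{\max}$). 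Your closing checklist --- adversarial tie-breaking, $(k,d)$-boundedness after padding, and budget-saturating OPT via Lemma \ref{lem:saturate} --- is the right one, but as written the construction does not reach the claimed bound, so the proof does not go through without reworking both the weighting and the loss mechanism along the lines above.
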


\begin{proof}
Let $1-R_{\max}=\frac{a}{b}$ for $0<a<b$ and $a$ and $b$ integers. The hard instance will consist of $k\cdot b + (d-1)\cdot a=b\cdot(k+(d-1)\cdot(1-R_{\max}))$ advertisers. Each advertiser has budget exactly one. We designate $k\cdot b$ advertisers to be the ``lucky'' advertisers, from which we will achieve revenue of $(1-R_{\max})$ and the remaining $(d-1)\cdot a$ ``unlucky'' advertisers will garner no profit. The theorem will follow by constructing the instance such that all budgets can be exhausted simultaneously.

All edges have bids either $R_{\max}$ or some arbitrarily small positive $\epsilon$. At first, each arriving ad slot will have $d$ edges with bids $R_{\max}$, one to some lucky advertiser of lowest degree (to whom the ad slot is matched), and $(d-1)$ edges to some unlucky advertisers of lowest degree. After $\frac{a\cdot k\cdot b}{b-a}$ ad slots arrive (this value is integral, as is $\frac{1}{R_{\max}}=\frac{b}{b-a}$), the following holds
\begin{enumerate}[(i)]
\item each of the unlucky advertisers are unmatched and have degree exactly $k\cdot \frac{b}{b-a}=k\cdot \frac{1}{R_{\max}}$.
\item each of the lucky ad slots are matched to all of their neighbors, and have degree exactly $\frac{a}{b-a}=\frac{1}{R_{\max}}-1$.
\end{enumerate}
The remaining ad slots recreate the construction of Lemma \ref{lem:1-R}, thus guaranteeing each of the lucky advertisers gain no more than $1-R_{\max}+\epsilon$. On the other hand all the lucky advertisers can exhaust their budgets without using any of the $R_{\max}$-bid edges of ad slots neighboring unlucky advertisers, which, as can be readily verified (using, e.g. Lemma \ref{lem:saturate} repeatedly), allows both lucky and unlucky advertisers to exhaust their budgets simultaneously whenever $k\geq d-1$. The described instance is $(k,d)$ and the theorem follows.
\end{proof}

The above bound holds for \emph{any} $R_{\max}\leq \frac{1}{2}$, as the following theorem asserts.

\begin{thm}
	For all $k\geq d-1$ and $R_{\max}\leq \frac{1}{2}$ there exist $(k,d)$ ad allocation instances for which algorithm \textsc{greedy} can achieve competitive ratio exactly $\frac{(1-R_{\max})\cdot k}{k+(d-1)\cdot(1-R_{\max})}$.
\end{thm}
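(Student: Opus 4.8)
The plan is to adapt the construction of the previous theorem, whose only reliance on $R_{\max}$ being a unit fraction was the integrality of a handful of counts that happened to be integers precisely because $1/R_{\max}$ was. Write $1-R_{\max}=a/b$ in lowest terms, put $q=\lfloor 1/R_{\max}\rfloor\geq2$ (as $R_{\max}\leq 1/2$) and $\rho=1-qR_{\max}$, so $0\leq\rho<R_{\max}$ and $(q-1)R_{\max}+\rho=1-R_{\max}$; if $\rho=0$ we are back in the settled unit-fraction case. I would keep the global layout unchanged: a set of $n_\ell$ ``lucky'' advertisers and a set of $n_u$ ``unlucky'' ones, all of budget $1$, with $n_u/n_\ell$ set to $(d-1)(1-R_{\max})/k$ (concretely $n_\ell=kb$, $n_u=(d-1)a$), so that if \textsc{greedy} extracts exactly $1-R_{\max}$ from each lucky advertiser and only $o(1)$ from each unlucky one while the optimum extracts (essentially) the whole $\sum_i B_i$, the ratio equals
\[
\frac{n_\ell(1-R_{\max})}{n_\ell+n_u}=\frac{k(1-R_{\max})}{k+(d-1)(1-R_{\max})}
\]
in the limit --- exactly the arithmetic of the previous proof, with $a/b$ in place of a unit fraction.

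To realize this, every bid takes one of the three values $R_{\max}$, $\rho$, or a tiny $\epsilon>0$; since $R_{\max}$ is attained, the bid-to-budget ratio is indeed $R_{\max}$. In the first phase I serve ad slots of degree exactly $d$, each with one edge to a feasible lucky advertiser of lowest degree (to which \textsc{greedy} matches the slot) and $d-1$ edges to unlucky advertisers of lowest degree, all $d$ bids on a slot being equal --- either all $R_{\max}$ or all $\rho$ --- with ties broken towards the lucky endpoint, so no unlucky advertiser is ever matched by \textsc{greedy}. Each lucky advertiser is fed $q-1$ all-$R_{\max}$ slots followed by one all-$\rho$ slot, making its greedy revenue exactly $(q-1)R_{\max}+\rho=1-R_{\max}$; then, following Lemma~\ref{lem:1-R}, a star carrying many $\epsilon$-bids together with a threshold bid of $R_{\max}$ offered only once the advertiser's residual budget has dropped strictly below $R_{\max}$ is attached, so \textsc{greedy} gains just an extra $\epsilon$ there while the optimum can top the advertiser up to $1$. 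Degree-$1$ $\epsilon$-slots finally pad every advertiser's degree up to $k$, contributing $o(1)$ to \textsc{greedy}. One checks the graph is $(k,d)$-bounded and that, collecting the $R_{\max}$- and $\rho$-bids incident to each unlucky advertiser, the optimum can route $q$ of the former plus one of the latter (total value $qR_{\max}+\rho=1$) to it while exhausting each lucky advertiser through its star, these demands being jointly routable by repeated application of Lemma~\ref{lem:saturate}; hence the optimum is $(1-o(1))(n_\ell+n_u)$.

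The crux --- and where I expect the real work to lie --- is the book-keeping behind that last step. Each lucky advertiser supplies, per unlucky neighbour, only $q-1$ shielded $R_{\max}$-bids, whereas the optimum wants $q$ per unlucky advertiser; since $q-1<1/R_{\max}-1=a/(b-a)$, the supply is comfortably sufficient as soon as $k\geq\lceil qa/((q-1)b)\rceil$, which holds for every $k\geq2$ because $a<b$ and $q\geq2$ force $qa/((q-1)b)<q/(q-1)\leq2$; for $d\geq3$ (hence $k\geq d-1\geq2$) and for $d=1$ (the instance then reducing to Lemma~\ref{lem:1-R}) one is done. The single delicate regime is $k=d-1=1$, $d=2$: there one must trade off how much of a lucky advertiser's shielding budget and of its star is spent saturating it versus the unlucky side, balancing the bid-value mix (using $\rho$- and $\epsilon$-bids to absorb the fractional slack) and possibly the lucky-to-unlucky ratio so that the limiting ratio comes out to exactly $(1-R_{\max})/(2-R_{\max})$; closing this boundary accounting is the main obstacle. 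Letting $\epsilon\to0$ then yields the stated bound.
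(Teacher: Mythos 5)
Your construction hits a genuine counting obstruction exactly in the regime the theorem must cover, and the place you locate the difficulty ($k=d-1=1$) is an artifact of an arithmetic slip. With your per-lucky mix ($q-1$ slots of bid $R_{\max}$ plus one of bid $\rho$), the supply of $R_{\max}$-slots is $n_\ell(q-1)=kb(q-1)$, while saturating each of the $n_u=(d-1)a$ unlucky advertisers via $qR_{\max}+\rho=1$ demands $(d-1)aq$ of them (each slot serves at most one unlucky advertiser in the optimum). The sufficiency condition is therefore $k\geq (d-1)\cdot\frac{aq}{b(q-1)}$, not $k\geq\lceil \frac{aq}{(q-1)b}\rceil$: you dropped the factor $(d-1)$. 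Since $\frac{aq}{b(q-1)}=(1-R_{\max})\cdot\frac{q}{q-1}>1$ whenever $R_{\max}$ is not a unit fraction, the shortfall arises at $k=d-1$ (and somewhat above) for \emph{every} $d$, i.e., precisely the boundary case the theorem asserts, not just $d=2$. Mixing other decompositions $xR_{\max}+y\rho=1$ does not always rescue it: for $R_{\max}=5/13$ one has $q=2$, $\rho=3/13$, and $(x,y)=(2,1)$ is the \emph{only} nonnegative solution, so at $k=d-1$ the demand is $2(d-1)a=16k$ $R_{\max}$-slots against a supply of $13k$; also note that a lucky advertiser can never host more than $q-1$ feasible $R_{\max}$-slots without greedy exceeding $1-R_{\max}$, so the supply cap is intrinsic to your three-value design. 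Padding the unlucky side with $\epsilon$-slots does not restore the exact ratio either: greedy matches every such slot (its unlucky neighbors are always feasible), so greedy pockets essentially the same $\Theta(k)$ of $\epsilon$-mass that the optimum needs to close the deficit, pushing greedy's revenue strictly above $ka+o(1)$.

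The paper takes a different route that sidesteps this subset-sum bookkeeping entirely: write $R_{\max}\in(0,\tfrac12]$ as a convex combination $w_a\cdot\tfrac1a+w_b\cdot\tfrac1b=R_{\max}$ of two unit fractions with $\tfrac1a\leq R_{\max}\leq\tfrac1b$, and glue $2n$ scaled copies of the already-established unit-fraction construction at the advertisers ($n$ copies with per-advertiser budget $w_a/n$ and within-copy bid-to-budget ratio $1/a$, and $n$ copies with budget $w_b/n$ and ratio $1/b$). Greedy then collects $w_a(1-\tfrac1a)+w_b(1-\tfrac1b)=1-R_{\max}$ per lucky advertiser while unlucky advertisers stay empty and all budgets remain simultaneously exhaustible (via Lemma \ref{lem:saturate} within each copy), so no per-advertiser exact decomposition of the budget into the values $R_{\max}$ and $\rho$ is ever needed. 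To salvage your direct approach you would have to supply such an argument for general rational $R_{\max}$ at $k=d-1$ — or incorporate a gluing step of this kind — and at present that piece is missing.
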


\begin{proof}[Proof (sketch)]
In order to generalize the above, we rely on the fact that every number $R_{\max}$ in the range $(0,\frac{1}{2}]$ can be written as a convex combination of two unit fractions $\frac{1}{a}\leq R_{\max}\leq \frac{1}{b}$. That is, $w_a \cdot \frac{1}{a} + w_b\cdot \frac{1}{b} = R_{\max}$ and $w_a+w_b=1$. We glue $2n$ copies of the above construction at the advertisers, $n$ of the copies with budget $w_a/n$ ($w_b/n$) for each advertiser, and highest bid-to-budget ratio in the copy being $1/a$ (resp. $1/b$). In this case the overall budget from all copies is $n\cdot (w_a/n+w_b/n)=w_a+w_b=1$, and for large enough $n$ each bid is at most $w_b/(b\cdot n)<R_{\max}$. On the other hand, all unlucky advertisers are completely unmatched and garner no profit, and all lucky advertisers gain a total of $n\cdot (w_a/n+w_b/n - w_a/(a\cdot n) + w_b/(b\cdot n)) = 1-(w_a/a+w_b/b)=1-R_{\max}$. As in Lemma \ref{lem:1-R} we can guarantee each such lucky advertiser yields at most $\epsilon$ additional revenue.
\end{proof}


\end{document}